\newcommand{\tikzxmark}{%
\tikz[scale=0.15] {
    \draw[line width=0.6,line cap=round] (0,0) to [bend left=5] (1,1);
    \draw[line width=0.6,line cap=round] (0.1,0.85) to [bend right=3] (0.8,0.05);
}}
\newtheorem{Lemma}{Lemma}
\newtheorem{Remark}{Remark}
\newtheorem{proposition}{Proposition}
\DeclareMathOperator{\OO}{\mathcal{O}}
\newcommand{\qa}{{\bf a}}
\newcommand{\qe}{{\bf e}}
\newcommand{\qg}{{ \textbf{g} }}
\newcommand{\qh}{{ \textbf{h} }}
\newcommand{\qn}{{\bf n}}
\newcommand{\qs}{{\bf s}}
\newcommand{\qw}{{\bf w}}
\newcommand{\qx}{{\bf x}}
\newcommand{\qy}{{ \textbf{y} }}
\newcommand{\qz}{{ \textbf{z} }}
\newcommand{\qA}{{\bf A}}
\newcommand{\qB}{{\bf B}}
\newcommand{\qF}{{\bf F}}
\newcommand{\qG}{{ \textbf{G} }}
\newcommand{\qH}{{ \textbf{H} }}
\newcommand{\qI}{{ \textbf{I} }}
\newcommand{\qN}{{\bf N}}
\newcommand{\qR}{{\bf R}}
\newcommand{\qU}{{\bf U}}
\newcommand{\qV}{{\bf V}}
\newcommand{\qY}{{\bf Y}}
\newcommand{\qZ}{{ \textbf{Z} }}
\DeclareMathOperator*{\argmax}{arg\,max}
\newcommand{\ettall}{\emph{et al.}}
\newcommand{\UE}{\mathtt{I}}
\newcommand{\sn}{\mathtt{E}}
\DeclareMathOperator{\ETAI}{\boldsymbol{\eta}^{\mathtt{I}}}
\DeclareMathOperator{\ETAE}{\boldsymbol{\eta}^{\mathtt{E}}}
\DeclareMathOperator{\aaa}{\mathbf{a}}
\DeclareMathOperator{\MM}{\mathcal{M}}
\DeclareMathOperator{\K}{\mathcal{K}}
\DeclareMathOperator{\J}{\mathcal{J}}
\DeclareMathOperator{\M}{\mathcal{M}}
\DeclareMathOperator{\C}{\mathbb{C}}
\DeclareMathOperator{\CN}{\mathcal{CN}}
\newcommand{\PZF}{\mathsf{PZF}}
\newcommand{\PMRT}{\mathsf{PMRT}}
\newcommand{\wimk}{\qw_{\mathrm{I},mk}}
\newcommand{\wemj}{\qw_{\mathrm{E},mj}}
\newcommand{\SEth}{\mathcal{S}_{k}}
\newcommand{\Pir}{\mathcal{P}^{\mathtt{I}}_{k}}
\newcommand{\Per}{\mathcal{P}^{\mathtt{E}}_{j}}
\newcommand{\wimkp}{\qw_{\mathrm{I},mk'}}
\newcommand{\wemjp}{\qw_{\mathrm{E},mj'}}
\newcommand{\Ghms}{\hat{\qG}_m^{\sn}}
\newcommand{\Ghmu}{\hat{\qG}_m^{\UE}}
\newcommand{\Snn}{\sigma_n^2}
\newcommand{\Ex}{\mathbb{E}}
\newcommand{\yej}{y_{\mathtt{E},j}}
\newcommand{\yik}{y_{\mathtt{I},k}}
\newcommand{\gmkiu}{\qg_{mk}^{\UE}}
\newcommand{\gmjue}{\qg_{mj}^{\sn}}
\newcommand{\gmjpue}{\qg_{mj'}^{\sn}}
\newcommand{\hgmjue}{\hat{\qg}_{mj}^{\sn}}
\newcommand{\tildehgmjue}{\tilde{\hat{\qg}}_{mj}^{\sn}}
\newcommand{\hmlue}{\qh_{mj}^{\sn}}
\newcommand{\Hmlue}{\qH_{m}}
\newcommand{\gmjeulos}{\bar{\qg}^{\mathrm{E}}_{mj}}
\newcommand{\gmjpeulos}{\bar{\qg}^{\mathrm{E}}_{mj'}}
\newcommand{\FmRIS}{\qF_{m}}
\newcommand{\FmRISlos}{\bar{\qF}_{m}}
\newcommand{\FmRISnlos}{\tilde{\qF}_{m}}
\newcommand{\zmjlos}{\bar{\qz}_{mj}}
\newcommand{\trace}{\mathrm{tr}}
\newcommand{\gmkpue}{\qg_{mk'}^{\UE}}
\newcommand{\hgmkue}{\hat{\qg}_{mk}^{\UE}}
\newcommand{\ghmkue}{\hat{\qg}_{mk}^{\UE}}
\newcommand{\gtilmkue}{\tilde{\qg}_{mk}^{\UE}}
\newcommand{\gtilmjeu}{\tilde{\qg}_{mj}^{\sn}}
\newcommand{\gtilhatmjeu}{\tilde{\hat{\qg}}_{mj}^{\sn}}
\newcommand{\gtilhatmjpeu}{\tilde{\hat{\qg}}_{mj'}^{\sn}}
\newcommand{\gamuemk}{\gamma_{mk}^{\UE}}
\newcommand{\gameumj}{\gamma_{mj}^{\sn}}
\newcommand{\gameumjp}{\gamma_{mj'}^{\sn}}
\newcommand{\betamkue}{\beta_{mk}^{\UE}}
\newcommand{\betamjeu}{\beta_{mj}^{\sn}}
\newcommand{\betamjpeu}{\beta_{mj'}^{\sn}}
\newcommand{\betamkpue}{\beta_{mk'}^{\UE}}
\newcommand{\etamkIn}{\eta_{mk}^{\mathrm{I} (n)}}
\newcommand{\etamkI}{\eta_{mk}^{\mathtt{I}}}
\newcommand{\etamkpI}{\eta_{mk'}^{\mathtt{I}}}
\newcommand{\etamjE}{\eta_{mj}^{\mathtt{E}}}
\newcommand{\etamjpE}{\eta_{mj'}^{\mathtt{E}}}
\newcommand{\etamI}{\eta_{m}^{\mathtt{I}}}
\newcommand{\etamE}{\eta_{m}^{\mathtt{E}}}
\newcommand{\SEk}{\mathrm{SE}_k}
\newcommand{\SINRk}{\mathrm{SINR}_k}
\newcommand{\HEU}{\textbf{HEU-BD-RIS}}
\newcommand{\DFT}{\textbf{DFT-BD-RIS}}
\newcommand{\xik}{x_{\mathtt{I},k}}
\newcommand{\xikp}{x_{\mathtt{I},k'}}
\newcommand{\xej}{x_{\mathtt{E},j}}
\newcommand{\xejp}{x_{\mathtt{E},j'}}
\DeclareMathOperator{\PSI}{\boldsymbol{\Psi}}
\DeclareMathOperator{\THETA}{\boldsymbol{\Theta }}
\DeclareMathOperator{\LAMBDA}{\boldsymbol{\Lambda}}
\DeclareMathOperator{\VARPHI}{\boldsymbol{\varphi}}
\begin{document}

\title{Cell-Free Massive MIMO SWIPT with Beyond Diagonal Reconfigurable Intelligent Surfaces}
\author{Thien Duc Hua,~\IEEEmembership{Graduate Student Member,~IEEE,}, Mohammadali Mohammadi,~\IEEEmembership{Senior Member,~IEEE,}
\\
Hien Quoc Ngo,~\IEEEmembership{Fellow,~IEEE,}   and  Michail Matthaiou,~\IEEEmembership{Fellow,~IEEE}
\thanks{
This work was supported by the U.K. Engineering and Physical Sciences Research
Council (EPSRC) (grant No. EP/X04047X/1). The work of  H. Q. Ngo was supported by the U.K. Research and Innovation Future Leaders Fellowships under Grant MR/X010635/1, and a research grant from the Department for the Economy Northern Ireland under the US-Ireland R\&D Partnership Programme. The work of T. D. Hua and M. Matthaiou was supported by the European Research Council (ERC) under the European Union’s Horizon 2020 research
and innovation programme (grant agreement No. 101001331).}
\thanks{The authors are with the Centre for Wireless Innovation (CWI), Queen's University Belfast, BT3 9DT Belfast, U.K.
(email:\{dhua01, m.mohammadi, hien.ngo, m.matthaiou\}@qub.ac.uk).}
\thanks{ Parts of this paper appeared at the 2024 IEEE WCNC~\cite{Hua:WCNC:2024}.
}}

\markboth{.}%
{Shell \MakeLowercase{\textit{et al.}}: A Sample Article Using IEEEtran.cls for IEEE Journals}

\maketitle

\begin{abstract}
We investigate the integration of beyond-diagonal reconfigurable intelligent surfaces (BD-RISs) into cell-free massive multiple-input multiple-output (CF-mMIMO) systems to enhance simultaneous wireless information and power transfer (SWIPT). To simultaneously support two groups of users—energy receivers (ERs) and information receivers (IRs)— without sacrificing time-frequency resources, a subset of access points (APs) is dedicated to serving ERs with the aid of a BD-RIS, while the remaining APs focus on supporting IRs. A protective partial zero-forcing precoding technique is implemented at the APs to manage the non-coherent interference between the ERs and IRs. Subsequently, closed-form expressions for the spectral efficiency of the IRs and the average sum of harvested energy (HE) at the ERs are leveraged to formulate a comprehensive optimization problem. This problem jointly optimizes the AP selection, AP power control, and scattering matrix design at the BD-RIS, all based on long-term statistical channel state information. This challenging problem is then effectively transformed into more tractable forms. To solve these sub-problems, efficient algorithms are proposed, including a heuristic search for the scattering matrix design, as well as successive convex approximation and deep reinforcement learning methods for the joint AP mode selection and power control design. Numerical results show that a BD-RIS with a group- or fully-connected architecture achieves significant EH gains over the conventional diagonal RIS, especially delivering up to a $7$-fold increase in the average sum of HE when a heuristic-based scattering matrix design is employed.
\end{abstract}

\begin{IEEEkeywords}
Beyond diagonal reconfigurable intelligent surface (BD-RIS), cell-free massive multiple-input multiple-output (CF-mMIMO), deep reinforcement learning (DRL).
\end{IEEEkeywords}

\vspace{-1em}
\section{Introduction}
The limited lifespan of Internet of Things (IoT) devices/sensors poses a significant challenge to their large-scale deployment. To overcome these energy limitations, it is crucial to explore alternative energy sources that can extend battery life and support the long-term sustainability of these devices. As a result, there is an urgent need for cutting-edge solutions to efficiently recharge the batteries of countless sensors, ensuring reliable operation across extensive IoT networks. In this regard, cellular BSs can be utilized to simultaneously support two groups of receivers: 1) information receivers (IRs), representing conventional mobile users, and 2) energy receivers (ERs), which are the sensors and IoT devices. This configuration is known as simultaneous wireless information and power transfer (SWIPT)~\cite{Alsaba:TuT:2018}. However, implementing this topology faces a significant bottleneck due to the disparity in power supply requirements between the two groups. Specifically, ERs require much higher power signals to compensate for path loss and meet the requirements of their non-linear energy harvesting (EH) circuits. However, increasing the transmit power towards ERs can amplify interference, potentially disrupting the operation of IRs. One preliminary approach is to reduce the distance between the BS and ERs. However, conventional cellular networks, which are primarily designed for serving IRs, do not facilitate this adjustment.

To address these challenges, a practical approach is to utilize cell-free massive multiple-input multiple-output (CF-mMIMO) technology, which represents an advanced evolution of distributed and network MIMO systems~\cite{cite:HienNgo:cf02:2018, mohammadi2024next}. In this architecture, a large number of access points (APs) are deployed throughout a coverage area to simultaneously serve many users. These APs work collaboratively through a fronthaul network, eliminating cell boundaries and enabling them to be positioned closer to the IRs and ERs, thereby facilitating SWIPT applications. Moreover, the deployment of reconfigurable intelligent surfaces (RISs), which offer notable advantages in EH and SWIPT applications, can significantly enhance the overall EH performance~\cite{cite:reviewer3a, cite:reviewer3b, Kaixi:2024:WCL, Kaixi:2021:China, Mohammadi:TC:2024, cite:R1}.
RISs can improve the channel gain between a power source and the potential ERs in harsh propagation environments, while also addressing coverage gaps and blind spots. A recent advancement in this area is the concept of beyond-diagonal RIS (BD-RIS), which consolidates various RIS modes and architectures. According to~\cite{cite:HongyuLi:BDRISoverview01:2023, Li:JSAC:2023}, a BD-RIS differs from conventional diagonal RIS (D-RIS) by using non-diagonal scattering matrices enabled by inter-port tunable admittances, allowing advanced beamforming via interconnected elements. These matrices provide greater flexibility and performance potential than diagonal designs. However, BD-RIS has not yet been explored in SWIPT scenarios. Motivated by this, we investigate its integration into SWIPT CF-mMIMO networks.
\begin{table*}
	\centering
	\caption{\label{tabel:Survey} Contrasting our contributions to the  CF-mMIMO SWIPT literature}
	\vspace{-0.6em}
	\small
\begin{tabular}{|m{2.4cm}|>{\centering\arraybackslash}m{0.8cm}|
>{\centering\arraybackslash}m{0.40cm}|
>{\centering\arraybackslash}m{0.40cm}|
>{\centering\arraybackslash}m{0.55cm}|
>{\centering\arraybackslash}m{0.55cm}|
>{\centering\arraybackslash}m{0.55cm}|
>{\centering\arraybackslash}m{0.55cm}|
>{\centering\arraybackslash}m{0.55cm}|
>{\centering\arraybackslash}m{0.55cm}|
>{\centering\arraybackslash}m{0.55cm}|
>{\centering\arraybackslash}m{0.55cm}|
>{\centering\arraybackslash}m{0.55cm}|
>{\centering\arraybackslash}m{0.55cm}|
>{\centering\arraybackslash}m{0.55cm}|
>{\centering\arraybackslash}m{0.55cm}|}

	\hline
        \centering\textbf{Contributions} 
        &\centering \textbf{This paper}
        &\centering\cite{Kaixi:2024:WCL}
        &\centering\cite{Kaixi:2021:China}
        &\centering\cite{Wang:JIOT:2020}
        &\centering\cite{Demir:TWC:2021}
        &\centering\cite{Femenias:TCOM:2021}
        &\centering\cite{Xinjiang:TWC:2021}
        &\centering\cite{Zhang:IoT:2022}
        &\centering\cite{Galappaththige:WCL:2024}
        &\centering\cite{Zhang:TWC:2023}
        &\centering\cite{Mohammadi:GC:2023}
        &\cite{cite:Chien:TWC:2022}
        &\cite{Lan:IoT:2024}
        &\cite{Elhoushy:WCL:2022}
        \cr

        \hline

        D-RIS     
        &\centering\checkmark 
        &\centering\checkmark 
        &\centering\checkmark 
        &\centering\tikzxmark 
        &\centering\tikzxmark      
        &\centering\tikzxmark 
        &\centering\tikzxmark 
        &\centering\tikzxmark 
        &\centering\tikzxmark 
        &\centering\tikzxmark 
        &\centering\tikzxmark 
        &\centering\checkmark
        &\centering\checkmark
        &\centering\checkmark
        \cr        
        \hline

        BD-RIS     
        &\centering\checkmark 
        &\centering\tikzxmark 
        &\centering\tikzxmark 
        &\centering\tikzxmark   
        &\centering\tikzxmark      
        &\centering\tikzxmark
        &\centering\tikzxmark
        &\centering\tikzxmark
        &\centering\tikzxmark
        &\centering\tikzxmark
        &\centering\tikzxmark
        &\centering\tikzxmark
        &\centering\tikzxmark
        &\centering\tikzxmark
        \cr
        
       \hline
       
        SWIPT         
        &\centering\checkmark
        &\centering\checkmark 
        &\centering\checkmark 
        &\centering\checkmark   
        &\centering\checkmark 
        &\centering\checkmark 
        & \centering\checkmark
        & \centering\checkmark
        & \centering\checkmark 
        & \centering\checkmark
        & \centering\checkmark
        &\centering\tikzxmark
        &\centering\tikzxmark
        &\centering\tikzxmark
        \cr

        \hline

       NL-EH model         
        &\centering\checkmark 
        &\centering\tikzxmark
        &\centering\tikzxmark
        &\centering\tikzxmark   
        &\centering\checkmark 
        &\centering\tikzxmark 
        &\centering\tikzxmark
        & \centering\checkmark
        & \centering\checkmark 
        &\centering\tikzxmark
        &\centering\tikzxmark
        &\centering\tikzxmark
        &\centering\tikzxmark
        &\centering\tikzxmark
        \cr

        \hline
        Statistical CSI         
        &\centering\checkmark
        &\centering\checkmark 
        &\centering\tikzxmark
        &\centering\checkmark   
        &\centering\checkmark 
        &\centering \checkmark  
        &\centering\checkmark 
        &\centering\checkmark 
        &\centering\tikzxmark
        &\centering\tikzxmark
        &\centering\tikzxmark
        &\centering\checkmark
        &\centering\tikzxmark
        &\centering\checkmark
        \cr

        \hline

        Power allocation     
        &\centering\checkmark
        &\centering\tikzxmark
        &\centering\checkmark 
        &\centering\checkmark  
        &\centering\checkmark  
        &\centering\checkmark     
        &\centering\checkmark 
        &\centering\checkmark
        &\centering\tikzxmark
        &\centering\checkmark 
        &\centering\checkmark
        &\centering\tikzxmark
        &\centering\checkmark
        &\centering\checkmark
         \cr
        
        \hline

         AP mode selection    
        &\centering\checkmark  
        &\centering\tikzxmark
        &\centering\tikzxmark
        &\centering\tikzxmark 
        &\centering\tikzxmark   
        &\centering\tikzxmark      
        &\centering\tikzxmark 
        &\centering\tikzxmark
        &\centering\tikzxmark 
        &\centering\tikzxmark 
        & \centering\checkmark
        &\centering\tikzxmark
        &\centering\tikzxmark
        &\centering\tikzxmark
        \cr

        \hline

       ML approach    
        &\centering\checkmark  
        &\centering\tikzxmark
        &\centering\tikzxmark
        &\centering\tikzxmark 
        &\centering\tikzxmark  
        &\centering\tikzxmark     
        &\centering\tikzxmark
        &\centering\tikzxmark
        &\centering\tikzxmark
        &\centering\tikzxmark 
        &\centering\tikzxmark
        &\centering\tikzxmark
        &\centering\tikzxmark
        &\centering\tikzxmark
        \cr               
        \hline
        
\end{tabular}
\label{Contribution}
\end{table*}

\vspace{-1em}
\subsection{Literature Review}
To date, SWIPT-enabled CF-mMIMO systems have been investigated in several studies~\cite{Wang:JIOT:2020,Demir:TWC:2021,Femenias:TCOM:2021,Xinjiang:TWC:2021,Zhang:IoT:2022,Zhang:TWC:2023,Galappaththige:WCL:2024, Mohammadi:GC:2023}.
Wang~\ettall~\cite{Wang:JIOT:2020} demonstrated the superior performance of cell-free IoT compared to the collocated mMIMO and small-cell IoT systems in terms of both SE and HE metrics. Demir~\ettall~\cite{Demir:TWC:2021} combined large-scale fading decoding with power control to enhance max-min fairness in SE and HE metrics. Femenias~\ettall~\cite{Femenias:TCOM:2021} investigated a coupled uplink (UL)/downlink (DL) optimization problem and examined the impact of phase length on maximizing the weighted DL signal-to-interference-and-noise ratio (SINR) and HE. Xia~\ettall~\cite{Xinjiang:TWC:2021} studied UL and DL CF-mMIMO SWIPT with network-assisted full-duplexing. Zhang~\ettall~\cite{Zhang:IoT:2022} proposed a max–min power control policy with the aim of achieving uniform harvested energy and DL SE in all sensors in a CF-mMIMO SWIPT IoT network. 
Galappaththige~\ettall~\cite{Galappaththige:WCL:2024} formulated a multi-objective
optimization problem to jointly maximize the harvested power and the weakest UE rate in SWIPT-enabled CF-mMIMO networks.
Zhang~\ettall~\cite{Zhang:TWC:2023} investigated a sum-rate maximization
problem in SWIPT CF-mMIMO networks with non-orthogonal multiple access.  A common approach in the studies mentioned above is to configure wireless information transfer (WIT) and wireless energy transfer (WPT) over orthogonal intervals within a time-division duplex (TDD) frame. In contrast, Mohammadi~\ettall~\cite{Mohammadi:GC:2023} proposed a joint optimization of AP mode selection and power control to enable WIT and WPT to operate simultaneously within the same interval.
\subsection{Research Gap and Summarized Contributions}
Recognizing the promising potential of RISs, their application in CF-mMIMO has been explored in the literature for various objectives, such as extending coverage, addressing blockage issues~\cite{cite:Chien:TWC:2022,Lan:IoT:2024}, and enhancing secrecy~\cite{Elhoushy:WCL:2022}. Moreover, \cite{cite:b6} is a recent study that analyzed the integration of BD-RIS into a WPT system. However, to the best of our knowledge, the potential of BD-RISs to enhance the SWIPT performance in CF-mMIMO systems under practical impairments remains unexplored. To address this gap, we present a comprehensive framework to investigate the potential gains of deploying BD-RIS in CF-mMIMO networks, and then we numerically compare the SWIPT performance between various RIS architectures. Another shortcoming of the aforementioned literature is the lack of utilization of machine learning (ML)-based algorithms to address complex non-convex optimization problems, which is tackled in this paper.

A comparison of our contributions against the state of the art is presented in Table~\ref{Contribution}.

To optimize the network resource utilization, we propose to jointly optimize the AP mode selection and power control strategy to maximize the average sum-HE of ERs, subject to per-IR SE requirements and per-AP power constraints. Specifically, using long-term channel state information (CSI), the APs are divided into information transmission APs (I-APs) and energy transmission APs (E-APs), which simultaneously serve IRs and ERs over the entire coherence interval. To manage interference between ERs and IRs, we consider the protective partial zero-forcing (PPZF) precoding design, which applies partial zero-forcing (PZF) at I-APs and protective maximum ratio transmission (PMRT) at E-APs. Compared to our recent work~\cite{Hua:WCNC:2024}, where the AP modes were randomly assigned without accounting for power control design, line-of-sight (LoS) channels and pilot contamination (PC) effects, our current analysis caters for these factors rigorously. The main contributions of our paper can be summarized as follows:

\begin{itemize}
    \item We derive closed-form expressions for the achievable SE of IRs and HE of ERs with non-linear EH circuits, considering channel estimation errors, PC, and Ricean fading. Building on these results, we formulate an optimization problem that jointly addresses the AP mode selection, power control, and design of the BD-RIS's scattering matrix. The goal is to maximize the average sum-HE of the ERs while meeting constraints on the per-ER minimum HE and per-IR minimum SE requirement.
    \item To tackle the mixed-integer non-convex optimization problem, we propose a heuristic approach for designing the unitary and symmetric non-diagonal scattering matrix. Then, for AP mode selection and power control, we utilize the successive convex approximation (SCA) with binary relaxation approach. To further reduce the optimization complexity, we introduce a deep reinforcement learning (DRL)-based method, where the optimization problem is framed as a Markov Decision Process (MDP). In this framework, the reinforcement learning (RL) agent—implemented as deep neural networks (DNNs)—dynamically adapt to the reformulated system and make informed decisions.
    \item Our numerical results demonstrate that ERs can benefit significantly from AP mode selection to enhance their EH capability. Furthermore, as the number of APs increases while the total number of antennas service in the network remains fixed (i.e., the number of antennas per-AP reduces), the energy-rate region achieved by the proposed SCA and DRL-based designs expands markedly compared to benchmark designs with random AP mode selection, equal power allocation, and a discrete Fourier transform (DFT)-based scattering matrix. Our numerical evaluations indicate that employing a BD-RIS instead of a D-RIS can significantly reduce the required massive MIMO dimensionality—specifically, by decreasing the number of serving APs by approximately a factor of six—while maintaining equivalent WPT performance.
\end{itemize}

\textit{Notation:} We use bold upper/lower case letters to denote matrices/vectors. The superscripts $(\cdot)^T$ and $(\cdot)^\dag$ stand for the transpose and the conjugate-transpose, respectively; $\text{mod}( \cdot, \cdot)$ is the modulus operation; $ \lfloor \cdot \rfloor$ denotes the truncated argument; $\mathbf{I}_N$ denotes $N\times N$ identity matrix. $\boldsymbol{1}_N$ and $\boldsymbol{0}_N$ denote the $N \times N$ matrices of all ones and all zeros, respectively. A circular symmetric complex Gaussian variable with variance $\sigma^2$ is denoted by $\mathcal{CN}(0,\sigma^2)$. Finally, $\mathbb{E}\{\cdot\}$ denotes the statistical expectation.

\section{System Model}~\label{sec:Sysmodel}
We consider a BD-RIS-assisted CF-mMIMO SWIPT system, where all APs cooperate to simultaneously serve $K$ IRs and $J$ ERs in the same frequency bands. Each AP is connected to the central processing unit (CPU) via a high-capacity fronthaul link. The APs are equipped with $L$ antennas each, while the IRs and ERs are equipped with a single antenna. All APs, IRs and ERs are half-duplex devices. For notational simplicity, we define sets $\K\triangleq \{1,\dots,K\}$ and $\mathcal{J}\triangleq\{1,\ldots,J\}$ to collect the IR and ER indices, respectively. Moreover, the set of all APs is denoted by $\MM \triangleq \{1, \dots, M\}$. To ensure SWIPT under the same frequency spectrum, we consider the selection of operation modes for APs. Therefore, APs are classified into I-APs and E-APs. The I-APs support WIT to the IRs, while the E-APs support WPT towards the ERs. To further improve the HE of the ERs, a BD-RIS is located near the ER zone to facilitate transmission, as shown in Fig.~\ref{fig:system_model}. In the BD-RIS, the reconfigurable scattering elements are interconnected and element-wise correlated. Thus, the scattering matrix $\boldsymbol{\Theta} \in\mathbb{C}^{N\times N}$ is a non-diagonal matrix of the $N$-port reconfigurable impedance
network. In general, the architecture of a scattering matrix can be either non-diagonal and fully connected (FC) or block-diagonal and group-connected (GC)~\cite{cite:Shen:BDRISoverview02:2022}. When the $N$-port reconfigurable impedance network is lossless~\cite{Li:JSAC:2023}
\begin{align}
  \boldsymbol{\Theta}^{\dag}\boldsymbol{\Theta}=\qI_N, \text{ and } \boldsymbol{\Theta} = \boldsymbol{\Theta}^T.  
\end{align} 

\subsection{Channel Model}
In the CF-mMIMO context, a TDD protocol combined with a statistical CSI-based system is preferable due to its capability of significantly reducing the pilot overhead~\cite{Hien:TDD:pilot, cite:HienNgo:cf02:2018}. In addition, traditional approaches often assumed that receivers need to know their instantaneous CSI for effective decoding~\cite{cite:zahra:statisticalCSI}. However, by virtue of \textit{channel hardening},
statistical CSI-based designs can sufficiently capture the wireless environment characteristics, including mobility effects, without requiring frequent iterative updates as in instantaneous CSI-based schemes.\footnote{According to~\cite{cite:HienNgo:cf01:2017}, at a user mobility of $100$ km/h and a carrier frequency of $f_c=2$ GHz, small-scale fading varies rapidly, with a coherence time on the order of $1$ ms, whereas large-scale fading parameters may change approximately $40$ times slower, i.e., every $40$ ms. In our framework, the precoding design is updated based on the estimated instantaneous CSI at each coherence interval, while the overall optimization framework (SCA or ML design) is based on statistical CSI, which remains constant over multiple coherence intervals and is updated less frequently.} Following such numerical insights, we adopt a widely accepted block-fading channel model, where the channel coefficients remain time-invariant and frequency-flat during each coherence interval $\tau_{c}$ and vary between the intervals. The duration of the training phase is denoted as $\tau$, while the duration of the DL WIT and WPT is $(\tau_c-\tau)$. In each coherence block, the instantaneous channel from AP $m$ to IR $k$, $\gmkiu \in \mathbb{C}^{L \times 1}$, is distributed as $\CN(\boldsymbol{0},\betamkue \qI_{L})$, where $\betamkue$ is the large-scale fading coefficient (LSFC). In addition, $\gmjue \in \mathbb{C}^{L \times 1}$ is the aggregated channel between AP $m$ and ER $j$, given by\footnote{Following~\cite{cite:Chien:TWC:2022}, during simulation, the signal blockage is modeled by randomly disabling the direct AP-receiver links with probability $1 - \tilde{p}$, where $\tilde{p}=1$ denotes the likelihood of a direct link remaining unblocked.}
\begin{figure}[t]
	\centering
	\vspace{0em}
	\includegraphics[width=80mm]{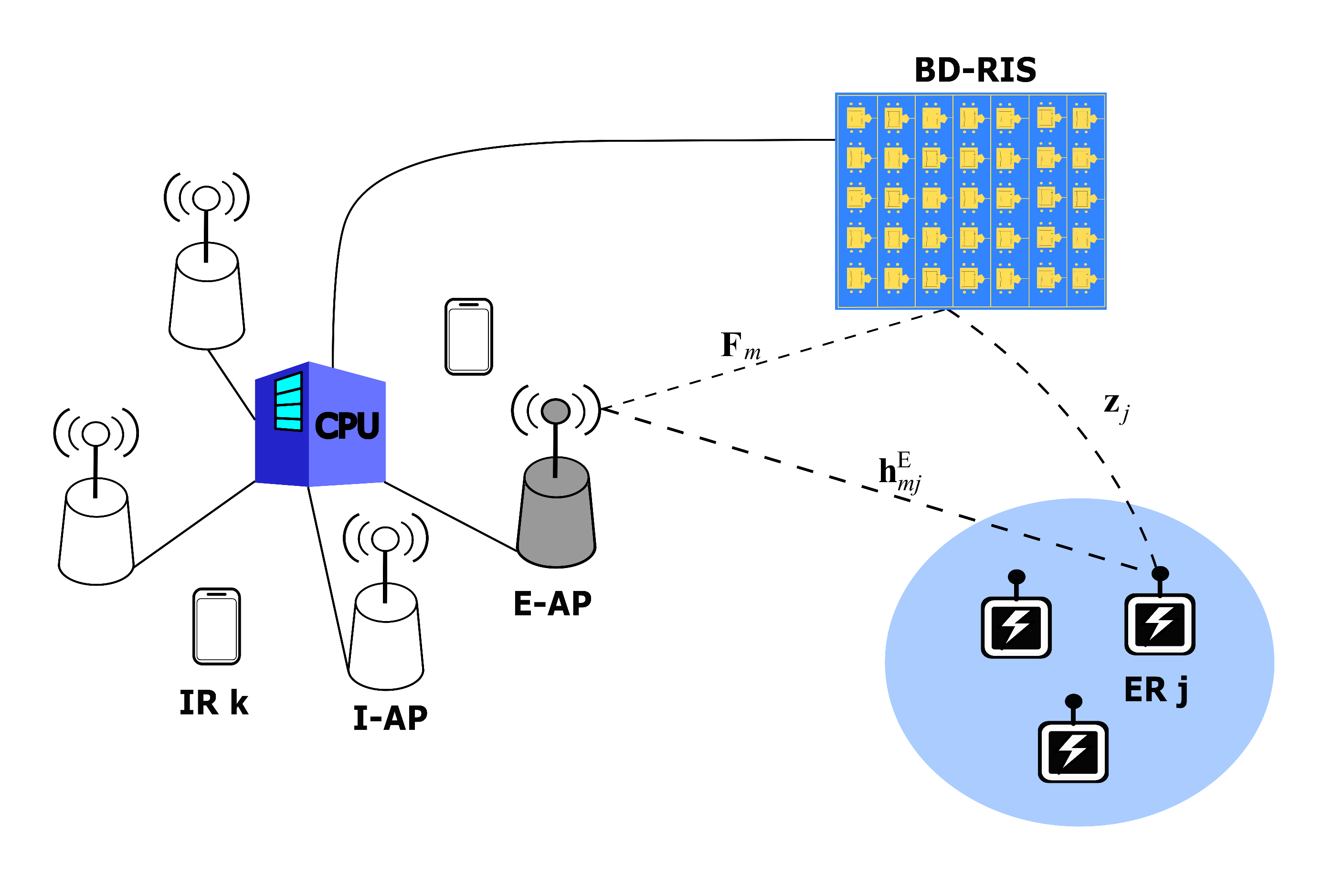}
	\vspace{-1.0em}
	\caption{\small Illustration of BD-RIS-assisted CF-mMIMO SWIPT system.}
	\vspace{0.5em}
	\label{fig:system_model}
\end{figure}
\begin{equation}~\label{eq:gmjeu}
\gmjue = \hmlue + \FmRIS \boldsymbol{\Theta}  \qz_{j},
\end{equation}
where $\hmlue\sim \CN(\boldsymbol{0}, \betamjeu \qI_{L}) \in \C^{L \times 1 }$ denotes the direct channel from  AP $m$ to ER $j$, while $\betamjeu$ is the LSFC; $\FmRIS \in \C^{L \times N}$ is the channel matrix between  AP $m$ and BD-RIS, and $\qz_{j} \in \C^{N \times 1}$ is the channel from the BD-RIS to ER $j$. A Ricean model is considered for the AP-RIS link due to the inclusion of both LoS and non line-of-sight paths (NLoS). Thus, 
\begin{equation}
    \FmRIS = \sqrt{\frac{\zeta_m}{1+\kappa}}\Big( \sqrt{\kappa}\FmRISlos + \FmRISnlos \Big),
\end{equation}
where $\zeta_m$ is the LSFC corresponding to the channel between AP $m$ to the BD-RIS; $\kappa$ is the Ricean factor; $\big[\FmRISnlos\big]_{n,l} \sim \CN(0,1)$ is the element of the NLoS component. By considering a uniform linear array at the APs and uniform squared planar array for BD-RIS,
the LoS component $\FmRISlos$ is modeled as
\begin{equation}~\label{eq:FmRISlos}
\FmRISlos = \qa_{L}(\vartheta^{\mathtt{aoa,azi}}_{m}, \vartheta^{\mathtt{aoa,ele}}_{m})
            \qa_{N}^\dag(\vartheta^{\mathtt{aod,azi}}_{m}, \vartheta^{\mathtt{aod,ele}}_{m}),
\end{equation}
where 
 \begin{align}~\label{eq:FmRISlos_aoa}
    \big[\qa_{L}(\vartheta^{\mathtt{aoa,azi}}_{m}, \vartheta^{\mathtt{aoa,ele}}_{m}) \big]_{l} 
    &\!=\!
    e^{j \frac{2\pi d_{BS}}{\lambda}(l-1) \cos{(\vartheta^{\mathtt{aoa,azi}}_{m})} \cos{(\vartheta^{\mathtt{aoa,ele}}_{m})} },
\end{align}   
where $\lambda$ is the wavelength; $d_{BS}$ is the spacing between two adjacent AP antennas; $\vartheta^{\mathtt{aoa,azi}}_{m}$ and $\vartheta^{\mathtt{aoa,ele}}_{m}$ is the elevation and azimuth angle of arrival (AoA), respectively.

Additionally, the $n$-th element of $\qa_{N}(\vartheta^{\mathtt{aod,azi}}_{m}, \vartheta^{\mathtt{aod,azi}}_{m})$ is modeled as
$\big[ \qa_{N} (\vartheta^{\mathtt{aod,azi}}_{m}, \vartheta^{\mathtt{aod,ele}}_{m}) \big]_{n} = e^{j ( \varrho_{H,n} + \varrho_{V,n} )  }$, where
\begin{subequations}~\label{eq:FmRISlos_aodii}
  \begin{align}~\label{eq:FmRISlos_aod}
&\varrho_{H,n} 
=
\frac{2 \pi d_{H} }{\lambda} [N_{H}]_{n} \sin{(\vartheta^{\mathtt{aod,azi}}_{m})} \cos{(\vartheta^{\mathtt{aod,ele}}_{m})}, \\
&\varrho_{V,n} 
=
\frac{2 \pi d_{V} }{\lambda} [N_{V}]_{n} \sin{(\vartheta^{\mathtt{aod,ele}}_{m})}.
\end{align}  
\end{subequations}
In~\eqref{eq:FmRISlos_aodii}, $N_H$ ($N_V$) denotes the horizontal (vertical) number of reconfigurable elements and $N = N_{V} \times N_{H}$, where $[N_V]_n = \lfloor (n-1)/\sqrt{N} \rfloor$ and $[N_H]_n = \mod({n-1,\sqrt{N}})$; $d_{H}$ ($d_{V}$) is the spacing between two adjacent horizontal (vertical) reconfigurable elements; $\vartheta^{\mathtt{aod,azi}}_{m}$ ($\vartheta^{\mathtt{aod,ele}}_{m}$) is the elevation (azimuth) angle of departure. Since the BD-RIS is  in the proximity of ERs, the LoS paths are dominant over the NLoS paths, and thus, it is reasonable to assume LoS channels for $\qz_{j}$, i.e.,
\begin{equation}
    \qz_{j} = \sqrt{\alpha_j} \qa_{N}(\varepsilon^{\mathtt{aoa,ele}}_{j}, \varepsilon^{\mathtt{aoa,azi}}_{j}),
\end{equation}
where $\alpha_j$ is the LSFC corresponding from the BD-RIS to  ER $j$ channel; $\varepsilon^{\mathtt{aoa,ele}}_{j}$ ($\varepsilon^{\mathtt{aoa,azi}}_{j}$) is the elevation (azimuth) of the angle of departure of the incident signal from the ER $j$ to the BD-RIS, while the $n$-th element of $\qa_{N}(\varepsilon^{\mathtt{aoa,ele}}_{j}, \varepsilon^{\mathtt{aoa,azi}}_{j})$ are similarly formulated as~\eqref{eq:FmRISlos_aod}. Therefore, the aggregated indirect channel from AP $m$ to ER $j$ can be derived as
\begin{equation}~\label{eq:aggregate_indirect_link}
\FmRIS \boldsymbol{\Theta} \qz_{j} = \sqrt{\bar{\zeta}_{mj}\kappa} \FmRISlos \boldsymbol{\Theta} \qz_{j} + \sqrt{\bar{\zeta}_{mj} } \FmRISnlos \boldsymbol{\Theta} \qz_{j},
\end{equation}
where $\bar{\zeta}_{mj} \triangleq \frac{\zeta_{m} \alpha_{j} }{1 + \kappa}$. Since the elements of $\FmRISnlos$ are independent
and identically distributed (i.i.d.)  $\CN(0,1)$, the elements of the second term in~\eqref{eq:aggregate_indirect_link} are linear combinations of independent Gaussian distribution, and hence, $\sqrt{\bar{\zeta}_{mj}} \FmRISnlos \boldsymbol{\Theta} \qz_{j} \sim \CN \big(\boldsymbol{0}, N \bar{\zeta}_{mj} \qI_{L} \big)$. Thus, $\FmRIS \boldsymbol{\Theta} \qz_{j}$ is distributed as $\CN\Big(\sqrt{\bar{\zeta}_{mj} \kappa} \FmRISlos \boldsymbol{\Theta} \qz_{j}, N \bar{\zeta}_{mj} \qI_{L} \Big)$.

For brevity, we denote $\gmjeulos \triangleq \FmRISlos \boldsymbol{\Theta} \qz_{j} \in \C^{L \times 1}$. Then, the channel from AP $m$ to ER $j$ is Gaussian distributed as $\gmjue \sim \CN \Big(\sqrt{\bar{\zeta}_{mj} \kappa} \gmjeulos, \big(\betamjeu + N \bar{\zeta}_{mj} \big) \qI_{L}  \Big)$, which is essential to facilitate subsequent derivations.
\vspace{-0.7em}
\subsection{Uplink Training for Channel Estimation}\label{phase:ULforCE}
During UL training phase, all ERs and IRs transmit their pilot sequences of length $\tau$ symbols to the APs. Practically, since $\tau < J+K$, pilot sequences are reused among the users, we denote $i_{j} (i_{k}) \in \{1, \ldots, \tau \}$ as the index of the pilot sequence $\VARPHI^{\mathtt{E}}_{i_j}$ ($\VARPHI^{\mathtt{I}}_{i_k}$) used by the ER-$j$ (IR-$k$). For a general case, we denote $\Per \subset \mathcal{J}$ ($\Pir \subset \mathcal{K}$) as the set of ERs (IRs), including $j$ ($k$), that are assigned the same pilot sequence with the ER-$j$ (IR-$k$).
In addition, we denote $\tau_{\K}$ ($K \geq \tau_{\K}$) and $\tau_{\J}$ ($J \geq \tau_{\J}$) as the number of orthogonal pilots used by $K$ IRs and $J$ ERs, respectively. Let $\boldsymbol{\Phi}^{\mathrm{I}} = [\VARPHI^{\mathrm{I}}_{1}, \ldots, \VARPHI^{\mathrm{I}}_{\tau_{\K}}] \in \C^{\tau \times \tau_{\K}}$ and $\boldsymbol{\Phi}^{\mathrm{E}} = [\VARPHI^{\mathrm{E}}_{1}, \ldots, \VARPHI^{\mathrm{E}}_{\tau_{\J}}]\in \C^{\tau \times \tau_{\J}}$ be the pilot-book matrices, such that $ (\VARPHI^{\mathrm{I}}_{j})^\dag\VARPHI^{\mathrm{I}}_{i} =0$, $ (\VARPHI^{\mathrm{E}}_{j})^\dag\VARPHI^{\mathrm{E}}_{i} =0$, and $ (\VARPHI^{\mathrm{E}}_{j})^\dag\VARPHI^{\mathrm{I}}_{i} =0$, $\forall i \neq j$. The corresponding full-rank matrices of the estimated channels are formulated as
\begin{align}
    \Ghmu &= \qY_{p,m}^{\mathtt{I}}  \boldsymbol{\Phi}^{\mathrm{I}}, \text{ and }
    \\
    \Ghms &= (\qY_{p,m}^{\mathtt{E}} -\Ex\{\qY_{p,m}^{\mathtt{E}}\}) \boldsymbol{\Phi}^{\mathrm{E}}, 
\end{align}
where $\qY_{p,m}^{\mathtt{E}}$ ($\qY_{p,m}^{\mathtt{I}}$) are the pilot signals received at E-AP (I-AP)-$m$ from ERs and IRs during the training phase. Mathematically speaking
\vspace{-0.1em}
\begin{equation}~\label{eq:receivedpilotsequence_ER}
    \qY_{p,m}^{\mathtt{i}} \!\!=\! \sqrt{\tau\rho_{u} }\Big(\!\sum\nolimits_{j\in\mathcal{J}}\!\gmjue \VARPHI^{\mathtt{E}\dag}_{i_j}\ \!+\!\sum\nolimits_{k\in\mathcal{K}}\!\gmkiu  \VARPHI^{\mathtt{I}\dag}_{i_k} \!\Big) \!+\! \qN_{p,m},
\end{equation}
where $\mathtt{i} \in \{\mathtt{I}, \mathtt{E} \}$, $\rho_{u}$ is the normalized UL signal-to-noise ratio (SNR), while $\qN_{p,m} \in \C^{L \times \tau}$ denotes the noise matrix containing i.i.d. $\CN(0,\Snn)$ RVs. To estimate the channel $\gmjue$ and $\gmkiu$ from the AP $m$ to the ER-$j$ and IR-$k$, $\qY_{p,m}^{\mathtt{E}}$ ($\qY_{p,m}^{\mathtt{I}}$) is first projected onto the $i_j$ ($i_k$)-th pilot sequence, i.e., 
\begin{align}~\label{eq:desrpeading_ER}
\qy_{mj}^{\mathtt{E}} 
&= \sqrt{\tau\rho_{u}} \Big(\gmjue +  \sum\nolimits_{j' \in \Per \setminus j} \! \gmjpue \Big)+ \Tilde{\qn}_{p,mj},
\end{align}
and
\begin{align}~\label{eq:desrpeading_IR}
\qy_{mk}^{\mathtt{I}} 
&= \sqrt{\tau\rho_{u}} \Big(\gmkiu +  \sum\nolimits_{k' \in \Pir \setminus k} \! \gmkpue \Big) + \Tilde{\qn}_{p,mk},
\end{align}
where $\Tilde{\qn}_{p,mj} \triangleq \qN_{p,m} \VARPHI^{\mathtt{E}}_{i_j} \sim \CN(\boldsymbol{0},\Snn \qI_{L})$ and $\Tilde{\qn}_{p,mk} \triangleq \qN_{p,m} \VARPHI^{\mathtt{I}}_{i_k} \sim \CN(\boldsymbol{0},\Snn \qI_{L})$. Invoking the analytical process as in \cite[Eq. (15.64)]{cite:book:Kay:1993}, the minimum mean-squared error (MMSE) estimate of $\gmjue$ can be obtained as
\begin{align}
    \hgmjue &\!= \!\sqrt{\bar{\zeta}_{mj} \kappa} \gmjeulos + c_{mj}^{\mathrm{E}} 
    \Big(\! \qy_{mj}^{\mathrm{E}} \!-\!  \sqrt{\tau \rho_{u}}
        \sum\nolimits_{j' \in \Per } 
        \!\! { \sqrt{ \bar{\zeta}_{mj'} \kappa} \gmjpeulos }
        \!\Big),
\end{align}
where
\vspace{-0.8em}
\begin{equation}
    c_{mj}^{\mathrm{E}} \!\!\triangleq\!\! \frac{
    \sqrt{\tau \rho_{u}} \big( \betamjeu + N \bar{\zeta}_{mj} )
    }{
    \tau \rho_{u} \sum_{j' \in \Per}{ \big( \betamjpeu + N \bar{\zeta}_{mj'} )   } 
    \!+\! \Snn
    }.
\end{equation}
Thus, $\hgmjue \sim \CN \big(\sqrt{\bar{\zeta}_{mj} \kappa} \gmjeulos, \gameumj \qI_{L}  \big)$, where
\begin{equation}
    \gameumj \triangleq \Ex \Big\{ \big\Vert [\hgmjue]_{l} \big\Vert^{2} \Big\} = \sqrt{\tau \rho_{u}}(\betamjeu + N\bar{\zeta}_{mj})c_{mj}^{\mathrm{E}}.
\end{equation}
Moreover, $\gtilmjeu \triangleq \gmjue - \hgmjue$ is the estimation error of $\hgmjue$, which is distributed as $\CN \big(\boldsymbol{0}, ( \betamjeu + N\bar{\zeta}_{mj} - \gameumj ) \qI_{L} \big)$. 

By applying a similar process, the MMSE estimate of $\gmkiu$ is obtained as
\begin{align}
    \hgmkue = c_{mk}^{\mathrm{I}} \Big(\qy_{mk}^{\mathrm{I}} -  \sqrt{\tau \rho_{u}}
        \sum\nolimits_{j' \in \Per } 
        \!\! { \sqrt{ \bar{\zeta}_{mj'} \kappa} \gmjpeulos }\Big),
\end{align}
where
\vspace{-0.7em}
\begin{equation}
    c_{mk}^{\mathrm{I}} \!\!\triangleq\!\! \frac{
    \sqrt{\tau \rho_{u}} \betamkue  
    }{
    \tau \rho_{u}  \sum_{k' \in \Pir}{ \betamkpue  }
    +  \Snn 
    }.
\end{equation}
Therefore, the estimated channel $\hgmkue$ is distributed as $\CN(\boldsymbol{0},\gamuemk \qI_{L})$ where $\gamuemk$ can be computed as
\begin{equation}~\label{eq:cov_hgmkue}
\gamuemk \triangleq \Ex \big\{ \big\Vert [\hgmkue]_{l} \big\Vert^{2} \big\} = \sqrt{\tau \rho_{u}}\betamkue c_{mk}^{\mathrm{I}}.
\end{equation}
Hence, the estimation error of $\gmkiu$ is denoted as $\gtilmkue$, which is distributed as $\gtilmkue \sim \CN \big(\boldsymbol{0}, (\betamkue - \gamuemk) \qI_L \big)$.

Given $c_{mj}^{\mathrm{E}}$ and $c_{mk}^{\mathrm{I}}$, the corresponding full-rank estimated channels between AP $m$ and ER $j$, as well as between AP $m$ and IR $k$, are expressed as 
\begin{align}
    \hgmjue = \sqrt{\bar{\zeta}_{mj} \kappa} \gmjeulos \nonumber + c_{mj}^{\mathrm{E}}\Ghms \qe^{\mathrm{E}}_{\tau_{\J}} \text{, and, }
    \hgmkue = c_{mk}^{\mathrm{I}} \Ghmu \qe^{\mathrm{I}}_{\tau_{\K}},
\end{align}
where $\qe^{\mathrm{E}}_{\tau_{\J}}$ and $\qe^{\mathrm{I}}_{\tau_{\K}}$ denote the $j$-th and $k$-th column of $\qI_{\tau_{\J}}$ and $\qI_{\tau_{\K}}$, respectively.

\begin{Remark}~\label{remark:PC}
When ER $j$ and $j'$ ($j' \neq j$) share the same pilot sequences, the respective NLoS components $\gtilhatmjpeu$ and $\gtilhatmjeu$ of the channel estimates are linearly dependent as 
\begin{equation}
    \gtilhatmjpeu  \!=\! \upsilon_{m,j,j'} \gtilhatmjeu,
\end{equation}
where $\upsilon_{m,j,j'} \triangleq (\betamjpeu \!+\! N \bar{\zeta}_{mj'}) / (\betamjeu \!+\! N \bar{\zeta}_{mj} )$. In addition, $\gameumjp  \!=\! \upsilon_{m,j,j'}^{2} \gameumj$.
\end{Remark}

\subsection{Downlink Wireless Information and Power Transmission}
We define the binary variable $a_m$ as the operation indicator, where $a_m = 1$ if the $m$-th AP operates as an I-AP, and $a_m = 0$ otherwise. Moreover, $\xik$ ($\xej$) denotes the data (energy) symbol transmitted to the IR $k$ and satisfy $\Ex\{ \vert \xik \vert^2 \} = \Ex\{ \vert \xej \vert^2 \} =1$. Hence, the transmitted signal from the $m$-th AP can be expressed as
\vspace{-0.1em}
\begin{align}~\label{eq:x_m}
\qx_{m}
&= \sum\nolimits_{k\in\K}\sqrt{a_m\rho_{d}\etamkI} \wimk \xik \nonumber \\
&+  \sum\nolimits_{j\in\J} \sqrt{(1-a_m)\rho_{d}\etamjE}\wemj \xej,
\end{align}
where $\rho_{d} = \tilde{\rho}_{d}/\Snn$ is the normalized DL SNR; $\wimk \in \C^{L\times 1}$ ($\wemj\in \C^{L\times 1}$) represents the precoding vector for IR $k$ (ER $j$), subject to the constraint $\Ex\big\{\big\Vert\wimk\big\Vert^2\big\}=\Ex\big\{\big\Vert\wemj\big\Vert^2\big\}=1$. Furthermore, $\etamkI$ ($\etamjE$) denote the power control coefficients associated with AP $m$ and IR $k$ (ER $j$) chosen to satisfy the power constraint at each AP, i.e., $\Ex\{\Vert \qx_m\Vert\}\leq \rho_{d}$, which is equivalent to
\begin{align}\label{eq:APpower}
& a_m\!\sum\nolimits_{k\in\K} \!\etamkI\! +\! (1\!-\!a_m) \sum\nolimits_{j\in\J}\!\etamjE\!\leq\! 1.
\end{align}
With the transmitted signal $\qx_m$, the $k$-th IR and the $j$-th ER then, respectively, receive
\vspace{-0.1em}
\begin{subequations}
 \begin{align}
    \yik 
    &=  
    \sum\nolimits_{k'\in \K}\sum\nolimits_{m\in\M} \sqrt{a_m \rho_d\etamkpI} (\gmkiu)^\dag \wimkp \xikp \nonumber \\
    &\hspace{-1em}+ \! \sum\nolimits_{j\in\J}\sum\nolimits_{m\in\M}\!\! \sqrt{(\!1\!-\!a_m\!) \rho_d\etamjE} (\gmkiu)^\dag\! \wemj \xej \! +\! n_{d,k},~\label{eq:yik} \\
    \yej
    &=\!  \sum\nolimits_{j'\in\J}\!\sum\nolimits_{m\in\M} \!\!\sqrt{(1\!-\!a_m) \rho_d\etamjpE} (\gmjue)^\dag \wemjp \xejp   \nonumber \\
    &\hspace{-1.7em}+\!\!  \sum\nolimits_{k\in\K}\!\sum\nolimits_{m\in\M}\!\! \sqrt{\!a_m \rho_d\etamkI} (\gmjue)^\dag \wimk \xik \!\! +\! n_{d,j}~\label{eq:yel}
\end{align}
\end{subequations}
where $n_{d,k},n_{d,j} \!\sim \! \CN(0,1)$ are the additive noise terms.
\vspace{-0.4em}
\subsection{Protective Partial Zero Forcing}
We apply PPZF precoding, in which local PZF precoding is considered at the I-APs and PMRT is applied at the E-APs. The principle behind this design is to guarantee full protection for the IRs against energy signals intended for ERs. In addition, power transfer with MRT proves to be optimal for WPT systems, especially with a large number of antennas~\cite{cite:MRT_for_HE:Almradi}. Hence, we note that PPZF is efficient in suppressing non-coherent interference, making it nearly optimal for SWIPT particularly in the mMIMO context~\cite{cite:pzf_pmrt_2020, Mohammadi:GC:2023}. The PZF and PMRT precoders at the AP $m$ for IR $k$ and ER $j$ can be designed as
\vspace{-0.4em}
\begin{subequations}
 \begin{align}
    \wimk^{\PZF} &= \sqrt{\alpha_{mk}^{\PZF}}{ \Ghmu \Big(\big(\Ghmu\big)^H \Ghmu\Big)^{-1} \qe_k^I},~\label{eq:wipzf}
    \\
    \wemj^{\PMRT} &= \sqrt{\alpha_{mj}^{\PMRT}}{\qB_m\Ghms\qe_{j}^{\mathrm{E}}},~\label{eq:wemrt}
\end{align}   
\end{subequations}
with $\qe_{k}^{\mathrm{I}}$ and $\qe_{j}^{\mathrm{E}}$ being the $k$-th column of $\qI_{K}$ and the $j$-th column of $\qI_{J}$, respectively; $\qB_m$ denotes the projection matrix onto the orthogonal complement of $\Ghmu$ so that $\big(\ghmkue\big)^H \qB_m =\boldsymbol{0}$. Thus, $\qB_m$ can be computed as
\begin{align}
  \qB_m  = \qI_{L}  - \Ghmu \Big( \big(\Ghmu\big)^H \Ghmu\Big)^{-1}  \big(\Ghmu\big)^H.
\end{align}
Moreover, according to~\cite{Mohammadi:TC:2024}, we have 
\begin{subequations}
\begin{align}
    \alpha_{mk}^{\PZF} &= {(L-K)\gamuemk}~\label{eq:alpha_wipzf},
    \\
    \alpha_{mj}^{\PMRT}&= \Big( \frac{L-K}{L}( L\gameumj + \bar{\zeta}_m\kappa (\gmjeulos)^\dag \gmjeulos ) \Big)^{-1}.~\label{eq:alpha_wemrt}
\end{align}
\end{subequations}

\section{Performance Analysis and Resource Allocation Formulation}
\subsection{Downlink Achievable SE and Average HE}
Upon receiving \eqref{eq:yik}, IR $k$ detects its desired symbol $\xik$. In the absence of a DL training phase, the IRs avail of channel statistics, specifically the average effective channel gain, to detect their desired symbols. It is worth noting that the long-term dynamic pattern, probability of the LoS components and interference are inherently captured through the channel statistics. Therefore, for the DL SE analysis at the IRs, we apply the hardening bound~\cite{cite:HienNgo:cf01:2017}. Consequently, the received signal at the $j$-th IR can be expressed as
\vspace{-0.1em}
\begin{align}~\label{eq:yi:hardening}
    \yik &=  \mathrm{DS}_k  \xik +
    \mathrm{BU}_k \xik 
         +\sum\nolimits_{k'\in\K \setminus k}
     \mathrm{IUI}_{kk'}
     \xikp\nonumber\\
    &\hspace{1em}
    + \sum\nolimits_{j\in\J}
     \mathrm{EUI}_{kj}\xej + n_k,~\forall k\in\K,
\end{align}
where $\mathrm{DS}_k$, $\mathrm{BU}_k$, $\mathrm{IUI}_{kk'}$, and $\mathrm{EUI}_{kj}$ represent the desired signal, the beamforming gain uncertainty, the interference cause by the $k'$-th IR, and the interference caused by the $j$-th ER, respectively, given by
\begin{subequations}
  \begin{align}~\label{eq:yi:components}
    \mathrm{DS}_k &\triangleq \sum\nolimits_{m\in\MM} \sqrt{a_m \rho_d \etamkI} \Ex \big\{(\gmkiu)^\dag \wimk^{\PZF} \big\},  
    \\
    \mathrm{BU}_k &\triangleq \sum\nolimits_{m\in\MM} \sqrt{a_m \rho_d \etamkI} \Bigl( (\gmkiu)^\dag \wimk^{\PZF} \nonumber \\
    &\hspace{1em}- \Ex \big\{(\gmkiu)^\dag \wimk^{\PZF} \big\} \Bigl)~\label{eq:component_BUk}, 
    \\
    \mathrm{IUI}_{kk'} &\triangleq \sum\nolimits_{m\in\MM} \sqrt{a_m \rho_d\etamkpI} (\gmkiu)^\dag \wimkp^{\PZF},~\label{eq:IU_interference}
    \\
    \mathrm{EUI}_{kj} &\triangleq \sum\nolimits_{m\in\MM} \sqrt{(1-a_m) \rho_d\etamjE} (\gmkiu)^\dag \wemj^{\PMRT}.  
    \end{align}  
\end{subequations}

The DL SE in [bit/s/Hz] for IR $k$ can be obtained as
\vspace{0.2em}
\begin{align}~\label{eq:SEk:Ex}
    \mathrm{SE}_k
      &=
      \Big(1\!- \!\frac{\tau}{\tau_c}\Big)
      \log_2
      \left(
       1\! + \SINRk\big(\boldsymbol{\Theta}, \qa,  \ETAI, \ETAE\big)
     \right),
\end{align}
where $\qa$ is the indicator vector, whose entries are $a_m$; $\ETAI = [\eta_{m1}^{\mathtt{I}}, \ldots, \eta_{mK}^{\mathtt{I}}]$; $ \ETAE = [\eta_{m1}^{\mathtt{E}}, \ldots, \eta_{mJ}^{\mathtt{E}}]$; and $\SINRk$ is the effective SINR at IR $k$,  given by: $\SINRk\big(\boldsymbol{\Theta}, \qa,  \ETAI, \ETAE\big)=$ 
\vspace{-0.6em}
\begin{align}~\label{eq:SINE:general}
       &\!\frac{
                 \big\vert  \mathrm{DS}_k  \big\vert^2
                 }
                 {  
                 \Ex\big\{\! \big\vert  \mathrm{BU}_k  \big\vert^2\!\big\} \!+
                  \!\!
                 \sum_{k'\in\K\setminus k}\!
                  \Ex\big\{\! \big\vert \mathrm{IUI}_{kk'} \big\vert^2\!\big\}
                  \! + \!
                   \!
                  \sum_{j\in\J}\!
                 \Ex \big\{ \!\big\vert  \mathrm{EUI}_{kj} \big\vert^2\!\big\}
                   \!+\!  1}. 
\end{align}

To characterize the HE at ER $j$, a non-linear EH model with the sigmoidal function is used~\cite{Boshkovska:CLET:2015}. Therefore, the total HE at ER $j\in\J$ is given by
 \vspace{-0.3em}
  \begin{align}~\label{eq:NLEH}
  \Phi^{\mathrm{NL}}_{j}\big(\boldsymbol{\Theta}, \qa,  \ETAE,\ETAI\big) = \frac{\Lambda\big(\mathrm{E}_{j}(\boldsymbol{\Theta}, \qa,  \ETAE, \ETAI)\big) - \phi \Omega }{1-\Omega},
 \end{align}
 where $\phi$ is the maximum output DC power, $\Omega=\frac{1}{1 + \exp(\xi \chi)}$ is a constant to guarantee a zero input/output response, while $\xi$ and $ \chi$ are constant related parameters that depend on the circuit. Moreover, $\Lambda\big(\mathrm{E}_{j}(\boldsymbol{\Theta}, \qa,  \ETAE, \ETAI)\big)$  is the traditional logistic function, given by 
 \vspace{-0.2em}
  \begin{align}~\label{eq:PsiEl}
     \Lambda\big(\mathrm{E}_{j}(\boldsymbol{\Theta}, \qa,  \ETAE, \ETAI)\big) &\!=\!\!\frac{\phi}{1 \!+ \!\exp\big(\!-\xi\big(\mathrm{E}_{j}(\boldsymbol{\Theta}, \qa, \ETAE,\ETAI)\!-\! \chi\big)\!\big)}.
 \end{align}
Note that $\mathrm{E}_{j}(\qa, \ETAE,\ETAI)$ denotes the received RF energy at ER $j$, $\forall j\in\J$. We consider the average of the harvested energy as the performance metric of the WPT operation, given by
  \begin{align}~\label{eq:NLEH:av}
  \Ex\big\{\Phi^{\mathrm{NL}}_{j}\big(\boldsymbol{\Theta}, \qa,  \ETAE,\ETAI\big)\big\} &= \frac{\Ex\big\{\Lambda\big(\mathrm{E}_{j}(\boldsymbol{\Theta}, \qa,  \ETAE, \ETAI)\big)\big\} \!-\! \phi \Omega }{1-\Omega}\nonumber\\
  &\hspace{-1em} \approx
  \frac{\Lambda\big(\mathcal{Q}_{j}(\boldsymbol{\Theta}, \qa,  \ETAE, \ETAI)\big) - \phi \Omega }{1-\Omega},
 \end{align}
where we have used the following approximation~\cite{Mohammadi:TC:2024}
\vspace{-0.1em}
\begin{align}\label{eq:Jensen}
 \Ex\left\{\Lambda\left(\mathrm{E}_{j}\big(\boldsymbol{\Theta}, \qa,  \ETAI, \ETAE\big)\right)\right\}&
 \approx \Lambda\left(Q_{j}(\boldsymbol{\Theta}, \qa,  \ETAI, \ETAE\big)\big\}\right),
 \end{align}
wherein $Q_{j}(\qa,  \ETAI, \ETAE\big)\triangleq\Ex\left\{\mathrm{E}_{j}\big(\qa, \ETAI, \ETAE\big)\right\}$ is the average of the received RF energy at the ER $j\in\J$, given by~\eqref{eq:El_average} at the top of the next page.
\begin{figure*}
\begin{align}~\label{eq:El_average}
     Q_{j}(\boldsymbol{\Theta}, \qa,  \ETAI, \ETAE\big)
    & =(\tau_c-\tau)\Snn\Big( {\rho}_d\!\sum\nolimits_{m\in\MM}\!\!
   {(1\!-\!a_m)\etamjE} \Ex\Big\{\!\big\vert\big(\gmjue\big)^\dag\wemj^{\PMRT}\big\vert^2\!\Big\} 
     +{\rho}_d\!\!\sum\nolimits_{j' \in\J\setminus j}\!\sum\nolimits_{m\in\MM}\!\!
   {(1\!-\!a_m)\etamjpE} 
   \nonumber\\
   &\times\Ex\Big\{\!\big\vert\big(\gmjue\big)^\dag\wemjp^{\PMRT}\big\vert^2\!\Big\} 
   +\!{\rho}_d\!
   \sum\nolimits_{k\in\K}\!\sum\nolimits_{m\in\MM}\!\!\!
   {a_m\etamkI}\Ex\Big\{\!\big\vert\!\big(\gmjue\big)^{\!\dag}\wimk^{\PZF}\big\vert^2\!\Big\} \!+\! 1 \!\Big).
\end{align}   
\hrulefill
	\vspace{-1.5em}
\end{figure*}

We now provide closed-form expressions for the SE and average HE under the PPZF precoding scheme. 

\begin{proposition}~\label{Theorem:SE:PPZF}
With PPZF precoding, the achievable SE of the $k$-th IR is given by~\eqref{eq:SEk:Ex}, where the effective SINR is 
\begin{align}~\label{eq:SINE:PPZF}
    &\SINRk^{\PZF}\big(\boldsymbol{\Theta}, \qa,  \ETAI, \ETAE\big) = \\
    &\!\frac{
            (L-K)\Big(\sum\nolimits_{m\in\M}\sqrt{ a_m\etamkI \gamuemk}  \Big)^2
            }
            { 
            \mathsf{PC}_k 
            \!+\!
            \sum_{k'\in \K} \sum\nolimits_{m \in \M}  a_m  \etamkpI \big(\betamkue - \gamuemk \big)
            + \mathsf{EUI}_k
            +  1/\rho_{d}
            },\nonumber
\end{align}
where $\mathsf{PC}_k=\sum_{k'\in \Pir \setminus k}  \Big( \sum\nolimits_{m \in \M} \sqrt{ a_m \etamkpI (L-K) \gamuemk} \Big)^{2}$ and $ \mathsf{EUI}_k=         \sum_{j\in\mathcal{J}} \sum_{m\in\MM} (1 - a_m)\etamjE(\betamkue - \gamuemk)$ denote the interference caused by PC and inter-IR interference, respectively.
\end{proposition}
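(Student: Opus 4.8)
The plan is to evaluate in closed form each expectation entering the effective SINR~\eqref{eq:SINE:general}, using the decomposition $\gmkiu=\hgmkue+\gtilmkue$ into the MMSE estimate and its independent error $\gtilmkue\sim\CN(\boldsymbol{0},(\betamkue-\gamuemk)\qI_L)$, and then collecting the four contributions. The two structural facts I would lean on throughout are: (i) the zero-forcing identity, namely that $\hgmkue$ lies in the column space of $\Ghmu$, so that $(\hgmkue)^\dag\wimkp^{\PZF}$ is \emph{deterministic} and vanishes whenever IRs $k$ and $k'$ use different pilots; and (ii) the protection property $(\hgmkue)^\dag\qB_m=\boldsymbol{0}$, which makes every PMRT precoder orthogonal to each IR channel estimate. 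Since the noise in~\eqref{eq:yik} is $\CN(0,1)$ while every signal term carries a factor $\sqrt{\rho_d}$, I would factor $\rho_d$ out of numerator and denominator at the very end, so the unit noise becomes the $1/\rho_d$ term.

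First, for the desired signal I would use $\Ex\{(\gtilmkue)^\dag\wimk^{\PZF}\}=0$ (the error is independent of the estimate, hence of the precoder) together with (i), which makes $(\hgmkue)^\dag\wimk^{\PZF}$ deterministic; combining this with the normalization $\alpha_{mk}^{\PZF}$ in~\eqref{eq:alpha_wipzf} and the estimate variance $\gamuemk$ gives $\Ex\{(\gmkiu)^\dag\wimk^{\PZF}\}=\sqrt{(L-K)\gamuemk}$, whence $|\mathrm{DS}_k|^2=\rho_d(L-K)\big(\sum_{m\in\M}\sqrt{a_m\etamkI\gamuemk}\big)^2$, the numerator. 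Because the estimate part of $(\gmkiu)^\dag\wimk^{\PZF}$ is deterministic and cancels in~\eqref{eq:component_BUk}, $\mathrm{BU}_k$ reduces to the error contribution $\sum_m\sqrt{a_m\rho_d\etamkI}\,(\gtilmkue)^\dag\wimk^{\PZF}$; using $\Ex\{\|\wimk^{\PZF}\|^2\}=1$, the independence of $\gtilmkue$ from $\wimk^{\PZF}$, and the fact that errors at distinct APs are independent and zero-mean (killing the cross terms), I obtain $\Ex\{|\mathrm{BU}_k|^2\}=\rho_d\sum_m a_m\etamkI(\betamkue-\gamuemk)$.

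Next, for the inter-user interference I would split the sum over $k'$ by pilot assignment. When $k'\notin\Pir$, the identity (i) kills the estimate part and only the incoherent error term $\rho_d\sum_m a_m\etamkpI(\betamkue-\gamuemk)$ survives. When $k'\in\Pir\setminus k$, the two estimates are parallel (the linear-dependence mechanism underlying Remark~\ref{remark:PC}, applied here to the IR estimates), so $(\hgmkue)^\dag\wimkp^{\PZF}$ is deterministic and does \emph{not} vanish; evaluating it as $\sqrt{(L-K)\gamuemk}$ and summing it \emph{coherently} over the APs produces the squared-sum term $\mathsf{PC}_k$, while the error part again adds incoherently. Merging the $\mathrm{BU}_k$ variance with all the incoherent IUI error terms yields the single sum $\sum_{k'\in\K}\sum_m a_m\etamkpI(\betamkue-\gamuemk)$. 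Finally, for $\mathrm{EUI}_{kj}$ I invoke (ii): $(\hgmkue)^\dag\wemj^{\PMRT}=0$, so only $(\gtilmkue)^\dag\wemj^{\PMRT}$ remains, and since $\gtilmkue$ is independent of the E-AP precoders with $\Ex\{\|\wemj^{\PMRT}\|^2\}=1$, this gives $\mathsf{EUI}_k=\sum_{j\in\J}\sum_m(1-a_m)\etamjE(\betamkue-\gamuemk)$ after the $\rho_d$ cancellation. Assembling the four pieces and dividing through by $\rho_d$ yields~\eqref{eq:SINE:PPZF}.

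The step I expect to be the main obstacle is the pilot-contamination term: one must argue carefully that for same-pilot IRs the estimated channels are \emph{exactly} parallel (not merely correlated), so the corresponding interference combines \emph{coherently} across APs—producing the square of a sum of per-AP amplitudes $\sqrt{(L-K)\gamuemk}$ rather than a sum of squares—while keeping rigorous track of which contributions are coherent (the desired signal and $\mathsf{PC}_k$) and which are incoherent (all error-driven terms). Showing that the deterministic inner product $(\hgmkue)^\dag\wimkp^{\PZF}$ equals $\sqrt{(L-K)\gamuemk}$ \emph{independently of} $k'$ hinges on the fact that all IRs in $\Pir$ share the same pilot-projection denominator, and this is precisely where the normalization $\alpha_{mk}^{\PZF}$ and the variance $\gamuemk$ must be reconciled.
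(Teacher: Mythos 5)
Your proposal is correct and follows essentially the same route as the paper's own proof: the estimate-plus-error decomposition $\gmkiu=\hgmkue+\gtilmkue$, the deterministic zero-forcing inner product $(\hgmkue)^\dag\wimkp^{\PZF}=\sqrt{(L-K)\gamuemk}$ that is nonzero only for $k'\in\Pir$ (hence combines coherently across APs to give $\mathsf{PC}_k$, exactly the paper's Eq.~\eqref{eq:eq76}), the protection property $(\hgmkue)^\dag\qB_m=\boldsymbol{0}$ for the ER-induced interference, and the merging of the incoherent error contributions of $\mathrm{BU}_k$ and the IUI into the single sum over $k'\in\K$. The only difference is completeness, not method: the paper's Appendix~\ref{appendix:B} details only the pilot-contamination term and defers $\mathrm{DS}_k$, $\mathrm{BU}_k$ and $\mathrm{EUI}_{kj}$ to a cited reference, whereas you derive all four terms explicitly.
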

\begin{proof}
    See Appendix \ref{appendix:B}.
\end{proof}

\begin{proposition}~\label{Theorem:RF:PPZF}
With PPZF precoding, the average HE at  ER $j\in\J$ is given by~\eqref{eq:NLEH:av}, where $Q_{j}(\qa,  \ETAI, \ETAE\big)$ is given by~\eqref{eq:El_average:PPZF}, at the top of the next page,
\begin{figure*}
\begin{align}~\label{eq:El_average:PPZF}
    Q_{j}(\boldsymbol{\Theta}, \qa,  \ETAI, \ETAE\big) =
    (\tau_c - \tau)\Snn
    &\Bigg[
    \rho_{d} \sum\nolimits_{m\in\M}
    (1-a_m) \frac{\etamjE}{\alpha_{mj}^{\PMRT}} {\big(\Psi_1 + \Psi_2\big)} 
    +
    {\rho_{d}}\sum\nolimits_{j'\in \Per \setminus j}  \sum\nolimits_{m\in\M}(1 - a_m)\frac{\etamjpE}{\alpha_{mj'}^{\PMRT}} 
   {\big(\Phi_1 + \Phi_2\big)}
    \nonumber\\
    &\hspace{-12em} +
    \rho_{d}
    \!\! \sum\nolimits_{j' \notin \Per \setminus j}
      \sum\nolimits_{m\in\M}(1 - a_m)
      \frac{\etamjpE}{\alpha_{mj'}^{\PMRT}} 
    \frac{(L\!-\!K)^2}{L^2 } 
    \trace
    \bigg( \!\!
    \Big( \!
    \gameumjp \qI_{L} \!+\! \bar{\zeta}_{mj'} \kappa \gmjpeulos \! \big(\gmjpeulos \!\big)^\dag  
    \Big) \!
    \Big(
    \big(\betamjeu \!+\! N \bar{\zeta}_{mj} \big) \qI_{L}
    \!+\!
    \bar{\zeta}_{mj} \kappa \gmjeulos \big(\gmjeulos \!\big)^\dag
    \Big) \!\!
    \bigg)
    \nonumber\\
    &\hspace{-12em} +
    \rho_{d}
    \sum\nolimits_{k\in\mathcal{K}}\sum\nolimits_{m\in\M}a_m \etamkI
    \Big( \betamjeu \!+\! N \bar{\zeta}_{mj} \!+\! \frac{ \kappa}{L}\bar{\zeta}_{mj} \big\Vert \gmjeulos \big\Vert^{2}  \Big)
    + 1
    \Bigg].
\end{align}
\hrulefill
	\vspace{-1.5em}
\end{figure*}
with\footnote{To derive $\Psi_1$, $\Psi_2$, $\Phi_1$, and $\Phi_2$, we exploit the approximation $\Ex\{x^2\} \approx (\Ex\{x\})^2$, which is provided in details in Appendix C.} 
\begin{align}~\label{eq:Psi_1}
    \Psi_1 
    &\!\approx\!
     \frac{L\!-\!K}{L}\kappa \bar{\zeta}_{mj} \big\Vert \gmjeulos \big\Vert^2 
    \bigg(\!\frac{L\!-\!K}{L}\kappa \bar{\zeta}_{mj} 
    \big\Vert \gmjeulos \big\Vert^2
    \!+\!(L\!-\!K\!+\!1)
    \nonumber\\
    &\hspace{-1em}
    \times \!(L\!-\!K)\big(\gameumj\big)^{\!2}
    \!+\!
    2\big(\!N\bar{\zeta}_{mj}\!+\!\betamjeu \!+\!(L\!-\!K\!-1) \gameumj \!\big),
\end{align}   
\begin{align}
    \Psi_2 
    &\!=\!
    \frac{L\! -\!K}{L}
    \big(\betamjeu \!\!+\! N\bar{\zeta}_{mj}\! - \!\gameumj \big) \Big( L\gameumj \!+\! \bar{\zeta}_{mj} \kappa \big\Vert \gmjeulos \big\Vert^2 \Big), 
\end{align}   
\begin{align}
     \Phi_1 
    &\approx
   \frac{L-K}{L}
    \bigg(  \frac{L-K}{L}\kappa^{2} \bar{\zeta}_{m,j} \bar{\zeta}_{m,j'} \big\Vert \gmjeulos \big\Vert^{2} \big\Vert \gmjpeulos \big\Vert^{2}
    \nonumber\\
    &\hspace{1em}  +
    \kappa \bar{\zeta}_{mj}(N\bar{\zeta}_{mj'} + \betamjpeu - \gameumjp)\big\Vert \gmjeulos \big\Vert^{2}
    \nonumber\\
    &\hspace{1em}
    \kappa\bar{\zeta}_{mj'}(N\bar{\zeta}_{mj} + \betamjeu - \gameumj) \big\Vert \gmjpeulos \big\Vert^{2}
    \nonumber\\
    & \hspace{1em}+
     \gameumj \upsilon_{m,j,j'} 
    \Big(L(L-K\!+1) \gameumj \upsilon_{m,j,j'}  \nonumber\\
    &\hspace{1em}
    +(L\!-\!K)
    \sqrt{\kappa^{2} \bar{\zeta}_{m,j} \bar{\zeta}_{m,j'}}
    \Big( \gmjpeulos \!+\! \gmjeulos  \Big)^\dag\gmjeulos\Big)\bigg), 
\end{align}   
\begin{align}
    \Phi_2 
    &\!=\!
    \frac{L \! - \! K}{L}
            \big(\betamjeu \! + \! N\bar{\zeta}_{mj} \!-\! \gameumj \big) \Big( L \gameumjp \! +\! \bar{\zeta}_{mj'} \kappa \big\Vert \gmjpeulos \big\Vert^2 \Big)\Big). 
\end{align}    
\end{proposition}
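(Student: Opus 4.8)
The plan is to start from the definition of the average received RF energy $Q_{j}$ in \eqref{eq:El_average}, which separates the energy into three statistically distinct contributions: the self-energy from ER $j$'s own PMRT beam, $\Ex\{|(\gmjue)^\dag\wemj^{\PMRT}|^2\}$; the leakage from the beams of the other ERs, $\Ex\{|(\gmjue)^\dag\wemjp^{\PMRT}|^2\}$; and the leakage from the IRs' PZF beams, $\Ex\{|(\gmjue)^\dag\wimk^{\PZF}|^2\}$. Into each term I would substitute the precoder definitions \eqref{eq:wipzf}--\eqref{eq:wemrt} together with the channel decomposition $\gmjue=\sqrt{\bar{\zeta}_{mj}\kappa}\,\gmjeulos+\gtilhatmjeu+\gtilmjeu$ into deterministic LoS mean, zero-mean NLoS estimate, and MMSE error, exploiting that the error is independent of the estimate and that the ER and IR pilots are orthogonal. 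The recurring tools, as in the proof of Proposition~\ref{Theorem:SE:PPZF}, are the moment of the protective projector $\Ex\{\qB_m\}\approx\frac{L-K}{L}\qI_L$, Gaussian fourth-order (Isserlis) moment identities, and the closure $\Ex\{x^2\}\approx(\Ex\{x\})^2$ flagged in the footnote.

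I would dispatch the two independent terms first. For the IR leakage, the beam $\wimk^{\PZF}$ is built solely from $\Ghmu$ and is therefore independent of $\gmjue$; conditioning on the beam and averaging $\gmjue$ over its covariance $(\betamjeu+N\bar{\zeta}_{mj})\qI_L+\bar{\zeta}_{mj}\kappa\,\gmjeulos(\gmjeulos)^\dag$, together with $\Ex\{\|\wimk^{\PZF}\|^2\}=1$ and the isotropy of the beam direction so that $\Ex\{|(\gmjeulos)^\dag\wimk^{\PZF}|^2\}\approx\frac{1}{L}\|\gmjeulos\|^2$, reproduces the last block of \eqref{eq:El_average:PPZF}. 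For an interfering ER $j'\notin\Per$, its estimate is likewise independent of $\gmjue$, so the expectation factorizes into $\trace\big(\Ex\{\gmjue(\gmjue)^\dag\}\,\Ex\{\wemjp^{\PMRT}(\wemjp^{\PMRT})^\dag\}\big)$; evaluating the precoder second moment with $\Ex\{\qB_m(\cdot)\qB_m\}\approx\frac{L-K}{L}(\cdot)$ and $\Ex\{\hgmjpue(\hgmjpue)^\dag\}=\gameumjp\qI_L+\bar{\zeta}_{mj'}\kappa\,\gmjpeulos(\gmjpeulos)^\dag$ yields the trace block.

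The substantive work is the self-energy ($\Psi_1,\Psi_2$) and the pilot-contaminated leakage ($\Phi_1,\Phi_2$), where $\gmjue$ and the beam are correlated. Here I would split $(\gmjue)^\dag\qB_m\hgmjue$ (resp. $(\gmjue)^\dag\qB_m\hgmjpue$) into a coherent estimate-on-estimate part, the LoS cross-terms, and the error part. The error part, being independent of the beam, contributes only its variance and gives $\Psi_2$ (and $\Phi_2$); the coherent part is a genuine fourth-order moment in the NLoS Gaussian estimate, which I would expand with Isserlis, picking up the characteristic $(L-K+1)(L-K)(\gameumj)^2$-type MRT array-gain terms and the LoS combining terms appearing in \eqref{eq:Psi_1}. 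For $j'\in\Per\setminus j$ I would invoke Remark~\ref{remark:PC}, i.e. $\gtilhatmjpeu=\upsilon_{m,j,j'}\gtilhatmjeu$, to express the correlated estimates through the single Gaussian $\gtilhatmjeu$ before applying the moment expansion, which produces the $\upsilon_{m,j,j'}$-weighted coherent terms in $\Phi_1$. Finally I would reinsert $\alpha_{mj}^{\PMRT}$ and $\alpha_{mj'}^{\PMRT}$ from \eqref{eq:alpha_wemrt}, collect the three contributions with the $\rho_d$ and $(\tau_c-\tau)\Snn$ prefactors and the additive noise, and simplify to \eqref{eq:El_average:PPZF}.

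The hard part will be the coupled fourth-order averages yielding $\Psi_1$ and $\Phi_1$, where the randomness of the projector $\qB_m$, the correlated pilot-contaminated estimates, and the deterministic LoS directions all appear inside a single expectation; keeping track of which factors are mutually independent, using $\Ex\{\qB_m\}$ and its second-order behaviour consistently, and invoking $\Ex\{x^2\}\approx(\Ex\{x\})^2$ only where the detailed argument justifies it, is what makes the bookkeeping delicate. The independent terms and the error contributions, by contrast, are routine once the covariance structure from the channel-estimation subsection is in hand.
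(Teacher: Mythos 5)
Your proposal is correct and follows essentially the same route as the paper's Appendix~C: the same split of $Q_j$ into self-energy, pilot-contaminated and independent ER leakage, and IR leakage; the same decomposition $\gmjue=\sqrt{\bar{\zeta}_{mj}\kappa}\,\gmjeulos+\tildehgmjue+\gtilmjeu$; the same projector-moment and Gaussian fourth-moment identities (the paper packages these as Lemmas~1 and~2); the same $\Ex\{x^2\}\approx(\Ex\{x\})^2$ approximation; and the same use of Remark~\ref{remark:PC} to reduce the pilot-contaminated estimates to a single Gaussian, with the error terms yielding $\Psi_2$, $\Phi_2$ and the coherent fourth-order terms yielding $\Psi_1$, $\Phi_1$. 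No gaps to report.
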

\begin{proof}
    See Appendix \ref{appendix:C}.
\end{proof}
\begin{Remark}~\label{remark:PC_NLHE}
    When all ERs share the same pilot sequence, the HE in~\eqref{eq:El_average:PPZF} is reduced to~\eqref{eq:El_average:PPZF:FPC} at the top of next page. 
\end{Remark}
\begin{figure*}
\begin{align}~\label{eq:El_average:PPZF:FPC}
   \tilde{Q}_{j}(\boldsymbol{\Theta}, \qa,  \ETAI, \ETAE\big)& =
    (\tau_c - \tau)\Snn
    \Bigg[
    \rho_{d} \sum\nolimits_{m\in\M} (1-a_m)
    \frac{\etamjE}{\alpha_{mj}^{\PMRT}} (\Psi_1 + \Psi_2)
    +
    {\rho_{d}}\sum\nolimits_{j'\in \Per \setminus j}  \sum\nolimits_{m\in\M}(1 - a_m) \nonumber\\
     &\times
    \frac{\etamjpE}{\alpha_{mj'}^{\PMRT}}
    (\Phi_1 + \Phi_2)
     +
     \rho_{d}
     \sum\nolimits_{k\in\mathcal{K}}\sum\nolimits_{m\in\M}a_m \etamkI
     \Big( \betamjeu \!+\! N \bar{\zeta}_{mj} \!+\! \bar{\zeta}_{mj} \frac{\kappa}{L} \big\Vert \gmjeulos \big\Vert^{2}  \Big)
     + 1
   \Bigg].
\end{align}
\hrulefill
\vspace{-1.2em}
\end{figure*}
\vspace{-0.7em}
\begin{Remark}~\label{remark:PC_NLHE_2}
    As the LoS component $\Vert \gmjeulos \Vert^{2}$ greatly contributes to the HE, $\Psi_1$, $\Psi_2$, $\Phi_1$ and $\Phi_2$ dominantly contribute to $Q_j$ and $\bar{Q}_j$ compared to other terms. This observation will be invoked for our scattering matrix design at the BD-RIS.
\end{Remark}
\begin{Remark}~\label{remark:closedform}
    The closed-form expressions for the SE and HE under PPZF precoding capture comprehensively the impact of channel estimation errors. Hence, the optimization frameworks, proposed in Section~\ref{sec:SCAOptimization}, are carried out by taking into account the channel estimation error and other statistical parameters.
\end{Remark}
\vspace{-1.2em}
\subsection{Resource Allocation Formulation}~\label{ref:ProblemFormulation}
In this subsection, we formulate an optimization problem to maximize the average sum-HE subject to practical key constraints: the minimum harvested energy required per ER, $\Gamma_{j}$, $\forall j \in \J$, the minimum quality of service (QoS) requirements per IR, $\SEth$, $\forall k \in \K$, and the power budget for each AP.\footnote{To incorporate fairness among IRs or ERs, max-min SE and max-min HE optimization problems can be alternatively formulated. Through appropriate algebraic reformulations, these problems can be transformed into a structure similar to $(\mathcal{P}1)$. Consequently, the same solution approach can be applied to effectively address them.} Our objective is to determine the optimal AP mode selection variables, $\qa$, the power control coefficients, $\ETAI, \ETAE$, and the BD-RIS scattering matrix $\boldsymbol{\Theta}$ for the considered ergodic CF-mMIMO SWIPT system. Define $\mathcal{T}_k = 2^{\SEth}-1$. The optimization problem can be mathematically expressed as
\vspace{-0.1em}
\begin{subequations}~\label{eq:ProblemFormulationorigin}
    \begin{align}
        (\mathcal{P}1): & \quad \max_{\qa, \ETAI, \ETAE, \boldsymbol{\Theta}} \quad \sum\nolimits_{j\in\J}{\Ex\big\{\Phi^{\mathrm{NL}}_{j}\big(\boldsymbol{\Theta},\qa,  \ETAE,\ETAI\big)\big\}}~\label{obj:ObjectiveFunction1}
        \\
        \mathrm{s.t.} 
        &\hspace{1em} \Ex\big\{\Phi^{\mathrm{NL}}_{j}\big(\boldsymbol{\Theta}, \qa,  \ETAE,\ETAI\big)\big\} \geq \Gamma_{j}, \forall j \in \J,~\label{ct:HE_threshold}
        \\
        &\hspace{1em} \text{SINR}_k(\boldsymbol{\Theta}, \qa, \ETAI, \ETAE) \geq \mathcal{T}_k, \forall k \in \K,~\label{ct:SINRThreshol}
        \\
        &\hspace{1em}a_m\!\sum\nolimits_{k\in\K} \!\etamkI\! +\! (1\!-\!a_m) \sum\nolimits_{j\in\J}\!\etamjE\!\leq\! 1, \forall m \!\in\! \M,~\label{ct:etamkI}
        \\
        &\hspace{1em} \boldsymbol{\Theta}^\dag \boldsymbol{\Theta} = \qI_{N}, \boldsymbol{\Theta} = \boldsymbol{\Theta}^{T},~\label{ct:BDRISConstraint}
        \\
        &\hspace{1em} a_m \in \{0,1\},\forall m \in \M.~\label{ct:a_m}
    \end{align}
\end{subequations}

\vspace{-0.7em}
\section{Proposed Solution For Average SUM-HE Maximization Problem}~\label{sec:SCAOptimization}
In this section, we develop an optimization framework to solve the sum-HE optimization problem in~\eqref{eq:ProblemFormulationorigin}. Note that both the objective function and the constraints of~\eqref{eq:ProblemFormulationorigin} are complicated functions of  the scattering matrix $\boldsymbol{\Theta}$. This issue makes the sum-HE problem technically challenging, hence it is difficult to find its optimal solution. This challenging problem is effectively transformed into tractable forms and  efficient algorithms are proposed for solving them. To this end, we propose a heuristic algorithm (\HEU) for scattering matrix design, which simplifies the computation, while providing a significant sum-HE performance gain. Furthermore, we develop a framework for joint AP mode selection and power control, utilizing SCA and ML techniques.
\vspace{-1em}
\subsection{Joint AP Mode Selection and Power Control (\textbf{JAP-PA})}~\label{sec:jap-pa}
Before proceeding, by invoking~\eqref{eq:El_average} and after some manipulation,  we can  replace constraint \eqref{ct:HE_threshold} by
\begin{equation}~\label{eq:HEinsight}
    Q_j \geq \Xi(\tilde{\Gamma}_{j}),~\forall j \in \J,
\end{equation}
where $\tilde{\Gamma}_{j} = (1-\Omega) \Gamma_{j} + \phi\Omega$ and  $\Xi(\tilde{\Gamma}_{j}) = \chi - \frac{1}{\xi}\Big( \frac{\phi -\tilde{\Gamma}_{j}}{\tilde{\Gamma}_{j}} \Big)$ is the inverse function of \eqref{eq:PsiEl}.  

For a given scattering matrix $\boldsymbol{\Theta}$, the sum-HE problem~\eqref{eq:ProblemFormulationorigin}  can be formulated as
\vspace{-0.2em}
\begin{subequations}~\label{eq:ProblemFormulationNL1}
    \begin{align}
        (\mathcal{P}2): & \quad \max_{\qa, \ETAI, \ETAE} \quad \sum\nolimits_{j\in\J}{\Ex\big\{\Phi^{\mathrm{NL}}_{j}\big(\qa,  \ETAE,\ETAI\big)\big\}}
        ~\label{obj:ObjectiveFunctionNL1}
        \\
        \mathrm{s.t.} 
        &\quad 
        Q_j \geq \Xi(\tilde{\Gamma}_{j}),~\forall j \in \J,~\label{ct:HE_threshold2}
        \\
        &\quad \text{SINR}_k(\qa, \ETAI, \ETAE) \geq \mathcal{T}_k, \forall k \in \K,~\label{ct:SINRThreshol2}
        \\
        &\quad \sum\nolimits_{k\in\K}{\etamkI} \leq a_{m}, \forall m \in \M~\label{ct:etamkI2}
        \\
        &\quad \sum\nolimits_{j\in\J}{\etamjE} \leq 1 - a_{m}, \forall m \in \M,~\label{ct:etamjE2}
        \\
        &\quad a_m \in \{0,1\},\forall m \in \M.~\label{ct:a_m2}
    \end{align}
\end{subequations}
We use \textbf{JAP-PA} to refer to the solution of joint AP mode selection and power control problem in~\eqref{eq:ProblemFormulationNL1}. Note that due to the binary nature of $a_m$, constraints~\eqref{ct:etamkI2} and~\eqref{ct:etamjE2} are equivalent to~\eqref{ct:etamkI}. Problem ($\mathcal{P}2$) is a complicated mixed-integer non-convex problem.   To address the integer binary constraint~\eqref{ct:a_m2}, it is straightforward to see that $a_m\in\{0,1\}$ is equivalent to $a_m=a_m^2$, $a_m\in[0,1]$. On the other hand, it holds true that $a_m^2\leq a_m$, for $a_m\in[0,1]$. Following~\cite{Che:TWC:2014}, we relax binary $a_m$ to $a_m\in[0,1]$ and introduce a penalty term with parameter $\lambda^{\mathrm{pen}}$ in the objective function to enforce $a_m=a_m^2$, thus making $a_m$ binary~\cite{Che:TWC:2014,Mohammadi:JSAC:2023}. We notice that this new constraint is not concave. Thus, we use the following concave lower bound
\vspace{-0.5em}
\begin{align}~\label{eq:ineq:x2}
    x^2 \geq x_0(2x - x_0).
\end{align}
To this end, we can recast~\eqref{eq:ProblemFormulationNL1} as
\vspace{-0.1em}
\begin{subequations}~\label{eq:ProblemFormulationP2}
    \begin{align}
        (\mathcal{P}2.1): & \quad \max_{\qa, \ETAI, \ETAE, \qe} \quad \sum\nolimits_{j\in\J}{e_{j}}  
        \nonumber\\ 
        &\hspace{0.5em} - \lambda^{\mathrm{pen}}\sum\nolimits_{m\in\M}{a_m - a_{m}^{(n)} \big(2a_m - a_{m}^{(n)}\big)},
        ~\label{obj:ObjectiveFunction2}
        \\
        \mathrm{s.t.} 
        &\quad 
        Q_j \geq \Xi(\tilde{e}_{j}),~\forall j \in \J,~\label{ct:auxilaryvariable2a}   
        \\
        &\quad e_j \geq \Gamma_j, \forall j \in \J,~\label{ct:auxilaryvariable2b2}
        \\
        &\quad \text{SINR}_k(\qa, \ETAI, \ETAE) \geq \mathcal{T}_k, \forall k \in \K,~\label{ct:SINRThreshol31}
        \\
        &\quad \sum\nolimits_{k\in\K}\!\!{\etamkI}\! \leq a_{m}^{(n)} \!\big(2a_m \!-\! a_{m}^{(n)}\big), \forall m \in \M~\label{ct:etamkI22}
        \\
        &\quad \sum\nolimits_{j\in\J}{\etamjE} \leq 1 - a_{m}^{2}, \forall m \in \M,~\label{ct:etamjE22}
        \\
        &\quad 0 \leq a_m \leq 1,\forall m \in \M,~\label{ct:a_m2relax2}
    \end{align}
\end{subequations}
where $\qe =\{e_j\geq 0, j\in\J\}$  are auxiliary variables, $\tilde{e}_{j} \triangleq (1 - \Omega) e_{j} + \phi \Omega$,  while superscript ($n$) denotes the value of the involving variable produced after $(n - 1)$ iterations ($n \geq 0$). Moreover,  $a_m$ has been replaced with $a^{2}_{m}$ in~\eqref{ct:etamkI2} and~\eqref{ct:etamjE2} to accelerate the convergence speed of the optimization problem. Note that the non-convex nature of~\eqref{ct:auxilaryvariable2a} and~\eqref{ct:SINRThreshol31} lead to the non-convexity of the problem~\eqref{eq:ProblemFormulationNL1}. To address the non-convex constraint~\eqref{ct:auxilaryvariable2a}, we use the convex upper bound of $\Xi(\tilde{e}_j)$ as
\vspace{-1em}
\begin{align}~\label{eq:CvxUBINV}
        \Xi(\tilde{e}_j) &\leq   \chi - \frac{1}{\xi} \bigg( \ln\Big(\frac{\phi - \tilde{e}_j}{\tilde{e}_j^{(n)}} \Big)
        - \frac{\tilde{e}_j - \tilde{e}_j^{(n)}}{\tilde{e}_j^{(n)}} \bigg)\triangleq \tilde{\Xi}(\tilde{e}_j).
\end{align}
Then, for the sake of notation simplicity, we define the expression holders $\mu_{mj} = (\Psi_1 + \Psi_2)/\alpha_{mj}^{\PMRT}$, $\varepsilon_{mj'} = (\Phi_1 + \Phi_2)/\alpha_{mj'}^{\PMRT}$, $\varrho_{mj} = \betamjeu \!+\! N \bar{\zeta}_{mj} \!+\! \frac{ \kappa}{L}\bar{\zeta}_{mj} \big\Vert \gmjeulos \big\Vert^{2}$, and
\begin{align}
    \varpi_{mj'} 
    &=  
    \frac{(L\!-\!K)^2}{\alpha_{mj'}^{\PMRT} L^2 } 
    \trace
    \bigg( \!\!
    \Big( \!
    \gameumjp \qI_{L} \!+\! \bar{\zeta}_{mj'} \kappa \gmjpeulos \! \big(\gmjpeulos \!\big)^\dag  
    \Big) \!
    \nonumber\\
    &\hspace{2em} \times
    \Big(
    \big(\betamjeu \!+\! N \bar{\zeta}_{mj} \big) \qI_{L}
    \!+\!
    \bar{\zeta}_{mj} \kappa \gmjeulos \big(\gmjeulos \!\big)^\dag
    \Big) \!\!
    \bigg), 
    \nonumber\\
    \mathcal{Z}_{mj} &= \sum\nolimits_{k\in\mathcal{K}} \etamkI \varrho_{mj} 
    - \etamjE \mu_{mj}
    \nonumber\\
    &- \sum\nolimits_{j'\in \Per \setminus j} \etamjpE \varepsilon_{mj'}
    - \sum\nolimits_{j' \notin \Per \setminus j}
     \etamjpE \varpi_{mj'}.
\end{align}
Now, we can recast~\eqref{ct:auxilaryvariable2a} as
\begin{align}~\label{eq:auxilaryvariable2a_convex}
    &4(\tau_c - \tau)\Snn\rho_{d} \sum\nolimits_{m\in\M}
    \Big(
    \etamjE \mu_{mj} 
    + \sum\nolimits_{j'\in \Per \setminus j} \etamjpE \varepsilon_{mj'}
    \nonumber\\
    &+ \sum\nolimits_{j' \notin \Per \setminus j}
     \etamjpE \varpi_{mj'}
    +\sum\nolimits_{k\in\mathcal{K}} \etamkI \varrho_{mj}
    \Big)
    \nonumber\\
    &+ (\tau_c - \tau)\Snn\rho_{d} \sum\nolimits_{m\in\M} (a_m + \mathcal{Z}_{mj})^{2} + 4(\tau_c - \tau)\Snn
    \nonumber\\
    &\geq 
    (\tau_c - \tau)\Snn\rho_{d} \sum\nolimits_{m\in\M} (a_m - \mathcal{Z}_{mj})^{2} 
    + 4\tilde{\Xi}(\tilde{e}_j),
\end{align}
which is still non-convex. We can replace the left-hand side of~\eqref{eq:auxilaryvariable2a_convex} by its concave lower bound, as
\begin{align}~\label{ct:auxilaryvariable2a_convex}
    &4(\tau_c - \tau)\Snn\rho_{d} \sum\nolimits_{m\in\M}
    \big(
    \etamjE \mu_{mj} 
    + \sum\nolimits_{j'\in \Per \setminus j} \etamjpE \varepsilon_{mj'}
    \nonumber\\
    &+ \sum\nolimits_{j' \notin \Per \setminus j}
     \etamjpE \varpi_{mj'}
    +\sum\nolimits_{k\in\mathcal{K}} \etamkI \varrho_{mj}
    \big)
    \nonumber\\
    &+ (\tau_c - \tau)\Snn\rho_{d} \sum\nolimits_{m\in\M} \big(a_m^{(n)} + \mathcal{Z}_{mj}^{(n)} \big) \times
    \nonumber\\
    &\hspace{1.5em}\Big(2\big(a_m + \mathcal{Z}_{mj} \big) - a_m^{(n)} - \mathcal{Z}_{mj}^{(n)} \Big)
    + 4(\tau_c - \tau)\Snn
    \nonumber\\
    &\geq 
    (\tau_c - \tau)\Snn\rho_{d} \sum\nolimits_{m\in\M} (a_m - \mathcal{Z}_{mj})^{2} 
    + 4\tilde{\Xi}(\tilde{e}_j),
\end{align}
where we used~\eqref{eq:ineq:x2} and replaced $x$ and $x_0$ by $a_m - \mathcal{Z}_{mj}$ and $a_m^{(n)} + \mathcal{Z}_{mj}^{(n)}$, respectively. Now, constraint~\eqref{ct:auxilaryvariable2a_convex} is convex. 

We now focus on~\eqref{ct:SINRThreshol31}, which is equivalent to
\begin{align}~\label{eq:ctSINR1}
    &\frac
    {\rho_{d}(L-K)}
    {\mathcal{T}_k}
    \Big(\sum\nolimits_{m\in\M}\sqrt{ a_m\etamkI  \gamuemk}  \Big)^2
    \nonumber\\
    &\geq
    \rho_{d} \sum\nolimits_{k'\in \Pir \setminus k} \! \Big( \! \sum\nolimits_{m \in \M}  \sqrt{\! (L-K) a_m \etamkpI \gamuemk} \Big)^{\!2}
    \nonumber\\
    &+
    \rho_{d} \sum\nolimits_{k'\in \K}  \sum\nolimits_{m \in \M}  a_m  \etamkpI \big( \betamkue \!-\! \gamuemk \big)
    \nonumber\\
    &+
    \rho_{d} \sum\nolimits_{j\in\mathcal{J}} \!\! \sum\nolimits_{m\in\MM} {(1-\! a_m\!)\etamjE(\betamkue \!-\! \gamuemk)}
    +  1.
\end{align}
We denote $\nu_{mk} \triangleq \betamkue - \gamuemk$, $\etamI \triangleq \sum\nolimits_{k'\in \K} \etamkpI$, and $\etamE \triangleq \sum\nolimits_{j\in\mathcal{J}}\etamjE$. By using~\eqref{eq:ineq:x2}, we replace the left-hand side of~\eqref{eq:ctSINR1} with its concave lower bound, which yields
\vspace{-0.2em}
\begin{align}~\label{eq:ctSINR2}
    &\frac
    {\rho_{d}(L-K) q_k^{(n)}}{\mathcal{T}_k}{\Big(2\sum\nolimits_{m\in\M}\sqrt{ a_m\etamkI  \gamuemk}  -q_k^{(n)}\Big)}    
    \nonumber\\
    &\geq
    \rho_{d} \sum\nolimits_{k'\in \Pir \setminus k} \! \Big( \! \sum\nolimits_{m \in \M}  \sqrt{\! (L-K) a_m \etamkpI \gamuemk} \Big)^{\!2}
    \nonumber\\
    &+
    \rho_{d}\sum\nolimits_{m \in \M} \nu_{mk}a_m \omega_m+
    \rho_{d}\sum\nolimits_{m\in\MM}\nu_{mk}  \etamE
    +  1,
\end{align}
where $q_k^{(n)}\triangleq \sum\nolimits_{m\in\M}\sqrt{ a_m^{(n)}\etamkIn  \gamuemk}$ and $\omega_m \triangleq \etamI-\etamE$. Constraint~\eqref{eq:ctSINR2} is still non-convex due to the $a_m \omega_m$ term. To deal with this, we apply $4a_m \omega_m = (a_m+\omega_m)^{2} - (a_m-\omega_m)^{2}$, and then utilize~\eqref{eq:ineq:x2} to get the following convex constraint:
\vspace{-0.2em}
\begin{align}~\label{eq:ctSINR3}
    &\frac
    {4\rho_{d}(L-K) q_k^{(n)}}{\mathcal{T}_k}{\Big(2\sum\nolimits_{m\in\M}\sqrt{ a_m\etamkI  \gamuemk}  -q_k^{(n)}\Big)}    
    \nonumber\\
    &+ \!
    \rho_{d}\!\sum\nolimits_{m\in\MM} \!\!\nu_{mk} (a_{m}^{(n)} -\!\omega_{m}^{(n)})\Big(2\big(a_{m} \!- \!\omega_{m} \big) \!- \!\big(a_{m}^{(n)} \!-\! \omega_{m}^{(n)}\big) \Big)
    \nonumber\\
    &\geq
    4\rho_{d} \sum\nolimits_{k'\in \Pir \setminus k} \! \Big( \! \sum\nolimits_{m \in \M}  \sqrt{\! (L-K) a_m \etamkpI \gamuemk} \Big)^{\!2}
    \nonumber\\
    &+ \rho_{d}\sum\nolimits_{m\in\MM} \nu_{mk} (a_m + \omega_m)^{2}
    +
    4\rho_{d}\sum\nolimits_{m\in\MM} \nu_{mk}\etamE + 4.
\end{align}
The optimization problem~\eqref{eq:ProblemFormulationNL1}  is now recast as
\begin{subequations}\label{opt:JAP:final}
\begin{alignat}{2}
&(\mathcal{P}2.2):  \max_{\qa, \ETAI, \ETAE, \qe}        
&\qquad&\sum\nolimits_{j\in\J}{e_{j}} \nonumber\\
&&&\hspace{-4em}
             - \lambda^{\mathrm{pen}}\!\sum\nolimits_{m\in\M}\!\!{a_m\! -\! a_{m}^{(n)} \big(2a_m \!-\! a_{m}^{(n)}\big)},
            ~\label{opt:JAP:final:obj}\\
&\hspace{5em}\text{s.t.} 
&         &~\eqref{ct:auxilaryvariable2a_convex},~\forall j \in \J,~\label{opt:JAP:final:ct1}\\
&         &      &~\eqref{eq:ctSINR3},~\forall k \in \K,
           ~\label{opt:JAP:final:ct2}\\
&         &      &~\eqref{ct:auxilaryvariable2b2},~\eqref{ct:etamkI22}-\eqref{ct:a_m2relax2}.~\label{opt:JAP:final:ct3}
\end{alignat}
\end{subequations}
Problem~\eqref{opt:JAP:final} is convex, and thus it can be solved using CVX \cite{cite:Grant:CVX}. In \textbf{Algorithm~\ref{alg1}}, we outline the main steps to solve problem ($\mathcal{P}2.2$), where $\widetilde{\qx} \triangleq \{\qa, \ETAI, \ETAE, \qe\}$ and $\widehat{\mathcal{F}} \triangleq\{\eqref{opt:JAP:final:ct1},~\eqref{opt:JAP:final:ct2},~\eqref{opt:JAP:final:ct3}\}$  is a convex feasible set. Starting from a random point $\widetilde{\qx}\in\widehat{\mathcal{F}}$, we solve \eqref{opt:JAP:final} to obtain its optimal solution $\widetilde{\qx}^*$, and use $\widetilde{\qx}^*$ as an initial point in the next iteration. The proof of this convergence property uses similar steps as  the proof of \cite[Proposition 2]{vu18TCOM}, and hence, is omitted herein due to lack of space.

\textbf{Complexity analysis:} In each iteration of \textbf{Algorithm~\ref{alg1}}, the computational complexity of solving  \eqref{opt:JAP:final} is $\OO(\sqrt{A_l+A_q}(A_v+A_l+A_q)A_v^2)$, since~\eqref{opt:JAP:final} can be equivalently reformulated as
an optimization problem with  $A_v\triangleq M(K+1)+J(M+1)$ real-valued scalar variables, $A_l\triangleq 2M+J$ linear constraints, and $A_q\triangleq M + K+ J+ 1$ quadratic constraints~\cite{tam16TWC}.

\begin{algorithm}[t]
\caption{Proposed algorithm for joint AP mode selection and power allocation design (\textbf{JAP-PA})}
\begin{algorithmic}[1]
\label{alg1}
\STATE \textbf{Initialize}: $n\!=\!0$, 
$\lambda^{\mathrm{pen}} > 1$, random initial point $\widetilde{\qx}^{(0)}\!\in\!\widehat{\mathcal{F}}$.
\REPEAT
\STATE Update $n=n+1$
\STATE Solve \eqref{opt:JAP:final} to obtain its optimal solution $\widetilde{\qx}^*$
\STATE Update $\widetilde{\qx}^{(n)}=\widetilde{\qx}^*$
\UNTIL{convergence}
\end{algorithmic}
\end{algorithm}
\setlength{\textfloatsep}{0.2cm}

\subsection{Learning-Based Solution  for ($\mathcal{P}2$) (\textbf{ML-JAP-PA})}~\label{sec:benchmarkII}
In this subsection, we propose an autonomous ML approach that alleviates the complexity of the SCA-based method introduced in Section~\ref{sec:jap-pa}. This approach adapts to and directly solves the non-convex problem $\mathcal{P}2$ over environmental interactions by learning from the evolving scenario.
\subsubsection{MDP-Based Problem Transformation}
We first transform the proposed system in Section~\ref{sec:Sysmodel} into a task for a RL agent - Markov Decision Process (MDP) framework. Markovian properties are defined as a tuple $\{\qs[t], \qa[t], r[t] \}$ where $\qs[t]$, $\qa[t]$, $r[t]$ are the observation, action, and reward at each $t \in {T}$, which is the number of total learning steps per episode.

At each time step $t$, the state $\qs[t]$, which encapsulates the system properties that fluctuate over time, is determined by the $x$-and-$y$ coordinates of the APs, IRs, ERs, and the BD-RIS. Then, the agent determines a joint action $\qa[t]$ contains the AP mode selection $\qa$, and power allocation $\ETAI$ and $\ETAE$. Systematically, the reward function $r[t]$ plays a crucial role for the RL agent in efficiently learning the task. The agent, based on the perceived state $\qs[t]$, executes an action $\qa[t]$, which results in an immediate reward $r[t]$ and observes the subsequent state $\qs[t+1]$. Then, the reward is part of a long-term accumulated value function $Q\big(\qs,\bar{\mu}(\qs)\big)$, which is used to `update' the policy $\bar{\mu} \colon  \mathcal{S} \rightarrow \mathcal{A}$. Policy  $\bar{\mu}$ is a map function guiding the agent to determine actions based on the current state to maximize future rewards. Mathematically speaking:
\begin{equation}
    \bar{\mu}^{*} = \argmax_{\bar{\mu}}  Q\big(\qs,\bar{\mu}(s)\big),
\end{equation}
where
\vspace{-1em}
\begin{equation}~\label{eq:Bellman}
    Q\big(\qs,\bar{\mu}(\qs)\big) = \Ex_{r,\qs}\bigg[ \sum\nolimits_{t=1}^{T}{\bar{\gamma}^{t-1}r[t] } \bigg],
\end{equation}
where $\bar{\gamma}$ is the discount factor. To this end, a meticulous design of the step reward function $r[t]$ will enhance the learning efficiency of the policy, thereby steering to the optimal system performance while satisfying the problem constraints.

\subsubsection{Constraint-Satisfied Reward Function Formulation}
We recast the objective function and its constraints in \eqref{eq:ProblemFormulationNL1} as the reward $r[t]$. We explore the $\textit{sigmoid}$ activation layer after the Ornstein–-Uhlenbeck (OU) exploration process \cite{Thien:DDPG:2023} to obtain the action value in range of $[0,1]$, thus ensuring to satisfy \eqref{ct:a_m2}. To satisfy the constraint \eqref{ct:etamkI2} and \eqref{ct:etamjE2}, we implement the $\textit{softmax}$ function for each power allocation of AP $m$, which is mathematically formulated as
\vspace{0.2em}
\begin{align}
    \etamkI &= \exp(\bar{\etamkI}) \Big(\exp\big( \sum\nolimits_{k' \in \mathcal{K}}{\bar{\etamkpI}} \big)^{-1}\Big), \text{ and},
    \nonumber\\
    \etamjE &= \exp(\bar{\etamjE}) \Big(\exp\big( \sum\nolimits_{j' \in \mathcal{J}}{\bar{\etamjpE}} \big)^{-1}\Big), \forall m \in \M.
\end{align}

To satisfy the constraints \eqref{ct:SINRThreshol2}, we first introduce the additive positive-valued penalty weights $\lambda_{\mathrm{SE}}$. Accordingly, we formulate the step reward function as
\begin{align}~\label{eq:MLreward}
    r[t] &= \sum\nolimits_{j\in\J}{\Ex\big\{\Phi^{\mathrm{NL}}_{j}\big( \qa[t] \big)\big\}} 
    - \lambda_{\mathrm{SE}}
    ,
\end{align}
where
\vspace{-0.5em}
\begin{align}
  \lambda_{\mathrm{SE}} 
  &\triangleq 
    \begin{cases}
      \lambda_{\mathrm{SE}} & \text{if } \SINRk\big(\qa[t]\big) < \mathcal{T}_{k}
      \\
      0 & \text{otherwise.} 
    \end{cases}    
\end{align}

It can be seen that the value $r[t]$ decreases significantly as \eqref{ct:SINRThreshol} is not satisfied and the Frobenius norm becomes significantly large. At each learning iteration, the reward value accumulates to \eqref{eq:Bellman}, which indicates the learning process of the RL agent to determine $\qa[t]$ that satisfies all the constraints in the problem~\eqref{eq:ProblemFormulationNL1}. To this end, the off-policy online RL method deep deterministic policy gradient (DDPG) is implemented for joint optimization.

We provide the pseudo-scheme for DRL-based optimization in \textbf{Algorithm~\ref{alg:ML}}. The preliminaries of the DDPG training process can be found in \cite[Section III.B]{Thien:DDPG:2023}. 

\vspace{0.5em}
\textbf{Complexity analysis:} The complexity of \textbf{ML-JAP-PA} comprises the workloads of the DNN models. We consider that a critic (actor) network has $L^{Q}$ ($L^{\bar{\mu}}$) hidden layers, each having an identical number of $n^{Q}$ ($n^{\bar{\mu}}$) neurons; $\vert \mathcal{S} \vert \triangleq 2(M+1) + 2(K +J)$ $\big(\vert \mathcal{A} \vert \triangleq M(1+K+J)\big)$ is the input (output) layer. 
The forward and backward passes in the backpropagation process at the critic network take $\mathcal{O}\big( E T \big(\vert \mathcal{S} \vert n^{Q} + L^{Q} (n^{Q})^{2} + n^{Q} \big)^2 \big)$. 
The complexity of training $E$ episodes over the mini-batch sample $B$ is $\mathcal{O}\big(E  T B \big( \big(\vert \mathcal{S} \vert n^{\bar{\mu}} + L^{\bar{\mu}} (n^{\bar{\mu}})^{2} + n^{\bar{\mu}} \vert \mathcal{A} \vert )^{2}
+ \big(
\vert \mathcal{S} \vert n^{Q} + L^{Q} (n^{Q})^{2} + n^{Q}
\big)^2
\big)\big)$. The sampling mini-batch, the "soft" updating of the target networks, and the OU exploration take $\mathcal{O}(E  T  B)$, $\mathcal{O}(ET (L^{\bar{\mu}} n^{\bar{\mu}} + L^{Q} n^{Q}) )$, and $\mathcal{O}(E  T  \vert \mathcal{A} \vert)$ computational resources, respectively. After the offline training, the complexity of online decision-making in the actor network is $\mathcal{O}(\vert \mathcal{S} \vert n^{\bar{\mu}} + L^{\bar{\mu}} (n^{\bar{\mu}})^{2} + n^{\bar{\mu}}  \vert \mathcal{A} \vert)$.

\begin{algorithm}[t]
\caption{Learning-based solution for (\textbf{ML-JAP-PA})}~\label{alg:ML}
    \begin{algorithmic}[1]
    \footnotesize
        \STATE \textbf{Initialize} the critic network $Q(s,a|\theta^{Q})$ and actor network $\bar{\mu}(s|\theta^{\mu})$ with stochastic weights $\theta^{Q}$ and $\theta^{\bar{\mu}}$, target network $Q'$ and $\bar{\mu}'$
        \FOR{episode $= 1,\ldots,E$}
            \STATE Initialize the Markovian  environment
            \FOR{step $= 1,\ldots,T$}
                \STATE \# Interacting:
                \STATE Observe $\qs[t]$, determine $\qa_{\bar{\mu}}[t]$ and receive $r[t]$ and $\qs[t+1]$
                \STATE Store $T$ experience tuples into buffer $B$
                \STATE \# Training:
                \STATE Compute target value $y[t] = r_{\sim B} + \gamma Q\left(s_{\sim B}',\bar{\mu} (s_{\sim B}'))|\theta^{Q}\right)$
                \STATE Update critic network by minimizing 
                \\
                $L(\theta^{Q}) = \mathbb{E}_{\sim B}\left[(Q(s_{\sim B},a_{\sim B}|\theta^{Q})-y(t))^2 \right]$
                \STATE Update actor network using the policy gradient method
                \STATE "Soft" update the target networks 
            \ENDFOR
        \ENDFOR
    \end{algorithmic}
\end{algorithm}
\setlength{\textfloatsep}{0.2cm}
\begin{Remark}~\label{remark:ML}
The trained policy network $\bar{\mu}^{*}$ instantaneously determines the optimal joint action of the AP-mode operation $\bar{\qa}^{*}_{m}$ and power allocation $\bar{\boldsymbol{\eta}}^{\mathtt{I}*}$ and $\bar{\boldsymbol{\eta}}^{\mathtt{E}*}$ at any given environmental state.
\end{Remark}

\vspace{-2em}
\subsection{Scattering Matrix Design}~\label{sec:phaseshift}
In this subsection, we introduce the designs for achieving a unitary scattering matrix for ($\mathcal{P}1$). 

\subsubsection{\HEU}
Now, we introduce a heuristic scheme to design a suboptimal $\boldsymbol{\Theta}$ given a network setup. As in Remark~\ref{remark:PC_NLHE_2}, we notice that $\boldsymbol{\Theta}$ directly affects the channel gain through the LoS components of the channels from the BD-RIS to $J$ ERs. Thus, we heuristically look for $\boldsymbol{\Theta}$ that obtains the highest magnitude of $\Vert \gmjeulos \Vert^{2}$ under the $D$ number of local network realizations. \textbf{Algorithm~3} provides the heuristic search for the suboptimal unitary and symmetric $\boldsymbol{\Theta}$ based on the projection in~\cite[III.B]{YMao_BDRIS_2023}. The complexity of this heuristic scheme is $\mathcal{O}(D (MJL + M N^{3} + M))$.

\subsubsection{\DFT}
We consider a widely used design for scattering matrix design, based of DFT,  to evaluate the performance of the proposed $\HEU$ scheme.  Hence, the  $(n,n')$-th element of the DFT scattering matrix is obtained as
\vspace{-1em}
\begin{align}
    [\boldsymbol{\Theta}]_{n,n'} = \frac{1}{\sqrt{N}} \bigg(\exp\Big(\frac{-j2\pi\bar{\theta}_{n}}{N}\Big) \bigg)^{i_H \times i_V},
\end{align}
where $i_H \triangleq \vert[N_{H}]_{n} - 1\vert$ and $i_V \triangleq \vert[N_{V}]_{n'} - 1\vert$ are the index values of the $n$-th and $n'$-th elements of the horizontal and vertical arrays, respectively. The DFT scheme is straightforward and immediately provides unitary and symmetric matrix.

\vspace{-0.4em}
\section{Benchmarks}~\label{sec:benchmark}
In this section, to assess the effectiveness of  \textbf{JAP-PA} scheme for CF-mMIMO systems, we introduce benchmark schemes for comparison in the numerical results of Section~\ref{sec:PerformanceAnalysis}.

\vspace{-1em}
\subsection{ Random AP Mode Selection and Power Control (\textbf{RAP-PA})} 
\vspace{-0.6em}
In this scheme, we assume that the AP modes ($\qa$) are randomly assigned. Accordingly, we optimize the power control coefficients ($\ETAI$, $\ETAE$), under the same SE requirement constraints for IRs and energy requirements for ERs. Therefore, the average sum-HE optimization problem is reduced to
\vspace{-0.1em}
\begin{subequations}~\label{eq:ProblemFormulationP3}
    \begin{align}
        (\mathcal{P}3): & \quad \max_{\ETAI, \ETAE, \boldsymbol{\ell}} \quad \sum\nolimits_{j\in\J}{\ell_{j}},~\label{obj:ObjectiveFunction32}
        \\
        \mathrm{s.t.} 
        &\quad 
        Q_j \geq \Xi(\tilde{\ell}_{j}),~\forall j \in \J,~\label{ct:auxilaryvariable32a}
        \\
        &\quad \ell_j \geq \Gamma_j, \forall j \in \J,~\label{ct:auxilaryvariable32b}
        \\
        &\quad \text{SINR}_k(\ETAI, \ETAE) \geq \mathcal{T}_k, \forall k \in \K,~\label{ct:SINRThreshol333}
        \\
        &\quad~\eqref{ct:etamkI2},~\eqref{ct:etamjE2},~\label{ct:etamjE3}
    \end{align}
\end{subequations}
where $\tilde{\ell}_{j} \triangleq (1-\Omega)\ell_j + \phi \Omega$ and $\ell_j$ is a new auxiliary variable. Problem~\eqref{eq:ProblemFormulationP3} is non-convex due to non-convex constraints  \eqref{ct:auxilaryvariable32a} and \eqref{ct:SINRThreshol333}.
In light of SCA, we replace the right-hand side of \eqref{ct:auxilaryvariable32a} with it convex upper-bound  to get the following convex constraint
\vspace{-0.1em}
\begin{align}~\label{eq:HE:SCA}
    &1\!+\! \rho_{d}\! \sum\nolimits_{m\in\M}{\etamjE \K_{1,mj}} 
    \!+\! \rho_{d} \!\sum\nolimits_{j'\in\Per}\!\!\sum\nolimits_{m\in\M} \!\etamjpE \K_{2,mj'}
    \nonumber\\
    &+\! \rho_{d} \sum\nolimits_{k\in\K}\sum\nolimits_{m\in\M} \!\etamkI \K_{3,mj}
    \!\geq\! 
    \tilde{\Xi}(\tilde{\ell}_j)/(\tau_c - \tau)\Snn,
\end{align}
where $\K_{1,mj} \triangleq \frac{1-a_m}{\alpha_{mj}^{\PMRT}}(\Psi_1 + \Psi_2)$, $\K_{2,mj'} \triangleq \frac{1-a_m}{\alpha_{mj'}^{\PMRT}}(\Phi_1 + \Phi_2)$, and $\K_{3,mj} \triangleq a_m \Big( \betamjeu \!+\! N \bar{\zeta}_{mj} \!+\! \bar{\zeta}_{mj} \frac{\kappa}{L} \big\Vert \gmjeulos \big\Vert^{2}  \Big)$.
\begin{figure*}[t]
    \centering
    \begin{subfigure}[b]{0.32\textwidth}
        \includegraphics[trim=2 0cm 0cm 0cm,clip,width=1.05\textwidth]{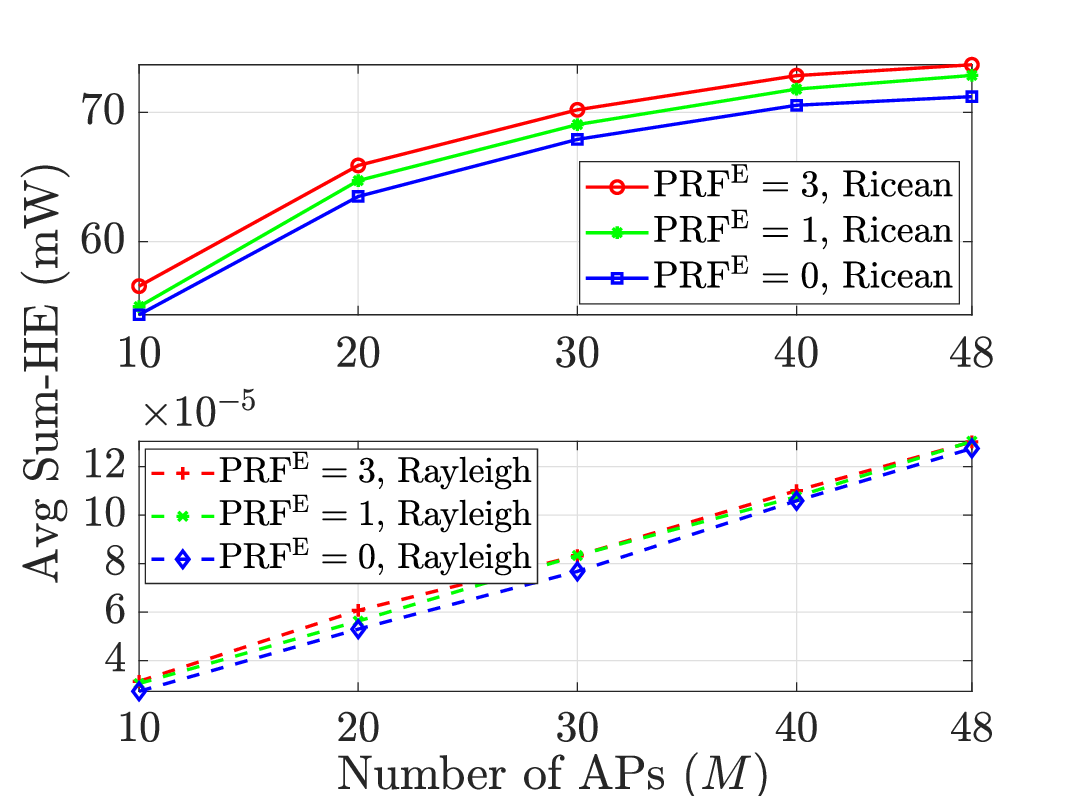}
        \caption{Channel model and pilot reuse factor}
        \label{fig:figurea}
    \end{subfigure}
    \begin{subfigure}[b]{0.32\textwidth}
        \includegraphics[trim=2 0cm 0cm 0cm,clip,width=1.05\textwidth]{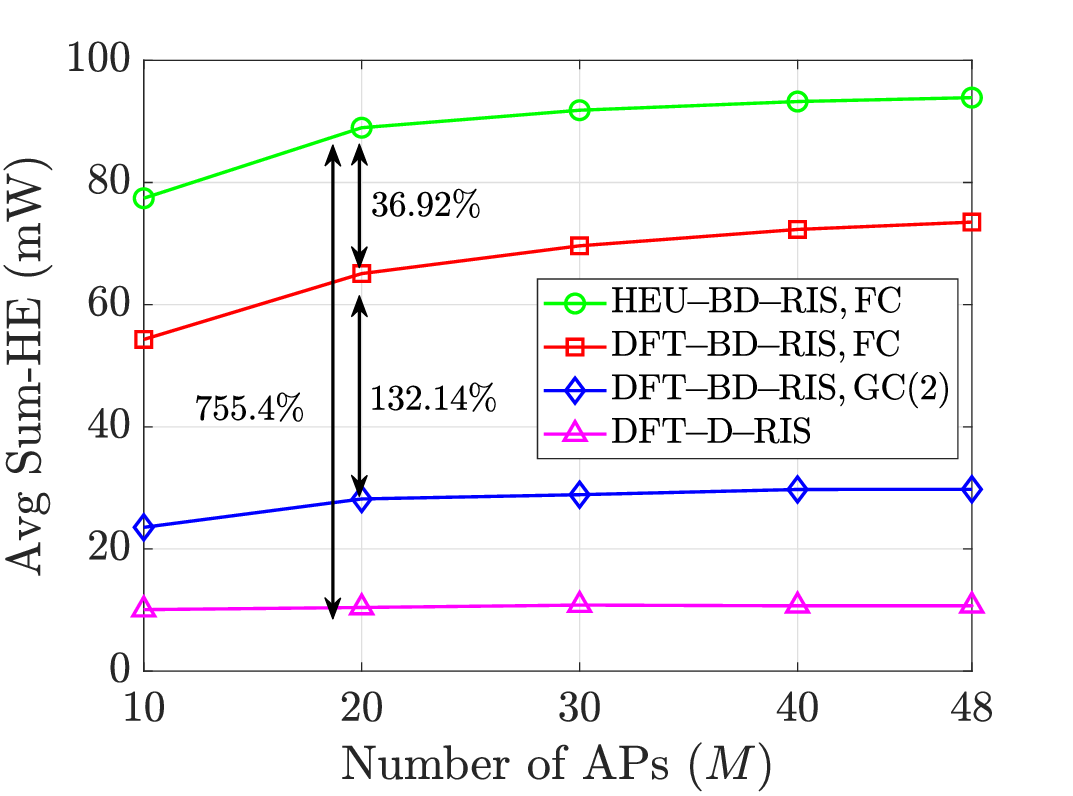}
        \caption{Impact of $M$, Ricean channel model }
        \label{fig:figureb}
    \end{subfigure}
    \begin{subfigure}[b]{0.32\textwidth}
        \includegraphics[trim=2 0cm 0cm 0cm,clip,width=1.05\textwidth]{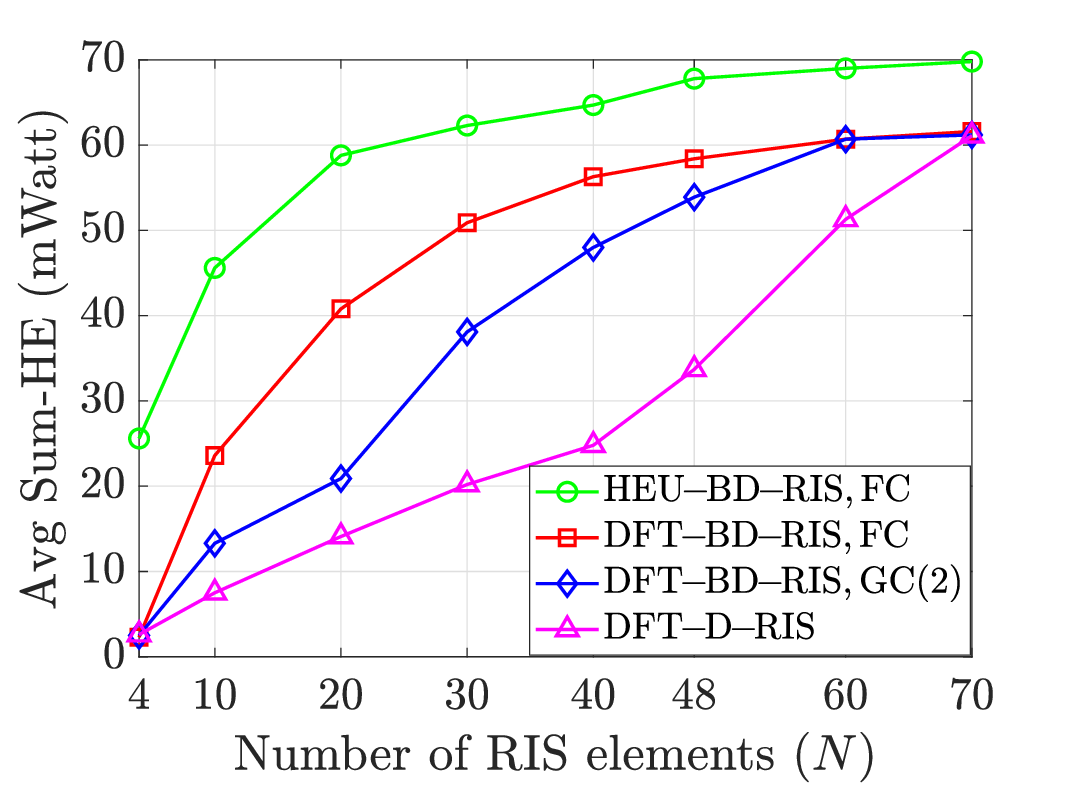}
        \caption{Impact of $N$, Ricean channel model}
        \label{fig:figurec}
    \end{subfigure}
    \caption{Impact of system parameters and  scattering matrix designs on the average sum-HE ($ML = 480, N=10, K = 3, J = 4$).}
    \vspace{-1em}
    \label{fig:three_figures}
\end{figure*}
\begin{algorithm}[t]
~\label{alg:heuristic_theta}
\caption{Heuristic Search for unitary and symmetric $\boldsymbol{\Theta}$}
    \begin{algorithmic}[1]
    \footnotesize
        \FOR{n = $1,...,D$}
            \FOR{m = $1,...,M$}
                \STATE Generate $\Hmlue \triangleq \{\hmlue \forall \J \}$, $\qZ_{m} \triangleq \{ \zmjlos \forall \J \}$
                \\ \hspace{1.9em} and $\FmRIS \triangleq \sqrt{\kappa\bar{\zeta}_{mj}} \FmRISlos + \sqrt{\bar{\zeta}_{mj}}\FmRISnlos$
                \STATE Compute $\qA \triangleq \FmRIS^\dag \Hmlue \qZ_{m}^\dag$
                \STATE Compute $\bar{\qA} \triangleq \frac{1}{2}(\qA + \qA^{T})$
                \STATE Obtain components $[\qU_{m}, \qV_{m}] = \mathrm{svd}(\bar{\qA})$ and rank $r$ of $\bar{\qA}$
                \STATE Partition $\qU_{m} = \big[\qU_{m}^{r_m}, \qU_{m}^{N-r_m}\big]$
                \\ \hspace{1.8em} and $\qV_{m} = \big[\qV_{m}^{r_m}, \qV_{m}^{N-r_m}\big]$
                \STATE Obtain $\THETA_{m}(n) \triangleq \hat{\qU}_m \qV^\dag$,  where $\hat{\qU}_m \triangleq [\qU_{m}^{r_m}, \big( \qV_{m}^{N-r_m} \big)^{*}]$
                \STATE Compute $\mathrm{obj}(n) = \sum\nolimits_{j \in \M} \sum\nolimits_{j \in \J} \Vert \gmjeulos \Vert^{2}$
                \STATE Store $\THETA_{m}(n), \mathrm{obj}(n)$ and its index $\forall m \in \M$
            \ENDFOR
        \ENDFOR
        \STATE Search the maximum $\mathrm{obj}(n^*)$ value and its index $n^*$
        \STATE Assign $n^*$-th $\boldsymbol{\Theta}$ as optimal $\boldsymbol{\Theta}^{*}, \forall m \in \M$
        \RETURN $\big[\boldsymbol{\Theta}^{*}\big]$
    \end{algorithmic}
\end{algorithm}
\setlength{\textfloatsep}{0.2cm}

Next, we focus on the constraint \eqref{ct:SINRThreshol3}, which is equivalent to
\vspace{-0.9em}
\begin{align}~\label{eq:RPC:SINR}
    {\rho_{d}(L-K)q_{k}^{(n)}}\Big( 2\sum\nolimits_{m\in\M}{\sqrt{\etamkI a_m \gamuemk}} - q_{k}^{(n)} \Big)
    \nonumber\\
    \hspace{4em}\geq \mathcal{T}_k(\PSI_{k}(\etamkI, \etamjE) + 1,
\end{align}
where we applied the concave upper bound in~\eqref{eq:ineq:x2}, in which $x^{(n)}$ is replaced with $q_{k}^{(n)} \triangleq \sum\nolimits_{m\in\M}{\sqrt{\etamkIn a_m \gamuemk}}$. Moreover, $\PSI_{k}(\etamkI, \etamjE)$ is a linear function of the variables $\etamkI$ and $\etamjE$, given by
\vspace{-0.3em}
\begin{align}
    &\PSI_{k}(\etamkI, \etamjE) = \rho_{d}
    \sum\nolimits_{k'\in \K} \sum\nolimits_{m \in \M}  \etamkpI a_m
    \big( \betamkue - \gamuemk \big)
    \nonumber\\
    &\hspace{1em}+ \rho_{d}
    \sum\nolimits_{j\in\mathcal{J}} \sum\nolimits_{m\in\M}{\etamjE(1 - a_m)(\betamkue \!-\! \gamuemk)}.
\end{align}

To this end, the convex approximation of \eqref{eq:ProblemFormulationP3} is given by
\vspace{-0.2em}
\begin{subequations}\label{opt:RPC:final}
\begin{alignat}{2}
&(\mathcal{P}3.1):  \max_{\ETAI, \ETAE, \boldsymbol{\ell}}        
&\qquad& \sum\nolimits_{j\in\J}{\ell_{j}}           ~\label{opt:RPC:final:obj}\\
&\hspace{2em}\text{s.t.} 
&     &~\eqref{eq:HE:SCA},~\forall j \in \J,~\label{opt:RPC:final:ct1}\\
&        &      &~\eqref{eq:RPC:SINR},~\forall k \in \K,
           ~\label{ct:SINRThreshol3}\\
&         &      &~\eqref{ct:auxilaryvariable32b},\eqref{ct:etamjE3},~\label{opt:RPC:final:ct3}
\end{alignat}
\end{subequations}
which can be efficiently solved using CVX.

\vspace{-0.3em}
\subsection{Random AP Mode Selection and Equal Power Allocation (\textbf{RAP-EPA})}
We introduce \textbf{RAP-EPA} as the benchmark with the lowest-complexity. We assume that the APs' operation mode selection parameters ($\aaa$) are randomly assigned and no power control is performed at the APs. Moreover, in the absence of power control, both E-APs and I-APs transmit at full power, i.e., at the $m$-th AP, power coefficients are the same and $\etamkI = \frac{1}{K}$, $\forall k\in\mathcal{K}$  and $\etamjE = \frac{1}{J}$, $\forall j\in\mathcal{J}$.

\vspace{-0.3em}
\subsection{Greedy AP Mode Selection and Equal Power Allocation (\textbf{GAP-EPA})}
We now propose a centralized greedy algorithm for AP selection that aims to maximize the sum-HE while satisfying the required QoS constraints. In this benchmark, all APs are initially configured to operate in information-transmission mode (I-AP), i.e., $\qa = \boldsymbol{1}_M$. Then, for each AP $m \in \mathcal{M}$, the algorithm sequentially attempts to switch its operating mode to energy-transmission mode (E-AP). After each tentative switch, the system evaluates whether the minimum HE and SE constraints in~\eqref{ct:SINRThreshol} and~\eqref{ct:HE_threshold} remain satisfied and whether the objective function improves. If both conditions are met, the new mode $a_m^*$ is accepted; otherwise, the AP's mode is reverted to I-AP mode.

\section{Numerical Examples}~\label{sec:PerformanceAnalysis}
In this section, we present numerical results to evaluate the performance of the proposed systems and to assess the effectiveness of the optimization design. We assume that the APs and IRs are randomly distributed in a square of $0.5 \times 0.5$ km${}^2$, whose edges are wrapped around to avoid the boundary effects. The height of the APs, BD-RIS, and IRs are 15~m, 15~m, and 1.65~m, respectively. We set $\tau_c = 200$, $\Snn = -92$ dBm, $\rho_d = 1$~W and $\rho_u = 0.1$ W.  
In addition, we set the non-linear EH parameters as $\xi = 150$, $\chi = 0.024$, and $\phi = 0.024$~W for the non-linear EH model. The LSFCs are generated following the three-slope propagation model from \cite{cite:HienNgo:cf01:2017} capturing the fading according to geometrical distance. Particularly, the fading factor follows a log-normal distribution with a standard deviation of $8$ dB. To illustrate various intrinsic levels of interference and channel errors, we denote the number of symbols for the UL channel estimation phase as $\tau = K + J - \mathrm{PRF}$, where $\mathrm{PRF} \triangleq \mathrm{PRF}^{\mathrm{I}} + \mathrm{PRF}^{\mathrm{E}}$ is the pilot reuse factor, which is clearly explained in \cite{Mohammadi:TC:2024}. Considering the vulnerability of information transmission to PC~\cite{mohammadi2024next}, we assume that $\tau>K$ and then assign the first $K$ out of $\tau$ orthogonal pilot sequences to the $K$ IRs as a priority (i.e., $\mathrm{PRF}^{\mathrm{I}}=0$), with the remaining $(\tau - K)$ pilot sequences then allocated to the ERs. We set the thresholds $\Gamma_{j} = 1 \text{ mW, }\forall j \in \J$ and $\SEth = 10$ bit/s/Hz,  $\forall k \in \K$, unless otherwise stated. We fix the mMIMO dimension to $ML=480$ for all simulations. In this way, various scenarios ranging from low to high density of APs can be analyzed and the performance comparison is fair.

\subsection{Impact of Network Parameters on the Average Sum-HE}~\label{section:numericalA}
We examine the impact of system parameters on the  performance of BD-RIS-aided CF-mMIMO SWIPT system in Fig.~\ref{fig:three_figures}. For unifying our analysis results, AP mode selection and power allocation are implemented under \textbf{RAP-EPA}. In our simulation results, we consider various RIS configurations, including fully-connected BD-RIS, two-group-connected BD-RIS, and diagonal RIS. These are denoted in this section as \textbf{BD-RIS, FC}, \textbf{BD-RIS, GC(2)}, and \textbf{D-RIS}, respectively. 

Figure~\ref{fig:figurea} shows the influence of the channel model and the PC on the performance of EH. For $J = 4$, we consider the three cases as $\mathrm{PRF}^{\mathrm{E}} = 0$, $\mathrm{PRF}^{\mathrm{E}} = 1$, and $\mathrm{PRF}^{\mathrm{E}} = 3$. As $\mathrm{PRF}$ increases,  the intrinsic interference among ERs also rises due to the degradation in CSI accuracy, which negatively impacts both the precoding and beamforming performance. To emphasize the performance in the HE zone, we keep $\mathrm{PRF}^{\mathrm{I}} = 0$ for a fair comparison. Figure~\ref{fig:figurea} shows that the HE reaches the milliwatt~(mW) range under the Ricean fading model, compared to only the microwatt~($\mu$W) range achieved with the Rayleigh model. Another important observation is that the scenario with $\mathrm{PRF}^{\mathrm{E}} = 3$ achieves the highest sum-HE. This highlights the benefit of sharing the same pilot sequences from an EH perspective, resulting in greater HE, which contrasts with the impact of pilot sequence sharing in information transmission applications. 

\begin{figure}[t]
	\centering
	\includegraphics[width=0.46\textwidth]{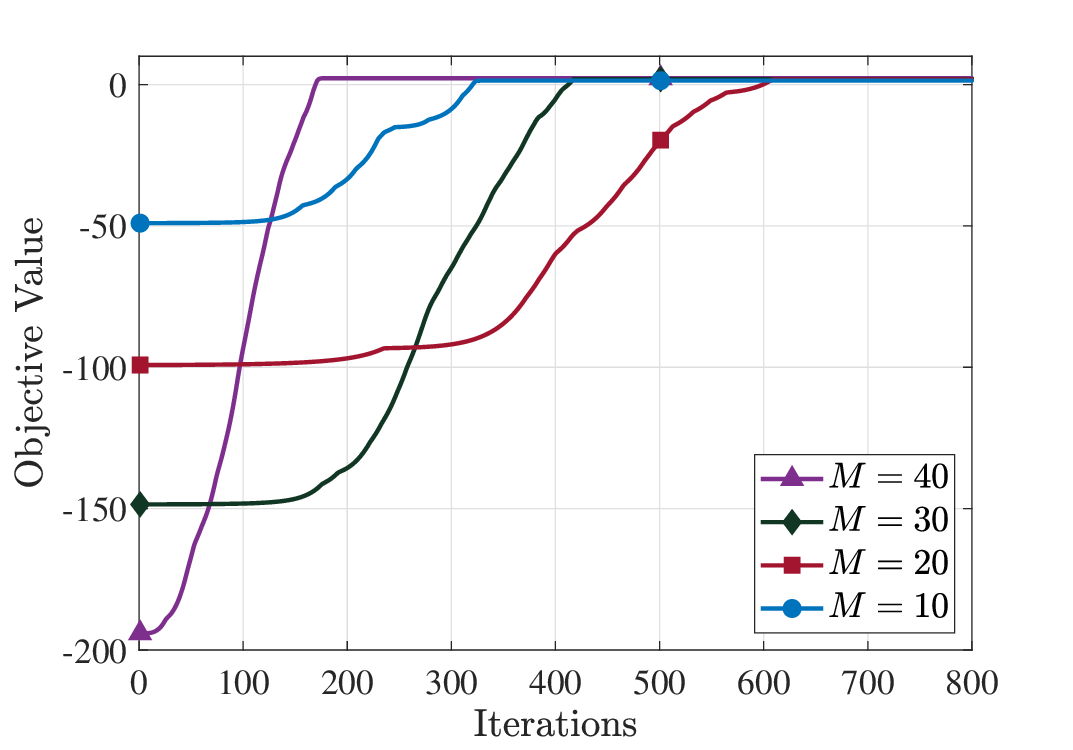}
	\caption{\small The convergence pattern of the \textbf{JAP-PA} solution ($ML = 480, N=10, \SEth = 10$ [bit/s/Hz], $ K = 3, J = 4$).}
	\label{fig:SCAPattern}
\end{figure}

\begin{figure}[t]
	\centering
	\includegraphics[width=0.46\textwidth]{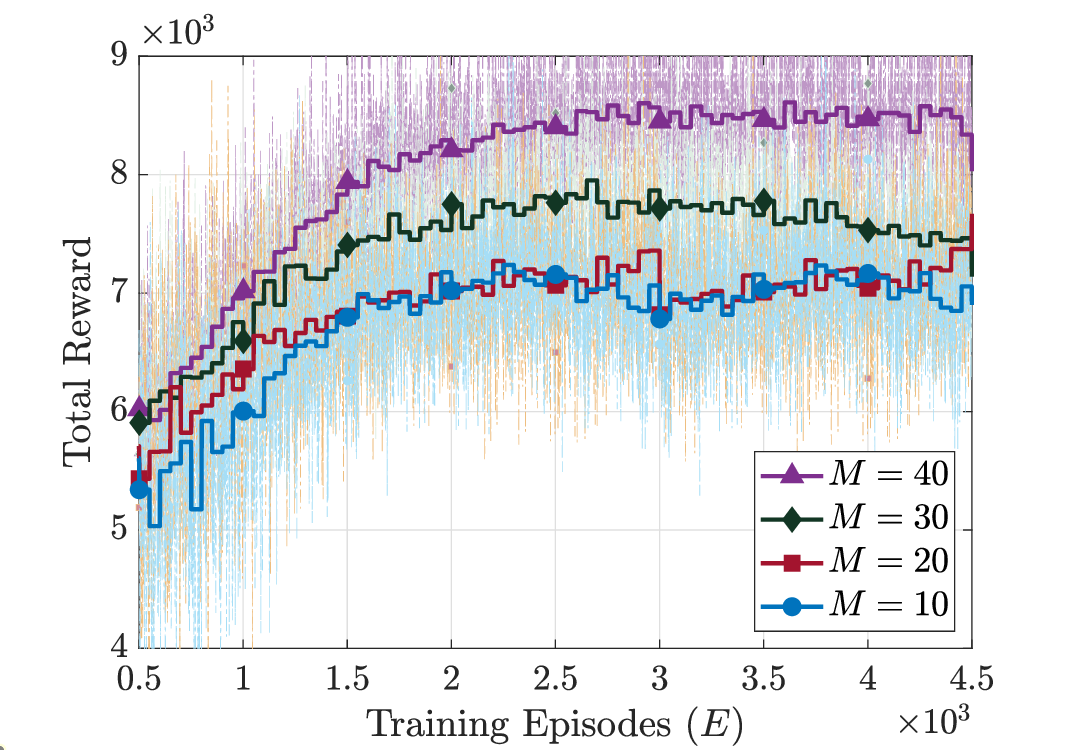}
	\caption{\small The convergence pattern of the \textbf{ML-JAP-PA} solution ($ML = 480, N=10, \SEth = 10$ [bit/s/Hz], $ K = 3, J = 4$).}
	\label{fig:trainingPattern}
\end{figure}

Figure~\ref{fig:figureb} shows the performance of different BD-RIS architectures in terms of the average sum-HE as a function of the number of APs. Moreover, the performance of the (\textbf{D-RIS}) has been included. The $\HEU$ design with the fully-connected (FC) architecture outperforms all other designs with a remarkable $755.4\%$ increase in the average sum-HE compared to the \textbf{D-RIS}, with this advantage growing as the number of APs increases. Additionally, the $\HEU$ design delivers up to a $36.92\%$ gain over the $\DFT$ architecture. Another key observation is that the FC enhances the performance by $132.14\%$ compared to GC architecture. These performance gains are attributed to the \textit{FC impedance network of BD-RIS}, which provides greater reconfigurability and more degrees of freedom in wave manipulation compared to the single-connected structure of D-RIS. These findings highlight the crucial role of phase shift element interconnection in boosting EH and underscore the importance of the scattering matrix as a key design parameter in EH applications.

Figure~\ref{fig:figurec} presents the average sum-HE as a function of the number of scattering elements. It is evident that $\textbf{HEU-BD-RIS, FC}$ consistently delivers the highest HE across all values of $N$. Additionally, $\textbf{DFT-BD-RIS, FC}$ sets an upper performance bound for both $\textbf{DFT-BD-RIS, GC(2)}$ and $\textbf{DFT-D-RIS}$, with the performance gap between these three DFT-based designs remaining negligible at $N = 4$ and $N = 70$. This analysis indicates the importance of carefully selecting the appropriate number of scattering elements and architecture to optimize the performance for specific scenarios.

\subsection{Proposed AP Selection and Power Control Designs}
In this subsection, we compare the performance of the proposed solutions. 
Figure~\ref{fig:SCAPattern} demonstrates the convergence behavior of the \textbf{JAP-PA} solution across different network setups. For $M = 40$, there is an abundance of serving APs to achieve the maximum sum-HE gains while the minimum SE threshold is satisfied. Thus, the scheme requires approximately 180 iterations to reach the optimal joint solution. Conversely, a reduced AP availability ($M = 10$) decreases the problem's dimensionality, which paradoxically lowers the computational complexity while introducing convergence trade-offs: although the algorithm achieves faster convergence, it settles at a suboptimal objective value. This illustrates the importance of network size and the complexity of the \textbf{JAP-PA}.

For \textbf{ML-JAP-PA}, we determine the hyper-parameters for the training process that fit the best with our studied systems through a trial-and-error process. The critic network is made up of two hidden layers (with 1028 and 512 nodes), while the actor network includes three hidden layers (with 1028, 512, and 256 nodes). For the training process of 3,000 episodes and 500 steps per episode, we set the batch size $B= 128$, discount factor $\bar{\gamma} = 0.99$, and buffer length $= 1e6$. The decay rates of $\epsilon$ and the update of the target network are 1e-3. The learning rates for the actor and critic network are 1e-4 and 5e-3, respectively. ReLu activation is implemented for all hidden layers, while Tanh and Sigmoid activations are applied for the actor's decision-making and action-clipping post OU noise, respectively. Figure~\ref{fig:trainingPattern} illustrates the convergence of the \textbf{ML-JAP-PA} solution. As the network scales, the agent earns progressively higher total rewards, resulting in improved sum-HE performance, which will be further validated in the subsequent figures.

\begin{table}[t]
\caption{\small Execution time of the optimization solutions (seconds per iteration)} 
\vspace{-1em}
\footnotesize
\begin{center}
\renewcommand{\arraystretch}{1}
\begin{tabular}{|r|r|r|r|r|}
\hline
\multicolumn{1}{|c|}{\multirow{2}{*}{\textbf{Benchmarks}}} & \multicolumn{4}{c|}{\textbf{Number of APs ($M$)}}                                                \\ \cline{2-5} 
\multicolumn{1}{|c|}{}                                   & \multicolumn{1}{c|}{\textbf{$10$}} & \multicolumn{1}{c|}{\textbf{$20$}} & \multicolumn{1}{c|}{\textbf{$30$}} & \multicolumn{1}{c|}{\textbf{$40$}} \\ \hline
\textbf{JAP-PA}      & 178.17   & 338.23    & 482.41   & 626.74       \\ \hline
\textbf{RAP-PA}       & 1.9501   & 3.6956    & 3.7770   & 5.6361      \\ \hline
\textbf{GAP-PA}      & 0.0720   &  0.1160    & 0.1801   & 0.3218       \\ \hline
\textbf{ML-JAP-PA} & 0.0378   & 0.0401   & 0.0420    &  0.0472       
\\ \hline
\textbf{RAP-EPA}       & 0.0042   & 0.0043    & 0.0045    & 0.0050    
\\ \hline

\end{tabular}%
\end{center}
\label{table:execution_time}
\end{table}

\begin{figure}[t]
	\centering
	\includegraphics[width=0.44\textwidth]{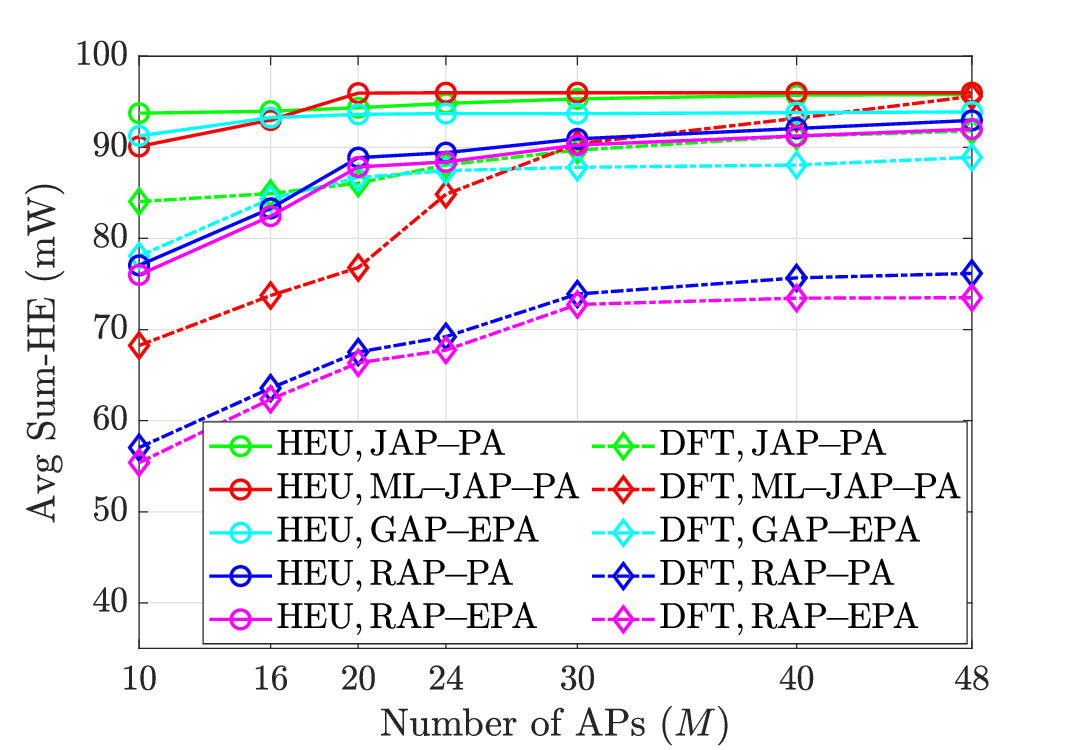}
	\caption{\small Performance of the proposed optimization solutions versus the number of APs ($ML = 480, N=10, \SEth = 10$ [bit/s/Hz], $K = 3, J = 4$, $\mathrm{PRF}^{\mathrm{I}} = 0$, $\mathrm{PRF}^{\mathrm{E}} = 3$).}
	\label{fig:HEvMbenchmarks}
\end{figure}

Table~\ref{table:execution_time} shows the execution time in [second/iteration] of the proposed solutions. We observe that the execution time of \textbf{JAP-PA} increases proportionally with the scale of the architecture, while the post-training execution time of \textbf{ML-JAP-PA} is nearly negligible regardless of the network scalability, and reduces the iteration time by $99.99\%$ at $M=40$. This highlights the advantage of real-time execution of the ML-based solution.

\begin{table}[t!]
 \small
\caption{Achievable minimum SE required per-IR } 
\vspace{-1em}
\footnotesize
\begin{center}
\renewcommand{\arraystretch}{1}
\begin{tabular}{|>{\centering\arraybackslash}m{2.28cm}|>{\centering\arraybackslash}m{0.77cm}|>{\centering\arraybackslash}m{0.77cm}|>{\centering\arraybackslash}m{0.77cm}|>{\centering\arraybackslash}m{0.77cm}|>{\centering\arraybackslash}m{0.77cm}|}
\hline
\multicolumn{1}{|c|}{\multirow{2}{*}{\textbf{Benchmarks}}} & \multicolumn{5}{c|}{\textbf{Minimum SE required per IR ($\SEth$ (bit/s/Hz))}}                                                \\ \cline{2-6} 
\multicolumn{1}{|c|}{}                                   & \multicolumn{1}{c|}{\textbf{$8$}} & \multicolumn{1}{c|}{\textbf{$10$}} & \multicolumn{1}{c|}{\textbf{$12$}} & \multicolumn{1}{c|}{\textbf{$15$}} & \multicolumn{1}{c|}{\textbf{$18$}}
\\ \hline
\textbf{HEU, JAP-PA}      & 10.28   & 11.34    & 12.66   & 15.51  & 18.54      \\ \hline
\textbf{DFT, JAP-PA}      & 14.17   & 14.70    & 14.80   & 16.47  & 17.02      \\ \hline
\textbf{HEU, GAP-EPA}      & 8.83   & 10.96    & 12.33   & 15.10  & 19.40      \\ \hline
\textbf{DFT, GAP-EPA}      & 8.93   & 10.87    & 14.62   & 15.04   & 17.06      \\ \hline
\textbf{HEU, ML-JAP-PA}      & 15.67   & 15.63    & 15.88   & 16.46  & 18.15      \\ \hline
\textbf{DFT, ML-JAP-PA}      & 14.16   & 14.21    & 14.34   & 16.54  & 17.34   \\ \hline
\textbf{HEU, RAP-PA}      & 17.41   & 17.35    & 17.44    & 17.33  & 16.22    \\ \hline
\textbf{DFT, RAP-PA}      & 17.02   & 17.14    & 17.61    & 17.38  & 14.73     \\ \hline
\textbf{HEU, RAP-EPA}      & 17.24   & 17.22    & 17.29    & 16.89  &  9.17     \\ \hline
\textbf{DFT, RAP-EPA}      & 16.80   & 16.94    & 17.37    & 16.26  &  7.42      \\ \hline
\end{tabular}%
\end{center}
\vspace{-0.5em}
\label{table:minSE}
\end{table}

Figure~\ref{fig:HEvMbenchmarks} compares the average sum-HE across different designs as the number of APs varies. Similar to the pattern observed in Fig.~\ref{fig:figureb} implementing the BD-RIS heuristic scattering optimization results in improvements of $12.4\%$, $20.8\%$, $13.9\%$, $22.1\%$, and $22.7\%$ over the low-complexity DFT BD-RIS for \textbf{JAP-PA}, \textbf{ML-JAP-PA}, \textbf{GAP-EPA}, \textbf{RAP-PA}, and \textbf{RAP-EPA}, respectively. Additionally, while the performance of \textbf{DFT, ML-JAP-PA} decreases proportionally as $M$ decreases, the heuristic approach enables \textbf{ML-JAP-PA} to achieve performance levels close to those of the proposed \textbf{JAP-PA} for $M > 20$. Another noteworthy observation is that \textbf{RAP-PA} outperforms \textbf{RAP-EPA} by only about $5\%$ for both heuristic and DFT scattering matrix designs, underscoring the significance of AP mode selection. In addition, we observe a saturation behavior in the sum-HE metric when the number of APs ($M$) becomes sufficiently large. It is important to recall that, due to the inherent non-linear characteristics of the NLEH model, the maximum energy that can be harvested per ER is 24 mW. As $M$ increases, the distance between APs and receivers decreases, allowing the SE QoS requirements to be satisfied with fewer APs dedicated to WIT. This frees up more APs for WPT, which, combined with the reduced path loss, boosts the harvested energy at the EUs and accelerates the rate at which they reach the saturation point. Therefore, we observe that with a denser AP deployment, each AP can operate at a lower transmit power, reducing the overall energy consumption of the network.

\begin{figure}[t]
	\centering
	\includegraphics[width=0.44\textwidth]{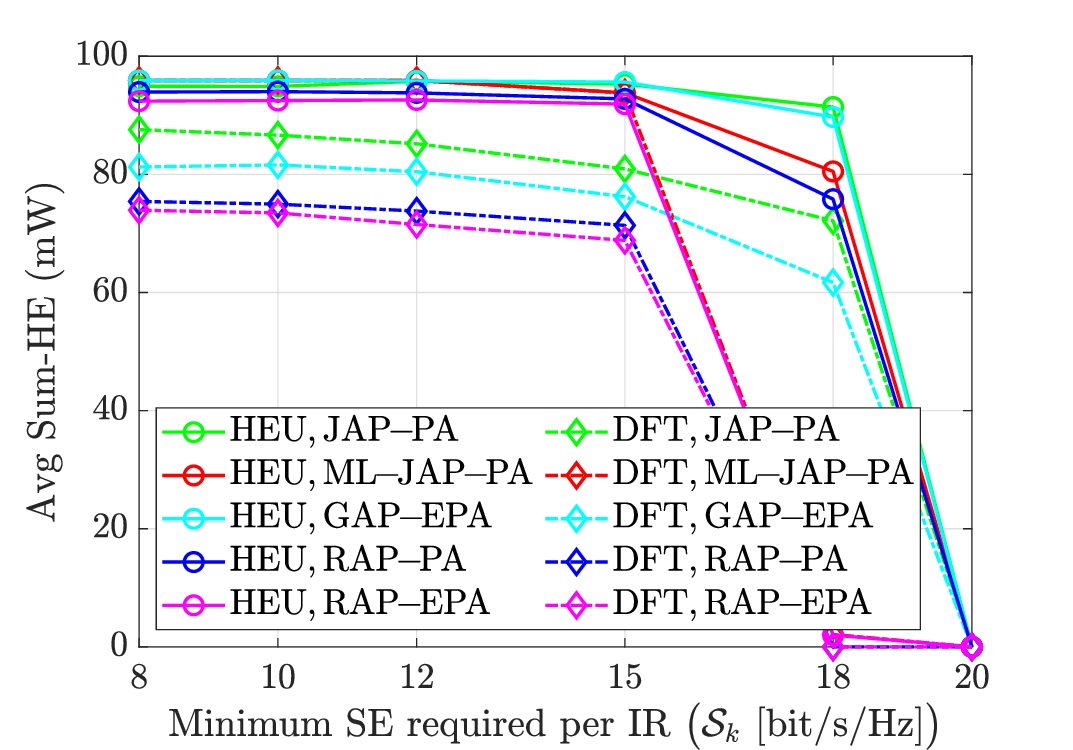}
	\caption{\small Performance of the proposed optimization solutions versus the minimum required SE per IR ($M = 40, L = 12, N=10, K = 3, J = 4$, $\mathrm{PRF}^{\mathrm{I}} = 0$, $\mathrm{PRF}^{\mathrm{E}} = 3$).}
	\label{fig:HEvSEbenchmarks}
\end{figure}

Figure~\ref{fig:HEvSEbenchmarks} illustrates the sum-HE gain verus the QoS SE requirements $\SEth$, and Table~\ref{table:minSE} lists the achievable min $\SEk$ values under increasing $\SEth$ values. We first observe a general trend: as the SE QoS requirement $\SEk$ increases, more APs are allocated to WIT. This, in turn, reduces the number of APs available for WPT, resulting in a significant drop in the sum-HE curves. Notably, while the benchmark schemes experience this decline at $\SEk > 15$ [bit/s/Hz], our proposed solutions can effectively support SE requirements exceeding $19$ [bit/s/Hz]. In particular, we observe that \textbf{JAP-PA} effectively allocates more I-APs to meet the minimum $\SEth$ requirement, demonstrating its adaptability to varying SE thresholds. In contrast, for \textbf{ML-JAP-PA}, the optimal solution at $\SEth \!=\! 18$~[bit/s/Hz] is achieved when paired with the heuristic scattering matrix but fails when combined with the DFT design. This can be attributed to the cardinal disadvantage of DRL-based solutions, which are \textit{penalty sensitive}. As $\SEth$ increases, $r[t]$ are penalized with $\lambda_{\mathrm{SE}}$, leading to instability reward patterns that greatly disturb the training process. Since \textbf{GAP-EPA} initially configures all APs as I-APs, it can readily satisfy the low $\mathcal{S}_k$ threshold and achieves a sum-HE gain only slightly lower than that of \textbf{JAP-PA}. Nevertheless, \textbf{GAP-EPA} achieves almost 10\% lower in sum-HE gain compared to \textbf{JAP-PA} in high QoS SE scenario due to the lack of transmit power allocation schemes. We can also observe that, due to the lack of optimal solution for $\qa$, both \textbf{RAP-PA} and \textbf{RAP-EPA} cannot allocate extra APs for WIT operation, making them unable to meet the requirements in high $\SEth$ scenarios. Overall, due to the heuristic optimization of the BD-RIS scattering matrices, we observe an approximate 25\% increase in sum-HE.

\begin{figure}[t]
	\centering
		\includegraphics[width=0.44\textwidth]{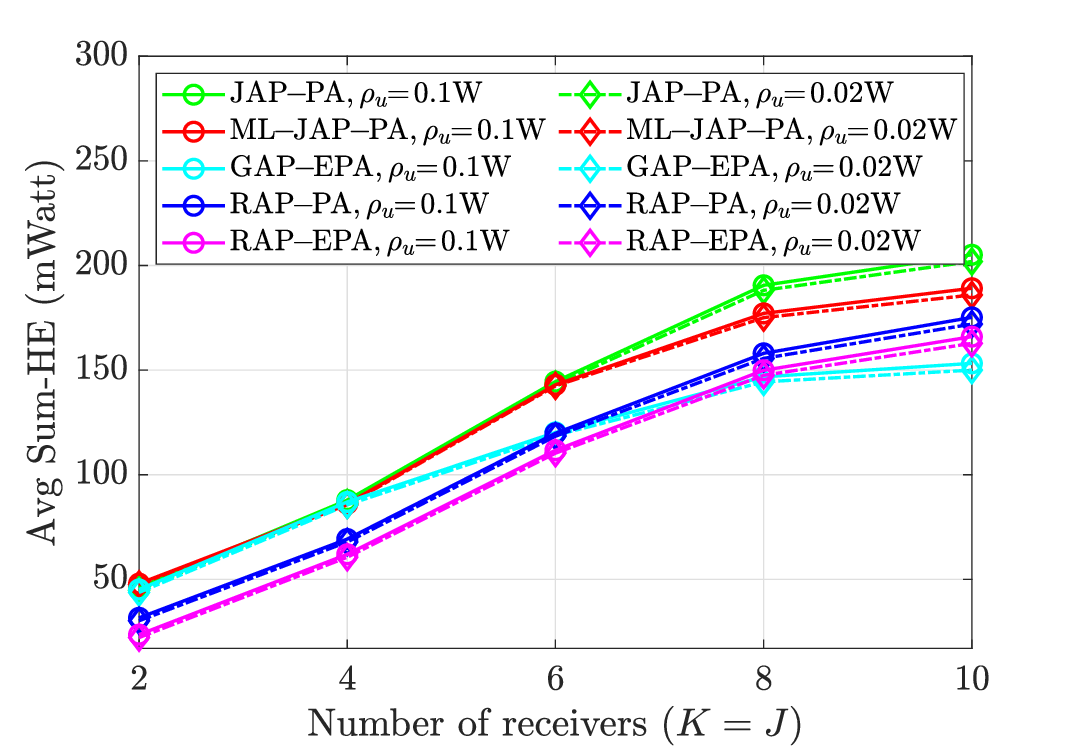}
		\caption{\small Performance of the proposed optimization solutions versus the number of receivers (\textbf{DFT-BD-RIS}, $ML = 480, M=30, \SEth = 10$ [bit/s/Hz], $\mathrm{PRF}^{\mathrm{I}} = 0$, $\mathrm{PRF}^{\mathrm{E}} = J-1$).}
		\label{fig:SHEvReceivers}
\end{figure}

Figure~\ref{fig:SHEvReceivers} illustrates the impact of pilot power $\rho_{u}$, PRF, and receiver scales on the sum-HE metric. We note that the channel estimation accuracy deteriorates as the pilot power decreases. By examining a range of pilot power values, we effectively demonstrate how the channel estimation errors influence the SWIPT performance. Furthermore, as the number of receivers increases, the NLEH gain per ER gradually decreases due to the fixed number of APs. Nevertheless, thanks to the effective joint optimization strategies, the overall sum-HE increases significantly, regardless of variations in the channel estimation accuracy. Although a lower $p_{u}$ increases the norm of the statistical channel estimation errors, the average NLEH value remains largely unaffected due to the consideration of joint optimization solutions. By scaling the number of receivers, we observe a significant impact on the achievable SE and HE gains when optimizing the power allocation schemes. Since \textbf{GAP-EPA} only searches for the best set of $\qa^{*}$ without optimizing $\ETAI$ and $\ETAE$, the WPT operation improves slightly in scenarios with a large number of receivers, resulting in the lowest sum-HE gain when serving 20 receivers. These numerical results clearly demonstrate that our proposed algorithms effectively maintain high performance and scalability, successfully addressing practical deployment challenges across a wide range of network sizes and conditions.

To conclude this subsection, we summarize the following key observations and its intrinsic laws based on the numerical analysis: \textit{i)} For both cases of heuristic and DFT scattering matrix, the proposed \textbf{JAP-PA} yields the best performance despite the long execution time. \textit{ii)} \textbf{ML-JAP-PA} can effectively assign I-APs and E-APs but becomes inefficient in implementing power control, leading to weaker performance compared to \textbf{JAP-PA}. \textit{iii)} The ML-based benchmark performs well in LoS-dominant scenarios, given the specified training hyper-parameters. However, without the heuristic scattering matrix to establish dominant LoS links, \textbf{ML-JAP-PA} shows suboptimal performance in scenarios with a lower number of APs and high $\SEth$, raising concerns about the consistency and adaptability of the ML-based approach. \textit{iv)} By jointly optimizing the AP mode selection $\qa$, the average sum-HE at the ERs achieves an approximately $20\%$ gain, thus proving its importance in our proposed CF-mMIMO SWIPT system.

\vspace{-0.5em}
\section{Conclusion}~\label{sec:conclusion}
We developed a comprehensive framework to optimize the interconnected scattering matrix, along with AP mode selection parameters, and AP power allocation coefficients for a BD-RIS-assisted CF-mMIMO SWIPT system. Utilizing closed-form expressions for the average sum-HE of ERs and the achievable SE of IRs, we formulated and solved a non-integer non-convex optimization problem via the robust SCA-based and autonomous DRL-based solutions. Our simulation results demonstrated that the proposed design significantly improves the SWIPT performance thanks to the proposed optimization schemes. Furthermore, our performance analysis revealed the advantages of PC in enhancing performance of EH applications. The integration of BD-RISs with the proposed optimization solutions is numerically shown to be effective in enhancing the SWIPT performance for long-term 6G networks. 

The practical relevance of our work is evident in scenarios such as smart factories, where CF-mMIMO provides seamless coverage and SWIPT enables wireless power delivery to large-scale IoT sensor networks. To address scalability challenges in ultra-dense deployments, a user-centric adaptation of the proposed design represents a promising direction for future research. Additionally, synchronization remains a critical challenge that must be addressed to fully avail of the potential of such systems.
Finally, future work could explore hybrid RIS architectures~\cite{hydrid:RIS}, distributed DRL frameworks deployed at the APs, and expand~\cite{cite:hua2024BdrisSwipt} to investigate the incorporation of stacked intelligent metasurfaces (SIM) and implement broader numerical/physical comparisons across BD-RIS, SIM and other RIS architectures.

\vspace{-0.5em}
\appendices

\section{Useful Derivations}\label{appendix:usefulDerivations}

We first present the following lemmas to facilitate further derivations. These terms are a function of individual components of the statistical CSI-based design, such as fading, interference, and path loss.

\begin{Lemma}~\label{lemma:1}
Leveraging the statistical independence of the direct and indirect links, the second moment of the aggregated RIS-assisted actual and estimated channels from AP $m$ to ER $j$ can be respectively obtained as:
\begin{align}
    \Ex\big\{ \big\Vert\gmjue \big\Vert^2 \big\} 
    &=  L \betamjeu \!+\! LN \bar{\zeta}_{mj} + \bar{\zeta}_{mj} \kappa (\gmjeulos)^\dag \gmjeulos.
\end{align}
Moreover, the second-order and fourth-order moments of the estimate of the channels towards ERs, i.e., $\Ex\big\{ \big\Vert\hgmjue \big\Vert^2 \big\}$ and $\Ex\big\{ \big\Vert\hgmjue \big\Vert^4 \big\}$ are given by
\begin{subequations}
 \begin{align}
    \Ex\big\{  \big\Vert\hgmjue \big\Vert^2  \big\} 
    &= L \gameumj + \bar{\zeta}_{mj} \kappa \big\Vert \gmjeulos \big\Vert^2,~\label{eq:2ndmoment_hgmjeu_jj}
\\ 
     \Ex\big\{  \big\Vert\hgmjue \big\Vert^4  \big\} &=( \bar{\zeta}_{mj} \kappa)^{2} \Vert \gmjeulos \Vert^4 \nonumber\\
    &\hspace{-4em}+ 2(L+1) \gameumj \bar{\zeta}_{mj} \kappa \big\Vert \gmjeulos \big\Vert^{\!2} +L(L +1)(\gameumj)^2.~\label{eq:4ndmoment_hgmjeu_jj}
\end{align}     
\end{subequations}
\end{Lemma}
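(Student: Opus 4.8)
The plan is to compute each moment by decomposing the estimated channel $\hgmjue$ into its deterministic LoS part and its zero-mean Gaussian NLoS part, and then exploiting the independence structure together with standard Gaussian moment identities. Recall from the channel estimation analysis that $\hgmjue \sim \CN\big(\sqrt{\bar{\zeta}_{mj}\kappa}\,\gmjeulos, \gameumj \qI_L\big)$. First I would write $\hgmjue = \sqrt{\bar{\zeta}_{mj}\kappa}\,\gmjeulos + \tildehgmjue$, where $\tildehgmjue \sim \CN(\boldsymbol{0}, \gameumj \qI_L)$ is the zero-mean fluctuation, so that $\gmjeulos$ is deterministic and $\tildehgmjue$ is independent of it. For the aggregated actual channel, the analogous decomposition $\gmjue \sim \CN\big(\sqrt{\bar{\zeta}_{mj}\kappa}\,\gmjeulos, (\betamjeu + N\bar{\zeta}_{mj})\qI_L\big)$ is already established in the channel model subsection.

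For the second moments, I would use $\Ex\{\|\qx\|^2\} = \|\Ex\{\qx\}\|^2 + \Ex\{\|\qx - \Ex\{\qx\}\|^2\}$. For $\hgmjue$ this gives $\|\sqrt{\bar{\zeta}_{mj}\kappa}\,\gmjeulos\|^2 + \Ex\{\|\tildehgmjue\|^2\} = \bar{\zeta}_{mj}\kappa\|\gmjeulos\|^2 + L\gameumj$, which is exactly~\eqref{eq:2ndmoment_hgmjeu_jj}; the term $\Ex\{\|\tildehgmjue\|^2\} = L\gameumj$ follows because $\tildehgmjue$ has $L$ i.i.d.\ entries each of variance $\gameumj$. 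The identical argument with variance $\betamjeu + N\bar{\zeta}_{mj}$ yields the stated $\Ex\{\|\gmjue\|^2\}$, noting that $(\gmjeulos)^\dag \gmjeulos = \|\gmjeulos\|^2$.

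The fourth moment~\eqref{eq:4ndmoment_hgmjeu_jj} is the main obstacle and requires more care. I would expand $\|\hgmjue\|^4 = \big(\|\sqrt{\bar{\zeta}_{mj}\kappa}\,\gmjeulos\|^2 + 2\re\{\sqrt{\bar{\zeta}_{mj}\kappa}\,(\gmjeulos)^\dag \tildehgmjue\} + \|\tildehgmjue\|^2\big)^2$ and take the expectation term by term. The cross terms that are odd in $\tildehgmjue$ vanish by zero mean. The surviving contributions are: the pure LoS term $(\bar{\zeta}_{mj}\kappa)^2\|\gmjeulos\|^4$; twice the product of the LoS norm with $\Ex\{\|\tildehgmjue\|^2\} = L\gameumj$ giving $2\bar{\zeta}_{mj}\kappa\|\gmjeulos\|^2 \cdot L\gameumj$; the squared real-part cross term, for which I would use $\Ex\{|(\gmjeulos)^\dag\tildehgmjue|^2\} = \gameumj\|\gmjeulos\|^2$ and the fact that for a circularly symmetric variable $\Ex\{(\re z)^2\} = \tfrac12\Ex\{|z|^2\}$, contributing $4\cdot\bar{\zeta}_{mj}\kappa\cdot\tfrac12\gameumj\|\gmjeulos\|^2 = 2\bar{\zeta}_{mj}\kappa\gameumj\|\gmjeulos\|^2$; and finally $\Ex\{\|\tildehgmjue\|^4\} = L(L+1)\gameumj^2$, which is the standard fourth moment of the squared norm of an $L$-dimensional complex Gaussian vector. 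Collecting the two middle contributions yields the coefficient $2(L+1)\gameumj\bar{\zeta}_{mj}\kappa\|\gmjeulos\|^2$, matching~\eqref{eq:4ndmoment_hgmjeu_jj}. The delicate points are getting the factor $L(L+1)$ for the Gaussian quartic norm correct and correctly accounting for the real-part cross term via circular symmetry; I would verify the former using $\Ex\{\|\tildehgmjue\|^4\} = (\Ex\{\|\tildehgmjue\|^2\})^2 + \mathrm{Var}\{\|\tildehgmjue\|^2\} = (L\gameumj)^2 + L\gameumj^2$.
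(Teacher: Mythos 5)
Your proof is correct and follows essentially the same route as the paper's: both decompose $\hgmjue$ into its deterministic LoS mean $\sqrt{\bar{\zeta}_{mj}\kappa}\,\gmjeulos$ plus the zero-mean Gaussian fluctuation $\tildehgmjue \sim \CN(\boldsymbol{0},\gameumj\qI_L)$, expand the quartic, discard the odd-order terms, and evaluate the survivors with standard Gaussian moment identities — the paper keeps the two complex cross terms separate and handles them with trace identities, whereas you group them as a real part and invoke circular symmetry, which is the same computation and yields the identical coefficient $2(L+1)\gameumj\bar{\zeta}_{mj}\kappa\Vert\gmjeulos\Vert^2$, and your mean-plus-variance verification of $\Ex\{\Vert\tildehgmjue\Vert^4\}=L(L+1)(\gameumj)^2$ matches the fourth-moment formula the paper cites. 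The only other difference is minor: for $\Ex\{\Vert\gmjue\Vert^2\}$ the paper re-derives the NLoS variance $N\bar{\zeta}_{mj}$ inside the lemma from the structure $\FmRISnlos\boldsymbol{\Theta}\qz_j$ using the unitarity of $\boldsymbol{\Theta}$, while you cite the distribution already stated in the channel-model subsection (which the paper established there by that same argument), so your shortcut is legitimate and not circular.
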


\begin{proof}
The second moment of the NLoS term in \eqref{eq:aggregate_indirect_link} is computed as
\vspace{-0.5em}
\begin{align}
    &\Ex \Big\{ \!
    \Big( \sqrt{\bar{\zeta}_{mj}} \qz_{j}^\dag \boldsymbol{\Theta}^\dag \!(\FmRISlos)^\dag \!+\! \sqrt{\frac{\bar{\zeta}_{mj}}{\kappa}} \qz_{j}^\dag \boldsymbol{\Theta}^\dag \!(\FmRISnlos)^\dag \Big)  \nonumber\\
    &\hspace{2em} \times \Big( \sqrt{\bar{\zeta}_{mj}} \FmRISlos \boldsymbol{\Theta} \zmjlos + \sqrt{\frac{\bar{\zeta}_{mj}}{\kappa}} \FmRISnlos \boldsymbol{\Theta} \qz_{j} \Big) 
    \Big\} \nonumber
\end{align}
\begin{align}
    &\stackrel{(a)}{=}
    \bar{\zeta}_{mj} \qz_{m}^\dag \boldsymbol{\Theta}^\dag (\FmRISlos)^\dag \FmRISlos \boldsymbol{\Theta} \qz_{j} \nonumber\\
    &\hspace{2em} + 
    \frac{\bar{\zeta}_{mj}}{\kappa} 
    \trace \Big( \boldsymbol{\Theta}^\dag \Ex \big\{ \!(\FmRISnlos)^\dag \FmRISnlos \big\} \boldsymbol{\Theta} \qz_{j} \qz_{j}^\dag  \Big)\nonumber
    \\
    &= \bar{\zeta}_{mj} \big\Vert \gmjeulos \big\Vert^2 + LN \frac{\bar{\zeta}_{mj}}{\kappa}, 
\end{align}
where the second term of (a) exploits $\qx^\dag \qy = \trace(\qy \qx^\dag)$, the elements of $\FmRISnlos$ are i.i.d. $\CN(0,1)$, the unitary constraint $\boldsymbol{\Theta}^\dag \boldsymbol{\Theta} = \qI_L$, and  $[\qz_{j} \qz_{j}^\dag]_{n,n} = 1$. Then, combining this result with the second-order moment of the LoS term in \eqref{eq:aggregate_indirect_link}, the desired result in~\eqref{eq:2ndmoment_hgmjeu_jj} is obtained.

The fourth-order moment can be expressed as 
\begin{align}~\label{eq:4thmoment_hgmjeu_jj}
    &\Ex\big\{ \big\Vert\hgmjue \big\Vert^4 \big\} 
    = \Ex\Big\{ \Big\Vert \bar{\zeta}_{mj} \kappa \big (\gmjeulos)^\dag \gmjeulos    
    + 
    \sqrt{\!\bar{\zeta}_{mj} \kappa} (\gmjeulos )^\dag  \tildehgmjue \nonumber\\
    &\hspace{5.5em} + \sqrt{\bar{\zeta}_{mj} \kappa} (\tildehgmjue)^\dag \gmjeulos 
    + (\tildehgmjue)^\dag \tildehgmjue
    \Big\Vert^{2} \Big\}  
    \nonumber\\
    &\! = \! \Ex \Big\{ \!
    ( \bar{\zeta}_{mj} \kappa)^{2} \Vert \gmjeulos \Vert^4 
    \!+\!
    \Vert \tildehgmjue \Vert^4
    \!+\! 2\bar{\zeta}_{mj} \kappa (\gmjeulos\!)^\dag \gmjeulos  (\tildehgmjue)^\dag \tildehgmjue  
    \nonumber\\
    &+ \bar{\zeta}_{mj} \kappa  (\tildehgmjue \!)^\dag \gmjeulos (\gmjeulos \!)^\dag \tildehgmjue 
    + \bar{\zeta}_{mj} \kappa (\gmjeulos \!)^\dag \tildehgmjue (\tildehgmjue \!)^\dag \gmjeulos\Big\}.  
   \end{align}
Now, by exploiting the property $\Ex\{ \qx^\dag \qy \} = \Ex\{ \trace (\qy \qx^\dag)$ and noticing that $\tildehgmjue$ is the zero-mean NLoS component of $\hgmjue$, i.e., $\tildehgmjue \sim \CN(\boldsymbol{0}, \gameumj \qI_{L})$, we get
\begin{align}
    &\Ex\big\{ \big\Vert\hgmjue \big\Vert^4 \big\} = ( \bar{\zeta}_{mj} \kappa)^{2} \Vert \gmjeulos \Vert^{4} 
    \!+\! 
    L(L \!+\! 1)(\gameumj)^2 \nonumber\\
    &\hspace{3em} + 
    2\bar{\zeta}_{mj} \kappa \big\Vert \gmjeulos \big\Vert^{2} \Ex\big\{ (\tildehgmjue )^\dag  \tildehgmjue \big\} 
    \nonumber\\
    &\hspace{3em}
    + \bar{\zeta}_{mj} \kappa (\gmjeulos )^\dag  \Ex \big\{  \tildehgmjue (\tildehgmjue )^\dag \big\} \gmjeulos 
    \nonumber\\
    &\hspace{3em}
    +\bar{\zeta}_{mj} \kappa \trace\Big(\gmjeulos (\gmjeulos)^\dag \Ex \big\{ \tildehgmjue (\tildehgmjue)^\dag \big\}  \Big),
    \nonumber\\
    &= ( \bar{\zeta}_{mj} \kappa)^{2} \Vert \gmjeulos \Vert^4 
    + L(L + 1)(\gameumj)^2 + 2L \gameumj \bar{\zeta}_{mj} \kappa \big\Vert \gmjeulos \big\Vert^{2} \nonumber\\
    &+ \gameumj \bar{\zeta}_{mj} \kappa (\gmjeulos )^{H} \qI_{L} \gmjeulos + \gameumj \bar{\zeta}_{mj} \kappa \trace \Big(\gmjeulos (\gmjeulos )^{H} \qI_{L} \Big),
    \nonumber
\end{align}
where the final equality invokes the fourth-moment derivation of a zero-mean Gaussian random variable, as detailed in [Appendix A, Eq. (29), 1]. After some manipulations,  the desired result is obtained as~\eqref{eq:4ndmoment_hgmjeu_jj}.
\end{proof}

\vspace{-1em}
\begin{Lemma}~\label{lemma2}
    For a random vector $\qx \sim \CN(\boldsymbol{0},\alpha \qI_{M})$ and a projection matrix $\qB = \qI_{M} - \qR (\qR^{\dag} \qR  )^{-1} \qR^{\dag}$ where $\qR \in \C^{M \times N}$ and $\Ex\{ \qB \} = \frac{M - N}{M} \qI_{M}$~\cite{Mohamed_projection_2024}, it holds: $\Ex\big\{ \vert \qx^{\dag} \qB \qx \vert^2 \big\} = (M-N) (M - N + 1) \alpha^2$.
\end{Lemma}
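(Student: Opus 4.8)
The plan is to evaluate the second moment of the Hermitian quadratic form $\qx^\dag \qB \qx$ by conditioning on the random projector $\qB$ (equivalently, on $\qR$) and then averaging, exploiting the fact that $\qx$ is statistically independent of $\qB$. First I would record the structural properties of $\qB$: since $\qR$ has full column rank $N$ almost surely, $\qB$ is the orthogonal projector onto an $(M-N)$-dimensional subspace, so $\qB = \qB^\dag = \qB^2$ and, crucially, $\trace(\qB) = \trace(\qB^2) = M-N$ \emph{deterministically}, irrespective of the realization of $\qR$. Because $\qB$ is Hermitian, $\qx^\dag \qB \qx$ is real, so $\vert \qx^\dag \qB \qx \vert^2 = (\qx^\dag \qB \qx)^2$ and no separate conjugate bookkeeping is needed.

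Next I would invoke the standard fourth-order moment identity for a Hermitian quadratic form of a circularly symmetric complex Gaussian vector: for $\qx \sim \CN(\boldsymbol{0}, \alpha \qI_M)$ and any fixed Hermitian $\qA$, one has $\Ex\{(\qx^\dag \qA \qx)^2\} = \alpha^2\big[(\trace \qA)^2 + \trace(\qA^2)\big]$. Setting $\qA = \qB$ and using idempotency yields the conditional expectation $\Ex\{(\qx^\dag \qB \qx)^2 \mid \qB\} = \alpha^2\big[(M-N)^2 + (M-N)\big] = \alpha^2 (M-N)(M-N+1)$. Since this value does not depend on the particular realization of $\qB$, the outer expectation over $\qR$ is trivial and delivers the claimed $\Ex\{\vert \qx^\dag \qB \qx \vert^2\} = (M-N)(M-N+1)\alpha^2$.

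The main obstacle is not the algebra but securing the two structural facts that make the conditioning valid: the independence of $\qx$ and $\qB$, which lets $\qB$ be pulled out of the Gaussian averaging, and the deterministic traces $\trace(\qB)=\trace(\qB^2)=M-N$, which rely on $\qR$ having full column rank so that $\qB$ always projects onto a subspace of fixed dimension $M-N$. If the fourth-moment identity itself needs justification, I would diagonalize $\qB$ and write $\qx=\sqrt{\alpha}\,\qz$ with $\qz \sim \CN(\boldsymbol{0},\qI_M)$, so that in the eigenbasis $\qx^\dag \qB \qx = \alpha \sum_{i=1}^{M-N} \vert z_i \vert^2$ with i.i.d.\ $z_i \sim \CN(0,1)$; then $\Ex\{\vert z_i\vert^2\}=1$ and $\Ex\{\vert z_i\vert^4\}=2$ give $\Ex\{(\alpha\sum_i \vert z_i\vert^2)^2\} = \alpha^2\big[2(M-N) + (M-N)(M-N-1)\big] = \alpha^2(M-N)(M-N+1)$ directly.

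As a consistency check, the first-moment statement $\Ex\{\qB\}=\frac{M-N}{M}\qI_M$ quoted in the lemma is exactly what the same conditioning argument predicts for the linear term: $\Ex\{\qx^\dag \qB \qx\} = \trace\big(\Ex\{\qB\}\,\Ex\{\qx \qx^\dag\}\big) = \trace\big(\tfrac{M-N}{M}\qI_M \cdot \alpha \qI_M\big) = \alpha(M-N)$, matching $\alpha\,\Ex\{\trace(\qB)\}$. This confirms the normalization of $\qB$ used throughout the surrounding derivations, although the explicit form of $\Ex\{\qB\}$ is not itself required for the second-moment computation above.
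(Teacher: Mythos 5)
Your proof is correct and rests on the same skeleton as the paper's: condition on $\qB$ (using the independence of $\qx$ and $\qB$), apply the second-moment identity for Hermitian Gaussian quadratic forms, $\Ex\{(\qx^\dag\qA\qx)^2\} = \alpha^2\big[(\trace\qA)^2 + \trace(\qA^2)\big]$, and then evaluate the two trace terms. The difference lies in how those traces are handled. The paper evaluates them \emph{in expectation}: it writes $\trace\big(\Ex\{\qB\qB^\dag\}\big) = \trace\big(\Ex\{\qB\}\big) = \trace\big(\tfrac{M-N}{M}\qI_M\big)$ using the quoted first moment of $\qB$, and for the squared-trace term it eigendecomposes $\qB = \qU\LAMBDA\qU^\dag$ and asserts that all $M$ eigenvalues equal $\tfrac{M-N}{M}$ --- a statement that is literally false for a projector (whose eigenvalues are $M-N$ ones and $N$ zeros), although the eigenvalue sum $M-N$ it implies is right, so the final answer is unaffected. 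Your route avoids this entirely: you observe that $\trace(\qB) = \trace(\qB^2) = M-N$ holds \emph{deterministically} whenever $\qR$ has full column rank, so the conditional second moment is already the constant $\alpha^2(M-N)(M-N+1)$ and the outer expectation --- and hence the value of $\Ex\{\qB\}$ itself --- is never needed; this is cleaner and more rigorous than the paper's bookkeeping. Your fallback derivation of the quadratic-form identity (diagonalize, reduce to $\alpha\sum_{i=1}^{M-N}|z_i|^2$ with $\Ex\{|z_i|^4\}=2$) also makes the argument self-contained, whereas the paper outsources that step to a cited lemma.
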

\begin{proof}
   Using the fact that $\qx$ is independent of $\qB$, we have 
    \begin{align}
        \Ex\big\{ \vert \qx^{\dag} \qB \qx \vert^2 \big\}
        &\stackrel{(a)}{=}
        \alpha^2 \Ex\big\{ \big\vert \trace( \qB) \big\vert^2 \big\}
        + \alpha^2 \Ex\big\{ \trace\big( \qB \qB^{\dag} \big) \big\} 
        \nonumber\\
        &\stackrel{(b)}{=}
        \alpha^2 \Ex\big\{ \big\vert \trace( \qU^{H} \LAMBDA \qU ) \big\vert^2 \big\}
        + \alpha^2  \trace\big( \Ex\big\{ \qB \qB^{\dag}  \big\} \big)  
        \nonumber\\
        &=
        \alpha^2 \Ex\big\{ \big\vert \trace(  \LAMBDA \qU \qU^{H} ) \big\vert^2 \big\}
        + \alpha^2  \trace\big( \Ex\big\{ \qB \big\} \big) 
        \nonumber\\ 
        &=
        \alpha^2 \bigg( \sum\nolimits^{M}_{m=1}{\lambda_{m}}  \bigg)^{2}
        + \alpha^2  \trace\bigg( \frac{M-N}{M} \qI_{M} \bigg) 
        \nonumber\\
        &=
        \big( (M-N)^2 + (M - N) \big) \alpha^{2},
    \end{align}
    where (a) follows \cite[Lemma 4]{cite:Chien:TWC:2022} for a random $\qx$ and a given $\qB$. In (b), we decompose  $\qB = \qU \boldsymbol{\Lambda} \qU^{H}$, where $\qU \in \C^{M \times M}$ is a complex unitary matrix and $\boldsymbol{\Lambda}$ is a diagonal matrix containing $M$ eigenvalues, whose value 
    is $\lambda_{m} = \frac{M - N}{M}$. The proof is obtained with further algebraic manipulations.
\end{proof}

\section{Proof of Theorem 1}~\label{appendix:B}
We note that the derivations for $\mathrm{DS}_k$, $\mathrm{BU}_k$ and $\mathrm{EUI}_{kj}$ in~\eqref{eq:SINE:general} follow a similar procedure as in~\cite{Mohammadi:JSAC:2023}. Due to the consideration of PC, the computation of \eqref{eq:IU_interference} adopts different approach. Therefore, we outline the proof of terms related to PC. To this end, we have
\begin{align}~\label{eq:IUI_kkp}
    &\sum\nolimits_{k'\in \K \setminus k} \Ex \vert \mathrm{IUI}_{kk'} \vert^2 
    \nonumber\\
    &= 
    \sum\nolimits_{k'\in \K \setminus k} \Ex \Big\{ \Big\vert  \sum\nolimits_{m \in \M} \sqrt{\rho_d a_m \etamkpI}   (\hgmkue)^\dag \! \wimkp^{\PZF} \Big\vert^{2}  \Big\} 
    \nonumber\\
    &\hspace{0em}
    +\sum\nolimits_{k'\in \K \setminus k} \Ex \Big\{  \Big\vert  \sum\nolimits_{m \in \M}  \sqrt{\rho_d a_m  \etamkpI}  (\gtilmkue)^\dag \wimkp^{\PZF} \Big\vert^{2}  \Big\} 
    \nonumber
\end{align}
\begin{align}
    &\stackrel{(a)}{=} 
     \sum\nolimits_{k'\in \Pir \setminus k}  \Big(  \sum\nolimits_{m \in \M} \sqrt{\rho_d a_m \! \etamkpI (L-K) \gamuemk} \Big)^{\!2} 
     \nonumber\\
      &
     +\sum\nolimits_{k'\in \K \setminus k}  \sum\nolimits_{m \in \M} \rho_d a_m  \etamkpI (\betamkue - \gamuemk),
\end{align}
where (a) follows from that ZF suppresses interference towards all the IRs unless they share the same pilots, i.e.,
\begin{align}~\label{eq:eq76}
    &(\hgmkue)^\dag \wimkp^{\PZF} 
    \nonumber\\
    &= \sqrt{(L-K)\gamuemk}(\qe_{k}^{\mathrm{I}})^\dag \big(\Ghmu\big)^\dag \Ghmu \Big(  \big(\Ghmu\big)^\dag \Ghmu \Big)^{-1}\qe_{k'}^{\mathrm{I}}  \nonumber\\
    &=
    \begin{cases}
      0 & k'\notin\Pir\\
      \sqrt{(L-\tau_{\mathcal{K}})\gamuemk} & k'\in\Pir.
    \end{cases}
\end{align}
Furthermore, due to the statistical independence between $\gtilmkue$ and normalized $\wimk^{\PZF}$, we compute $\Ex \big\{\big\vert (\gtilmkue)^\dag  \wimkp^{\PZF} \big\vert^2\big\} $ as:
\begin{align}~\label{eq:eq77}
\Ex \Big\{(\wimkp^{\PZF})^\dag \gtilmkue(\gtilmkue)^\dag  \wimkp^{\PZF}\Big\}
= \big(\betamkue - \gamuemk \big). 
\end{align}

\section{Proof of Theorem 2}~\label{appendix:C}
The proof of Theorem~2 applies a similar methodology as in Appendix \ref{appendix:B} for Theorem \ref{Theorem:SE:PPZF}, with the difference lying in the analysis of the expectation terms in \eqref{eq:El_average}. 
Considering~\eqref{eq:El_average}, we first derive $\Ex \Big\{\! \big\vert \big(\hgmjue\big)^\dag \!\wemjp^{\PMRT} \big\vert^2 \!\Big\}$ as
\begin{align}~\label{eq:Exgmjwemj}
\Ex \big\{ \big\vert \big(\gmjue \big)^\dag \!\wemj^{\PMRT} \big\vert^2 \!\big\}  \!=\!
 \begin{cases} \frac{1}{\alpha_{mj}^{\PMRT}} 
             \big(\Psi_1 \!+ \!\Psi_2 \big), &\!\!\!\!\!\! j \!= j'\\  
 \frac{1}{\alpha_{mj'}^{\PMRT}}
            \! \Big( \!
            \Phi_1 \!+\!
            \Phi_2
            \Big), &\!\!\!\!\!\!  j,j' \in \Per, j\neq j'.
 \end{cases}
\end{align}
In what follows, we provide expressions and derivations of $\Psi_1$, $\Psi_2$, $\Phi_1$, and $\Phi_2$. Noticing that $\hgmjue \sim \CN( \sqrt{\bar{\zeta}_{mj} \kappa} \gmjeulos, \gameumj \qI_{L})$, $\Psi_1$ can be obtained as
\begin{align}
    \Psi_1 
    &=
    \Ex\bigg\{ \!
    \bigg\vert\! \Big( \sqrt{ \bar{\zeta}_{mj} \kappa} \big(\gmjeulos \big)^\dag + \big( \tildehgmjue \big)^\dag \Big)  \qB_m
    \nonumber\\
    &\hspace{6em}\times\Big(  \sqrt{\bar{\zeta}_{mj} \kappa} \gmjeulos + \tildehgmjue \Big)
    \bigg\vert^2 
    \bigg\}
    \nonumber
\end{align}
\begin{align}
    &=\!
    (\bar{\zeta}_{mj} \kappa)^2 \Ex \Big\{ \! \Big\vert
    (\gmjeulos)^\dag \qB_{m}^\dag \gmjeulos \Big\vert^{2} \!
     \Big\}
     \!\!+\!\!
    \Ex \Big\{ \! \Big\vert
    (\tildehgmjue)^\dag \qB_{m}^\dag \tildehgmjue \Big\vert^{2}
    \! \Big\}
    \nonumber
\end{align}
\begin{align}
    & \hspace{1em} \!+\!
    \bar{\zeta}_{mj} \kappa 
    \Big( 
    \big(\gmjeulos \big)^\dag \Ex\Big\{ \qB_{m}^\dag 
        \Ex\big\{ \tildehgmjue (\tildehgmjue)^\dag \big\}
    \qB_{m} 
    \gmjeulos
    \Big\}
    \Big)
    \nonumber\\
    & \!+\!
    \bar{\zeta}_{mj} \kappa
    \Ex\Big\{ \trace\Big(
        \gmjeulos (\gmjeulos)^\dag \qB_{m}  \tildehgmjue (\tildehgmjue)^\dag \qB_{m}^\dag
    \Big)
    \Big\}
    \nonumber\\
    &  \!+\!
    2\bar{\zeta}_{mj} \kappa
    \Ex\Big\{
    \big(\gmjeulos \big)^\dag \qB_m \gmjeulos 
    \Big\} 
    \Ex\Big\{
    (\tildehgmjue)^\dag \qB_m \tildehgmjue 
    \Big\},      
\end{align}
where we have exploited the statistical independence between $\gmjeulos$ and $\tildehgmjue$ and $\trace(\qx^\dag \qy) = \trace(\qy\qx^\dag)$ for the fourth term. Now, by invoking Lemma~\ref{lemma2} for a random variable $\tildehgmjue \sim \CN(\boldsymbol{0}, \gameumj \qI_{L})$ and a given deterministic matrix $\qB_m \in \C^{L \times L}$, we have
\begin{align}
    \Psi_1     &{\approx}
    (\bar{\zeta}_{mj} \kappa)^2
    \Ex \Big\{ 
    (\gmjeulos)^\dag \qB_{m}^\dag \gmjeulos 
    \Big\}
    \Ex \Big\{
    (\gmjeulos)^\dag \qB_m \gmjeulos 
    \Big\}
    \nonumber\\
    & \hspace{1em} \!+\!
    \big( \gameumj \big)^{\!2}
    \Big( 
    \big(L \!-\! K \big)^{2} + \big(L \!-\! K \big)
    \Big)
    \nonumber\\
    & \hspace{1em} \!+
    \frac{L-K}{L}
    \bar{\zeta}_{mj} \kappa     
        \big(\betamjeu \!+\! N\bar{\zeta}_{mj} \!-\! \gameumj \big)
   (\gmjeulos)^\dag \gmjeulos 
    \nonumber\\
    & \hspace{1em} +
    \frac{L-K}{L}\bar{\zeta}_{mj} \kappa
    \big(\betamjeu + N\bar{\zeta}_{mj} - \gameumj \big)
     \trace\Big( \gmjeulos (\gmjeulos)^\dag \Big)
    \nonumber\\
    & \hspace{1em} \!+\!
    2\bar{\zeta}_{mj} \kappa \frac{L-K}{L} (\gmjeulos)^\dag \gmjeulos (L-K)
     \gameumj,
\end{align}
where the first and the sixth terms exploit the approximation $\Ex\{x^2\} \approx (\Ex\{x\})^2$. To this end, after some manipulation, the result in~\eqref{eq:Psi_1} is obtained. Following similar steps, we can derive $\Phi_1$ as follows: 
\begin{align}~\label{eq:Phi_1}
    &\Phi_1 
    \!\approx\!
    \bigg(\frac{L-K}{L} \!\bigg)^{\!2} \!\! \Big(\kappa^{2} \bar{\zeta}_{m,j} \bar{\zeta}_{m,j'} \!\Big) \big\Vert \gmjeulos \big\Vert^{2} \big\Vert \gmjpeulos \big\Vert^{2}
    \nonumber\\
    & \hspace{1em} \!+
    \big(L\!-\!K \big) \big(L\!-K+\!1 \big) \Bigg(\! \gameumj \frac{ \betamjpeu \!+\! N \bar{\zeta}_{mj'} }{ \betamjeu \!+\! N \bar{\zeta}_{mj} } \!\Bigg)^{\!2}
    \nonumber\\
    & \hspace{1em} +
    \bigg(\frac{L-K}{L} \bigg)\! \Big( N \bar{\zeta}_{mj'} + \betamjpeu - \gameumjp \Big)\bar{\zeta}_{mj} \kappa \big\Vert \gmjeulos \big\Vert^{2}
    \nonumber\\
    & \hspace{1em} +
    \bigg(\frac{L-K}{L} \bigg) \Big(N\bar{\zeta}_{mj} +\betamjeu - \gameumj \Big)\bar{\zeta}_{mj'}\kappa \big\Vert \gmjpeulos \big\Vert^{2}
    \nonumber\\
    & \hspace{1em} \!+\!
    \frac{2(L\!-\!K)^2}{L} 
    \Bigg(\!\! \gameumj \frac{ \betamjpeu \!+\! N \bar{\zeta}_{mj'} }{ \betamjeu \!+\! N \bar{\zeta}_{mj} } \!\Bigg)
    \sqrt{\!\kappa^{2} \bar{\zeta}_{m,j} \bar{\zeta}_{m,j'}\!}
    \nonumber\\
    & \hspace{2em} \!\times\! \Big( (\gmjpeulos)^{H} \gmjeulos + (\gmjeulos)^{H} \gmjeulos \Big).
\end{align} 
By exploiting Lemma~1 and the independence between the statistics of estimated channel $\hgmjue$, channel estimation error, $\gtilmjeu$, and projection matrix, $\qB_m$, we can obtain $\Psi_2$ as
\begin{align}
    &\Ex \Big\{\! \trace \Big(
        \hgmjue \big(\hgmjue \big)^\dag \qB_m \gtilmjeu \big( \gtilmjeu \big)^\dag \qB_m^\dag \Big)
        \Big\} 
    \\
    &\hspace{0.5em} =
    \frac{L - K}{L}
        \Big(\betamjeu \!+\! N\bar{\zeta}_{mj} - \gameumj \Big) \Big( L\gameumjp\! +\! \bar{\zeta}_{mj'} \kappa \big\Vert \gmjpeulos \big\Vert^2 \Big). \nonumber
\end{align}
Since $\Phi_2$ has the same properties as $\Phi_1$, similar algebraic manipulations can be applied.

Next, we compute the third expectation term in~\eqref{eq:El_average}, for the case $\forall j, j' \in \mathcal{P}_x$, as
\begin{align}~\label{eq:thirdExp}
    \Ex \Big\{ \Big\vert \big(\gmjue\big)^\dag\wimk^{\PZF}  \Big\vert^2  \Big\} 
    &\!\!\stackrel{(a)}{=}\!\!  \Ex\Big\{ \big(\gmjue\big)^{\!\dag} \Ex\Big\{\! (\wimk^{\PZF})^\dag \wimk^{\PZF} \!\Big\}  \gmjue \!\Big\} \nonumber\\
    &=\! \betamjeu 
        \!+\! N \bar{\zeta}_{mj} \!+\! \frac{\bar{\zeta}_{mj}}{L} \kappa \big\Vert \gmjeulos \big\Vert^{2}, 
\end{align}
where (a) exploits that $\gmjue$ is independent of $\wimk^{\PZF}$ for ER $j$ and IR $k$ and Lemma~1.  To this end, by substituting the final results for~\eqref{eq:Exgmjwemj} and~\eqref{eq:thirdExp} into~\eqref{eq:El_average}, after some manipulations, we get the desired result in~\eqref{eq:El_average:PPZF}. 


\begin{thebibliography}{10}
\providecommand{\url}[1]{#1}
\csname url@samestyle\endcsname
\providecommand{\newblock}{\relax}
\providecommand{\bibinfo}[2]{#2}
\providecommand{\BIBentrySTDinterwordspacing}{\spaceskip=0pt\relax}
\providecommand{\BIBentryALTinterwordstretchfactor}{4}
\providecommand{\BIBentryALTinterwordspacing}{\spaceskip=\fontdimen2\font plus
\BIBentryALTinterwordstretchfactor\fontdimen3\font minus \fontdimen4\font\relax}
\providecommand{\BIBforeignlanguage}[2]{{%
\expandafter\ifx\csname l@#1\endcsname\relax
\typeout{** WARNING: IEEEtran.bst: No hyphenation pattern has been}%
\typeout{** loaded for the language `#1'. Using the pattern for}%
\typeout{** the default language instead.}%
\else
\language=\csname l@#1\endcsname
\fi
#2}}
\providecommand{\BIBdecl}{\relax}
\BIBdecl

\bibitem{Hua:WCNC:2024}
T.~D. Hua, M.~Mohammadi, H.~Q. Ngo, and M.~Matthaiou, ``Cell-free massive {MIMO SWIPT} with beyond diagonal reconfigurable intelligent surfaces,'' in \emph{Proc. IEEE WCNC}, Apr. 2024, pp. 1--6.

\bibitem{Alsaba:TuT:2018}
Y.~Alsaba, S.~K.~A. Rahim, and C.~Y. Leow, ``Beamforming in wireless energy harvesting communications systems: A survey,'' \emph{{IEEE} Commun. Surveys Tuts.}, vol.~20, no.~2, pp. 1329--1360, Secondquarter 2018.

\bibitem{cite:HienNgo:cf02:2018}
H.~Q. Ngo, L.-N. Tran, T.~Q. Duong, M.~Matthaiou, and E.~G. Larsson, ``On the total energy efficiency of cell-free massive {MIMO},'' \emph{{IEEE} Trans. Green Commun. Netw.}, vol.~2, no.~1, pp. 25--39, Mar. 2018.

\bibitem{mohammadi2024next}
M.~Mohammadi, Z.~Mobini, H.~Q. Ngo, and M.~Matthaiou, ``Next-generation multiple access with cell-free massive {MIMO},'' \emph{Proc. {IEEE}}, vol. 112, no.~9, pp. 1372--1420, Sept. 2024.

\bibitem{cite:reviewer3a}
E.~Shi \emph{et~al.}, ``{RIS}-aided cell-free massive {MIMO} systems for {6G}: Fundamentals, system design, and applications,'' \emph{Proc. IEEE}, vol. 112, no.~4, pp. 331--364, Jun. 2024.

\bibitem{cite:reviewer3b}
------, ``Wireless energy transfer in {RIS}-aided cell-free massive {MIMO} systems: Opportunities and challenges,'' \emph{{IEEE} Commun. Mag.}, vol.~60, no.~3, pp. 26--32, Mar. 2022.

\bibitem{Kaixi:2024:WCL}
K.~Yang, Y.~Zhang, L.~Chen, C.~Tang, H.~Fang, and L.~Yang, ``On the performance of {RIS}-aided {CF}-{IoE}-{SWIPT} network over correlated {Rician} fading channels,'' \emph{{IEEE} Wireless Commun. Lett.}, vol.~13, no.~10, pp. 2692--2696, Oct. 2024.

\bibitem{Kaixi:2021:China}
Z.~Yang and Y.~Zhang, ``Beamforming optimization for {RIS}-aided {SWIPT} in cell-free {MIMO} networks,'' \emph{China Commun.}, vol.~18, no.~9, pp. 175--191, Sept. 2021.

\bibitem{Mohammadi:TC:2024}
M.~Mohammadi, H.~Q. Ngo, and M.~Matthaiou, ``Phase-shift and transmit power optimization for {RIS}-aided massive {MIMO} {SWIPT} {IoT} networks,'' \emph{{IEEE} Trans. Commun.}, vol.~73, no.~1, pp. 631--647, Jan. 2025.

\bibitem{cite:R1}
K.~An \emph{et~al.}, ``Exploiting multi-layer refracting ris-assisted receiver for hap-swipt networks,'' \emph{{IEEE} Trans. Wireless Commun.}, vol.~23, no.~10, pp. 12\,638--12\,657, Oct. 2024.

\bibitem{cite:HongyuLi:BDRISoverview01:2023}
H.~Li, S.~Shen, M.~Nerini, and B.~Clerckx, ``Reconfigurable intelligent surfaces 2.0: Beyond diagonal phase shift matrices,'' \emph{{IEEE} Commun. Mag.}, vol.~62, no.~3, pp. 102--108, Mar. 2024.

\bibitem{Li:JSAC:2023}
H.~Li, S.~Shen, and B.~Clerckx, ``Beyond diagonal reconfigurable intelligent surfaces: A multi-sector mode enabling highly directional full-space wireless coverage,'' \emph{{IEEE} J. Sel. Areas Commun.}, vol.~41, no.~8, pp. 2446--2460, Aug. 2023.

\bibitem{Wang:JIOT:2020}
X.~Wang, A.~Ashikhmin, and X.~Wang, ``Wirelessly powered cell-free {IoT: Analysis} and optimization,'' \emph{{IEEE} Internet Things J.}, vol.~7, no.~9, pp. 8384--8396, Sept. 2020.

\bibitem{Demir:TWC:2021}
O.~T. Demir and E.~Bj\"{o}rnson, ``Joint power control and {LSFD} for wireless-powered cell-free massive {MIMO},'' \emph{{IEEE} Trans. Wireless Commun.}, vol.~20, no.~3, pp. 1756--1769, Mar. 2021.

\bibitem{Femenias:TCOM:2021}
G.~Femenias, J.~García-Morales, and F.~Riera-Palou, ``{SWIPT}-enhanced cell-free massive {MIMO} networks,'' \emph{{IEEE} Trans. Commun.}, vol.~69, no.~8, pp. 5593--5607, Aug. 2021.

\bibitem{Xinjiang:TWC:2021}
X.~Xia, P.~Zhu, J.~Li, H.~Wu, D.~Wang, Y.~Xin, and X.~You, ``Joint user selection and transceiver design for cell-free with network-assisted full duplexing,'' \emph{{IEEE} Trans. Wireless Commun.}, vol.~20, no.~12, pp. 7856--7870, Dec. 2021.

\bibitem{Zhang:IoT:2022}
Y.~Zhang, W.~Xia, H.~Zhao, W.~Xu, K.-K. Wong, and L.~Yang, ``Cell-free {IoT} networks with {SWIPT: Performance} analysis and power control,'' \emph{{IEEE} Internet Things J.}, vol.~9, no.~15, pp. 13\,780--13\,793, Aug. 2022.

\bibitem{Galappaththige:WCL:2024}
D.~Galappaththige and C.~Tellambura, ``Sum rate maximization for {RSMA}-assisted {CF mMIMO} networks with {SWIPT} users,'' \emph{{IEEE} Wireless Commun. Lett.}, vol.~13, no.~5, pp. 1300--1304, May 2024.

\bibitem{Zhang:TWC:2023}
R.~Zhang, K.~Xiong, Y.~Lu, D.~W.~K. Ng, P.~Fan, and K.~B. Letaief, ``{SWIPT}-enabled cell-free massive {MIMO-NOMA} networks: A machine learning-based approach,'' \emph{{IEEE} Trans. Wireless Commun.}, vol.~23, no.~7, pp. 6701--6718, Jul. 2023.

\bibitem{Mohammadi:GC:2023}
M.~Mohammadi, L.-N. Tran, Z.~Mobini, H.~Q. Ngo, and M.~Matthaiou, ``Cell-free massive {MIMO}-assisted {SWIPT} for {IoT} networks,'' \emph{{IEEE} Trans. Wireless Commun.}, pp. 1--1, 2025.

\bibitem{cite:Chien:TWC:2022}
T.~V. Chien, H.~Q. Ngo, S.~Chatzinotas, M.~D. Renzo, and B.~Ottersten, ``Reconfigurable intelligent surface-assisted cell-free massive {MIMO} systems over spatially-correlated channels,'' \emph{{IEEE} Trans. Wireless Commun.}, vol.~21, no.~7, pp. 5106--5128, Jul. 2022.

\bibitem{Lan:IoT:2024}
M.~Lan, Y.~Hei, M.~Huo, H.~Li, and W.~Li, ``A new framework of {RIS}-aided user-centric cell-free massive {MIMO} system for {IoT} networks,'' \emph{{IEEE} Internet Things J.}, vol.~11, no.~1, pp. 1110--1121, Jan. 2024.

\bibitem{Elhoushy:WCL:2022}
S.~Elhoushy, M.~Ibrahim, and W.~Hamouda, ``Exploiting {RIS} for limiting information leakage to active eavesdropper in cell-free massive {MIMO},'' \emph{{IEEE} Wireless Commun. Lett.}, vol.~11, no.~3, pp. 443--447, Mar. 2022.

\bibitem{cite:b6}
\BIBentryALTinterwordspacing
A.~Azarbahram \emph{et~al.}, ``Beamforming and waveform optimization for {RF} wireless power transfer with beyond diagonal reconfigurable intelligent surfaces,'' 2025. [Online]. Available: \url{https://arxiv.org/abs/2502.19176}
\BIBentrySTDinterwordspacing

\bibitem{cite:Shen:BDRISoverview02:2022}
S.~Shen, B.~Clerckx, and R.~Murch, ``Modeling and architecture design of reconfigurable intelligent surfaces using scattering parameter network analysis,'' \emph{{IEEE} Trans. Wireless Commun.}, vol.~21, no.~2, pp. 1229--1243, Feb. 2022.

\bibitem{Hien:TDD:pilot}
H.~Q. Ngo and E.~G. Larsson, ``No downlink pilots are needed in {TDD} massive {MIMO},'' \emph{{IEEE} Trans. Wireless Commun.}, vol.~16, no.~5, pp. 2921--2935, 2017.

\bibitem{cite:zahra:statisticalCSI}
\BIBentryALTinterwordspacing
Z.~Mobini and H.~Q. Ngo, ``Massive {MIMO}: Instantaneous versus statistical {CSI}-based power allocation,'' 2025. [Online]. Available: \url{https://arxiv.org/abs/2505.04294}
\BIBentrySTDinterwordspacing

\bibitem{cite:HienNgo:cf01:2017}
H.~Q. Ngo, A.~Ashikhmin, H.~Yang, E.~G. Larsson, and T.~L. Marzetta, ``Cell-free massive {MIMO} versus small cells,'' \emph{{IEEE} Trans. Wireless Commun.}, vol.~16, no.~3, pp. 1834--1850, Mar. 2017.

\bibitem{cite:book:Kay:1993}
S.~M. Kay, \emph{Fundamentals of Statistical Signal Processing: Estimation Theory}.\hskip 1em plus 0.5em minus 0.4em\relax USA: Prentice-Hall, Inc., Mar. 1993.

\bibitem{cite:MRT_for_HE:Almradi}
A.~Almradi and K.~A. Hamdi, ``The performance of wireless powered {MIMO} relaying with energy beamforming,'' \emph{{IEEE} Trans. Commun.}, vol.~64, no.~11, pp. 4550--4562, Nov. 2016.

\bibitem{cite:pzf_pmrt_2020}
G.~Interdonato, M.~Karlsson, E.~Björnson, and E.~G. Larsson, ``Local partial zero-forcing precoding for cell-free massive {MIMO},'' \emph{{IEEE} Trans. Wireless Commun.}, vol.~19, no.~7, pp. 4758--4774, Jul. 2020.

\bibitem{Boshkovska:CLET:2015}
E.~Boshkovska, D.~W.~K. Ng, N.~Zlatanov, and R.~Schober, ``Practical non-linear energy harvesting model and resource allocation for {SWIPT} systems,'' \emph{{IEEE} Commun. Lett.}, vol.~19, no.~12, pp. 2082--2085, Dec. 2015.

\bibitem{Che:TWC:2014}
E.~Che, H.~D. Tuan, and H.~H. Nguyen, ``Joint optimization of cooperative beamforming and relay assignment in multi-user wireless relay networks,'' \emph{{IEEE} Trans. Wireless Commun.}, vol.~13, no.~10, pp. 5481--5495, Oct. 2014.

\bibitem{Mohammadi:JSAC:2023}
M.~Mohammadi, T.~T. Vu, H.~Q. Ngo, and M.~Matthaiou, ``Network-assisted full-duplex cell-free massive {MIMO}: Spectral and energy efficiencies,'' \emph{{IEEE} J. Sel. Areas Commun.}, vol.~41, no.~9, pp. 2833--2851, Sept. 2023.

\bibitem{cite:Grant:CVX}
M.~Grant and B.~SP, ``{CVX}: {MATLAB} software for disciplined convex programming,'' Jan. 2014.

\bibitem{vu18TCOM}
T.~T. {Vu}, D.~T. {Ngo}, M.~N. {Dao}, S.~{Durrani}, and R.~H. {Middleton}, ``Spectral and energy efficiency maximization for content-centric {C-RANs} with edge caching,'' \emph{IEEE Trans. Commun.}, vol.~66, no.~12, pp. 6628--6642, Dec. 2018.

\bibitem{tam16TWC}
H.~H.~M. Tam, H.~D. Tuan, D.~T. Ngo, T.~Q. Duong, and H.~V. Poor, ``Joint load balancing and interference management for small-cell heterogeneous networks with limited backhaul capacity,'' \emph{{IEEE} Trans. Commun.}, vol.~16, no.~2, pp. 872--884, Feb. 2017.

\bibitem{Thien:DDPG:2023}
D.-T. Hua, Q.~T. Do, N.-N. Dao, T.-V. Nguyen, D.~Shumeye~Lakew, and S.~Cho, ``Learning-based reconfigurable-intelligent-surface-aided rate-splitting multiple access networks,'' \emph{{IEEE} Internet Things J.}, vol.~10, no.~20, pp. 17\,603--17\,619, Oct. 2023.

\bibitem{YMao_BDRIS_2023}
T.~Fang and Y.~Mao, ``A low-complexity beamforming design for beyond-diagonal {RIS} aided multi-user networks,'' \emph{IEEE Commun. Lett.}, vol.~28, no.~1, pp. 203--207, Jan. 2024.

\bibitem{hydrid:RIS}
A.~Huang, X.~Mu, L.~Guo, and G.~Zhu, ``Energy-efficient design for hybrid {RIS} transmitter enabled multi-user communications,'' in \emph{Proc. IEEE WCNC}, Apr. 2024, pp. 1--6.

\bibitem{cite:hua2024BdrisSwipt}
\BIBentryALTinterwordspacing
T.~D. Hua, M.~Mohammadi, H.~Q. Ngo, and M.~Matthaiou, ``{SWIPT} in cell-free massive {MIMO} using stacked intelligent metasurfaces,'' in \emph{Proc. IEEE ICC}, Jun. 2025. [Online]. Available: \url{https://arxiv.org/abs/2503.14032}
\BIBentrySTDinterwordspacing

\bibitem{Mohamed_projection_2024}
M.~Elfiatoure, M.~Mohammadi, H.~Q. Ngo, P.~J. Smith, and M.~Matthaiou, ``Protecting massive {MIMO}-radar coexistence: Precoding design and power control,'' \emph{{IEEE} Open J. Commun. Society}, vol.~5, pp. 276--293, Jan. 2024.

\end{thebibliography}
\end{document}